\colorlet{darkblue}{blue!70!black}
\colorlet{darkgreen}{green!50!black}
\colorlet{midgreen}{green!60!black}
\colorlet{darkbrown}{brown!70!black}
\title{Multipartite edge modes \\ and \\ tensor networks}
\author[1]{Chris Akers,}
\author[2]{Ronak M Soni,}
\author[3]{and Annie Y. Wei}
\affiliation[1]{Institute for Advanced Study, 1 Einstein Dr, Princeton, NJ 08540 USA}
\affiliation[2]{Department of Applied Mathematics and Theoretical Physics, University of Cambridge,
Wilberforce Road, Cambridge, CB3 0WA, United Kingdom}
\affiliation[3]{Center for Theoretical Physics,\\
Massachusetts Institute of Technology, Cambridge, MA 02139, USA}
\emailAdd{cakers@ias.edu}
\emailAdd{ronakmsoni@gmail.com}
\emailAdd{anniewei@mit.edu}
\abstract{
	Holographic tensor networks model AdS/CFT, but so far they have been limited by involving only systems that are very different from gravity.
	Unfortunately, we cannot straightforwardly discretize gravity to incorporate it, because that would break diffeomorphism invariance.
	In this note, we explore a resolution.
	In low dimensions gravity can be written as a topological gauge theory, which \emph{can} be discretized without breaking gauge-invariance.
	However, new problems arise.
	Foremost, we now need a qualitatively new kind of ``area operator,’’ which has no relation to the number of links along the cut and is instead topological.
        Secondly, the inclusion of matter becomes trickier.
	We successfully construct a tensor network both including matter and with this new type of area.
	Notably, while this area is still related to the entanglement in ``edge mode’’ degrees of freedom, the edge modes are no longer bipartite entangled pairs.
	Instead they are highly multipartite.
	Along the way, we calculate the entropy of novel subalgebras in a particular topological gauge theory.
We also show that the multipartite nature of the edge modes gives rise to non-commuting area operators, a property that other tensor networks do not exhibit.}
\begin{document}
\maketitle

\section{Introduction}
\label{sec:intro}

Holographic tensor networks \cite{Swingle:2009bg, Pastawski:2015qua, Hayden:2016cfa, Donnelly:2016qqt, Qi:2022lbd, Dong:2023kyr, Akers:2024ixq, Colafranceschi:2022dig} are toy models of the holographic map from anti-de Sitter space (AdS) to the dual conformal field theory (CFT). 
See Figure \ref{fig:tn_basics}.
While imperfect models in many ways, their simplicity and concreteness have already allowed us to make rigorous statements about the emergence of spacetime \cite{Almheiri:2014lwa, Pastawski:2015qua}, the quantum extremal surface prescription \cite{Swingle:2009bg, Hayden:2016cfa, Harlow:2016vwg, Akers:2021fut}, reconstruction complexity \cite{Brown:2019rox}, and the black hole information paradox \cite{Akers:2021fut}.

In this note we propose a way to improve these models so that they might continue to offer insight. 
So far, perhaps tensor networks' biggest limitation has been their lack of time evolution.
Straightforward attempts to add interesting local time evolution in the ``bulk'' fails to match any local time evolution of the dual ``boundary'' theory.\footnote{See \cite{Faist:2019ahr, Dolev:2021ofc} for discussions of the difficulties in adding interesting time evolution. See \cite{Kohler:2018kqk, Apel:2021tnn} for one approach to a solution that does not seem to utilize gravity-like physics in the bulk.}
Long term, we would like to fix this shortcoming, adding time evolution and obtaining a completely explicit instance of holography.

In pursuit of that goal, we can ask: why have tensor networks failed to include time evolution, when the AdS/CFT duality succeeds?
One glaring difference is that in gravity the diffeomorphism constraints make the physical Hamiltonian a local integral along the boundary.
This leads to an easy match to a local Hamiltonian in the dual theory. 
Therefore, a sensible first step towards adding time evolution is to construct tensor networks that have this feature of gravity, with strong enough constraints that something similar happens, allowing us to reduce the Hamiltonian to a boundary term.

\begin{figure}[h!]
	\centering
	\includegraphics[width=\textwidth]{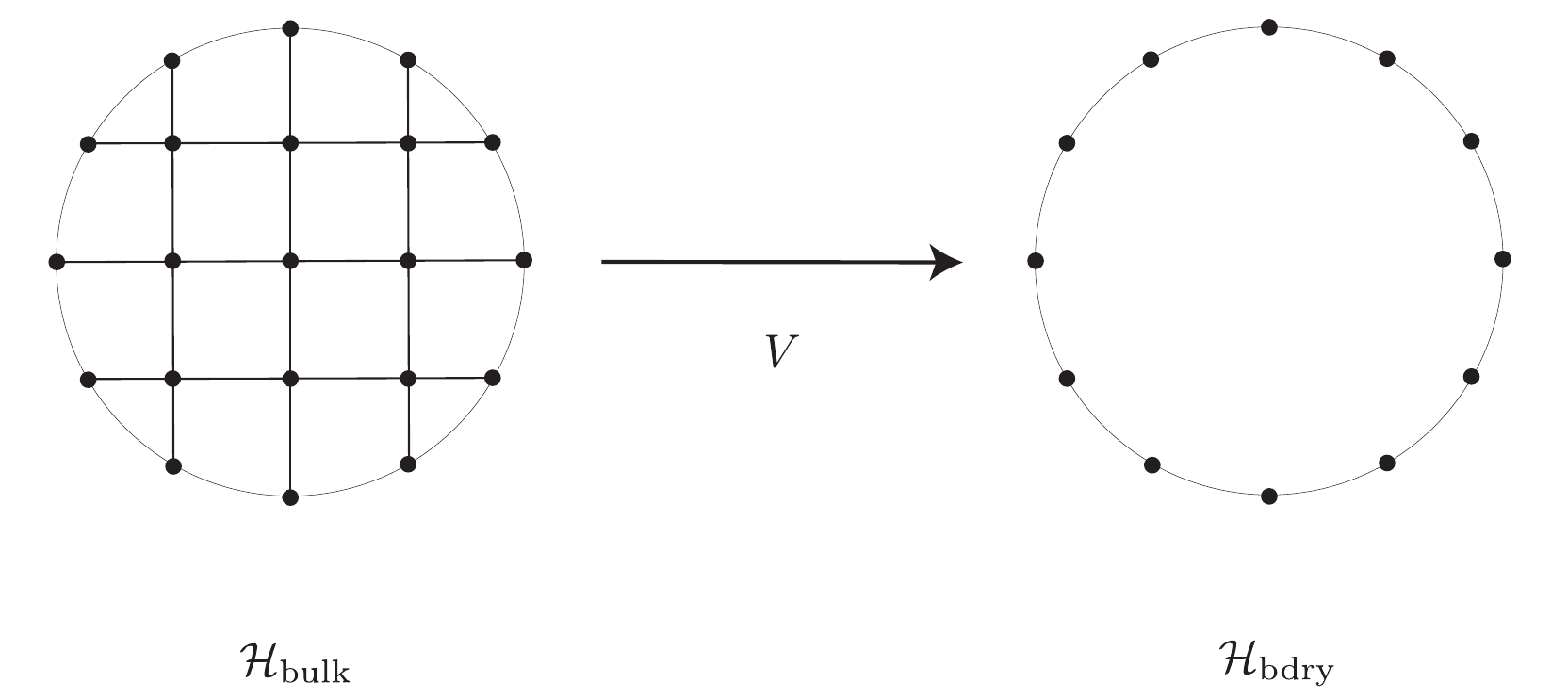}
	\caption{An example tensor network. On the left is a graph representing the ``bulk'' Hilbert space $\mathcal{H}_\mathrm{bulk}$, analogous to a discretized version of AdS. The links represent the geometry. On the right is the ``boundary'' Hilbert space $\mathcal{H}_\mathrm{bdry}$ analogous to the CFT. The holographic tensor network is a linear map $V: \mathcal{H}_\mathrm{bulk} \to \mathcal{H}_\mathrm{bdry}$, analogous to the holographic map.}
	\label{fig:tn_basics}
\end{figure}

At first, however, this appears intractable.
Tensor networks involve a discretization of spacetime, which inherently breaks this very diffeomorphism-invariance that we'd like to have. 
Nevertheless, in low enough dimensions there is a trick available to us.
We can change variables and describe gravity as a certain kind of topological quantum field theory (TQFT) \cite{Witten:1988hc, Donnay:2016iyk, Collier:2023fwi}.
The idea is to define a gauge field as a particular combination of the vielbein and spin connection, transforming the Einstein-Hilbert action into that of an $SL(2,\mathds{R}) \times SL(2,\mathds{R})$ Chern-Simons theory.\footnote{While these theories match at the level of the action, there are important known differences at the level of the path integral. For example, the natural gauge theory path integral would integrate over configurations corresponding to non-invertible metrics, which are not included in the gravitational path integral. These subtleties will not concern us, because it seems they can be addressed by using an appropriately modified TQFT \cite{Collier:2023fwi} called the Virasoro TQFT, and our main discussion will not rely on details of any particular TQFT.}

The advantage is that discretizing the TQFT no longer means breaking diffeomorphism-invariance.
This is because the metric is not a property of the ``base space'' the TQFT lives on, and instead is encoded in the dynamical fields.
The diffeomorphisms become ``internal'' gauge transformations on these fields rather than transformations of the base space itself.
Hence we can try to discretize this TQFT and include it as part of the tensor network's bulk Hilbert space.\footnote{
	Putting Chern-Simons theories on the lattice is a hard problem in general.
	However, pure gravity is parity-invariant.
	Parity-invariant Chern-Simons theories based on compact groups can be latticized as string-net models \cite{Levin:2004mi}, which include the quantum double models we will study below.
	String-net models are the Hamiltonian description of Turaev-Viro models \cite{Kirillov:2011mk}.
	Gravity is not based on a compact group and so doesn't fall into this category; some progress for this case has been made in \cite{Belin:2023efa,Chen:2024unp}.
}

We immediately run into a problem. 
The holographic entropy formula is different in the TQFT description, in a way that is not obviously compatible with tensor networks. 
Recall in AdS/CFT (in time-reversal-symmetric situations), the von Neumann entropy of a CFT subregion $B$ can be computed by \cite{Ryu:2006bv, Faulkner:2013ana, Engelhardt:2014gca}
\begin{equation}\label{eq:HEF}
	S(B) = \min_b \left( \expval*{\hat{A}_{\eth b}} + S(b) \right)~,
\end{equation}
where the minimization is over AdS regions $b$ whose boundary $\eth b$ is homologous to $B$, and $\hat{A}_{\eth b}$ measures the area of $\eth b$.
Traditional tensor networks satisfy a similar formula \cite{Hayden:2016cfa}, where $\expval*{\hat{A}_{\eth b}}$ grows with the number of links cut by $\eth b$. 
Of course, when we describe the AdS with the TQFT, the same formula \eqref{eq:HEF} holds. 
However, in this description the area operator $\hat{A}_{\eth b}$ should be understood differently! 
The relevant metric is now a function of the gauge fields; $\hat{A}_{\eth b}$ is a certain Wilson line \cite{McGough:2013gka,Ammon:2013hba,Castro:2018srf,Mertens:2022ujr,Wong:2022eiu}. 
When there's no matter, the theory is topological and this Wilson line gives the same answer evaluated along any path:
\begin{equation}\label{eq:TQFT_HEF}
	\includegraphics[width=0.6\textwidth,valign=c]{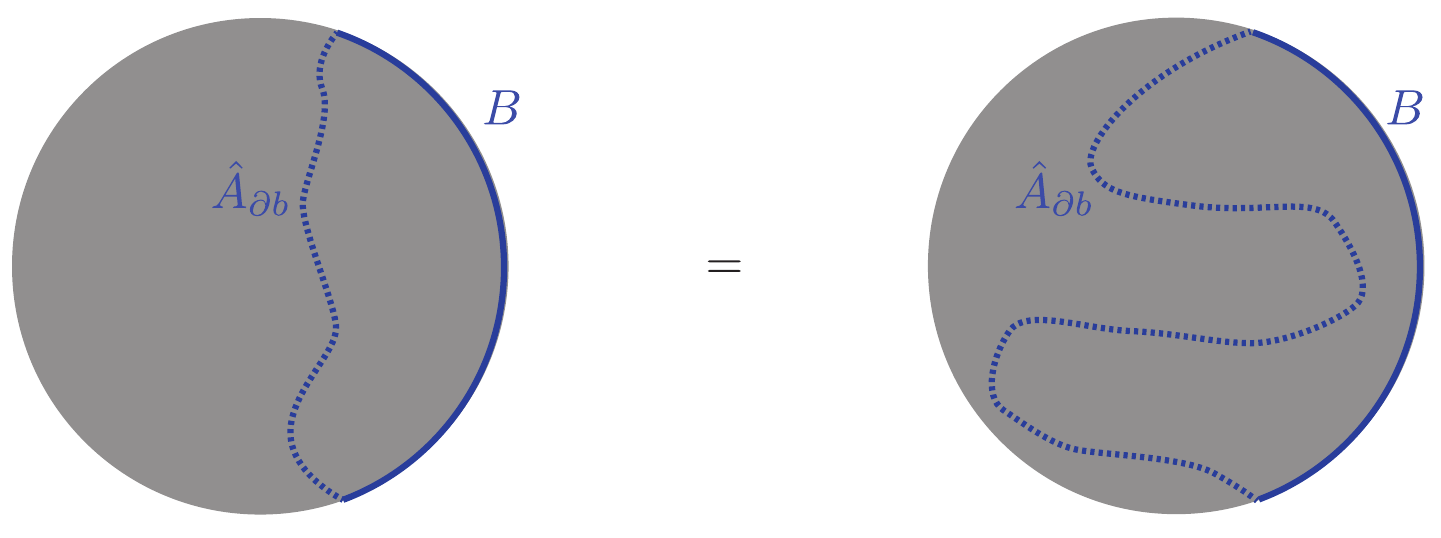}
\end{equation}
Another way to say this is that the TQFT lives on a spacetime with a metric that is irrelevant. The operator $\hat{A}_{\eth b}$ is the area of a surface evaluated in the AdS metric, which is like the target space of the TQFT.  
This offers a challenge for tensor networks. 
We would not obtain an entropy formula with a property like \eqref{eq:TQFT_HEF} if we followed perhaps the most straightforward method to incorporate the discretized TQFT into existing tensor networks, from \cite{Akers:2024ixq, Dong:2023kyr, Qi:2022lbd, Donnelly:2016qqt, Colafranceschi:2022dig}.
Those tensor networks lead to an entropy formula with $\expval*{\hat{A}}$ scaling extensively with the number of links along $\eth b$.\footnote{This sort of extensive contribution is related to the one that appears in the conventional calculations of entanglement in TQFTs \cite{Kitaev:2005dm,Levin:2006zz,Belin:2019mlt}, in which entropy is calculated by introducing a lattice regulator, leading to a subregion entropy with a term proportional to the area of the boundary of the subregion. We do not want to compute entropies this way, because the gravitational entropy should be independent of the way we choose to regulate the auxiliary space the TQFT lives on \cite{Ammon:2013hba,McGough:2013gka,Castro:2018srf,Mertens:2022ujr,Wong:2022eiu}.}

The point of this note is to solve this problem with the area operator. 
In Section \ref{sec:TN} we construct a tensor network with a holographic entropy formula like \eqref{eq:HEF}, but with an area operator that is a gauge-invariant function of the fields that encode the metric, analogous to the one in the TQFT description.

To study this problem, we will not need the full sophistication of $SL(2,\mathds{R}) \times SL(2,\mathds{R})$ Chern-Simons theory.
Instead we will work with a toy model with the same subtlety, a much simpler topological theory that we describe in Section \ref{sec:DGmodels}, which we call the ``doubly gauged (DG) model.''\footnote{Our doubly gauged models are Kitaev's quantum double models \cite{Kitaev:1997wr}, but with projection onto the ground space enforced as a constraint.}
We then define a linear map from this DG model with matter to a ``boundary'' Hilbert space, in Section \ref{sec:TN}.
This bulk-to-boundary map (or ``holographic map'') is a new kind of tensor network.
We explain the motivation behind the construction in Section \ref{sec:factorization}.

We start with a setup as in Figure \ref{fig:tn_basics}, like all holographic tensor network models.
There are two tweaks.
First, the bulk Hilbert space now includes a topological lattice gauge theory on the links.
Second, the holographic map (the tensor network) is defined in a different, more topological way. 
The boundary Hilbert space is essentially the same as before.
The result is that now boundary entropies satisfy \eqref{eq:HEF} but with a different, topological $\hat{A}$.
The minimization is over where to put the cut $\eth b$ relative to the matter. 

This new area operator leads to two striking properties of this model, which we now describe.
The first striking property of our model is that its area operators do not commute, which is a desirable match to gravity \cite{Bao:2019fpq}.
In previous tensor networks, given two overlapping boundary subregions $B$ and $C$, one could generally find a bulk state such that the area operators associated to both $B$ and $C$ had arbitrarily small fluctuations. 
This is impossible in real AdS/CFT, because of the gravitational constraints. 
It is also impossible in our tensor networks, also because of the constraints.

The second, related property is that this area term is the entanglement of naturally multipartite-entangled edge modes.
As in all tensor networks, the ``area'' term in the holographic entropy formula quantifies the amount of entanglement in the ``edge modes'' across the cut.
Historically, the edge modes in tensor networks have been local and bipartite: each link is a projected entangled \emph{pair}.
The area term simply counted the number of bipartite entangled pairs that were separated by the cut (this is why the area grew extensively with the number of cut links).
See Figure \ref{fig:edge_modes_a}.
In our new model, this part of the story is completely different.
The degrees of freedom entangled across a cut are not in spatially localized, bipartite entangled pairs, but are in multiparty-entangled states.
See Figure \ref{fig:edge_modes_b}.

These multipartite edge modes arise from a choice of factorization.
Given a lattice gauge theory and a cut defining a subregion, there are many prescriptions for embedding the Hilbert space into one that factorizes across that cut, see e.g. \cite{Casini:2013rba,Soni:2015yga,Lin:2018bud,Penington:2023dql}.
The conventional choice, introduced in \cite{Buividovich:2008gq,Donnelly:2011hn}, leads to the insertion of a number of degrees of freedom scaling extensively with the area of the cut.
However, there is one prescription that works differently, discovered by Delcamp, Dittrich, and Riello (DDR) in \cite{Delcamp:2016eya}, see also \cite{Bonderson:2017osr,Fliss:2023dze}.
We utilize this prescription, along the way generalizing it to new contexts.

\begin{figure}
	\centering
	\begin{subfigure}{0.40\textwidth}
		\includegraphics[width=\textwidth]{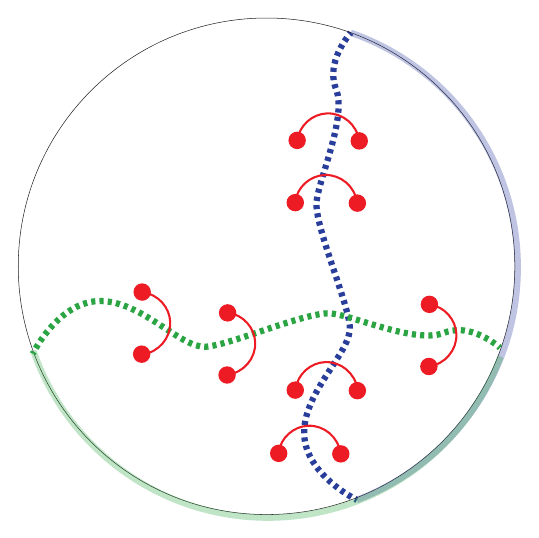}
		\caption{}
		\label{fig:edge_modes_a}
	\end{subfigure}
	\hspace*{\fill}
	\begin{subfigure}{0.4\textwidth}
		\includegraphics[width=\textwidth]{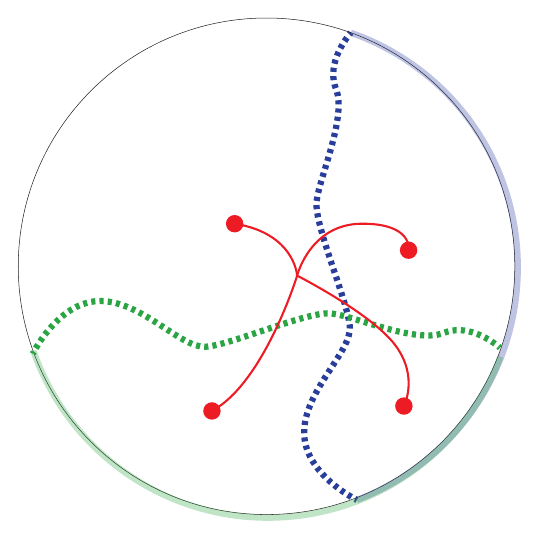}
		\caption{}\label{fig:edge_modes_b}
	\end{subfigure}
	\caption{A difference in the entanglement structure of ``edge modes'' in traditional tensor networks (Figure \ref{fig:edge_modes_a}) and the tensor networks from this paper (Figure \ref{fig:edge_modes_b}). We consider two overlapping boundary regions, a blue one and a green one, and we have drawn their homologous bulk minimal surfaces as dashed lines in their respective color. In \ref{fig:edge_modes_a}, there are pairs of red dots, each pair associated to a link and representing bipartite entangled degrees of freedom. The entanglement entropy across each dashed line grows extensively with the number of links cut. Moreover, the ``area operators'' associated to each dotted line generally commute, because they are associated to the entanglement between different pairs of red dots.  In \ref{fig:edge_modes_b}, the entanglement is no longer bipartite. Instead, the red dots are in one multipartite entangled state, and the entanglement entropy does not grow extensively across the dashed lines. Furthermore, the two area operators do not generally commute, because the allowed four party entangled states with fixed spectrum across the blue cut will not also have fixed spectrum across the green cut.}
	\label{fig:edge_modes}
\end{figure}

This note is organized as follows.
In Section \ref{sec:DGmodels} we introduce the topological gauge theory.
In Section \ref{sec:TN} we turn to tensor networks, explaining our new construction. 
We also explain the ``commuting areas problem'' and how this new construction avoids it.
In Section \ref{sec:factorization} we discuss factorization of gauge theories and argue that the choices made in the construction of the tensor network are fairly rigid.
In Section \ref{sec:hol-int} we explain the gravitational description of our networks (which is somewhat obscure in the gauge theory description), and connect it to other work such as \cite{Casini:2019kex}.
In Section \ref{sec:discussion} we conclude and discuss future directions.

While this manuscript was in preparation, the work \cite{Chen:2024unp} appeared.
They also discuss the topological description of gravity in the context of a tensor network, and find a similar non-extensive area operator.
Our works agree qualitatively but explore different aspects.
In particular, in this paper we have a bulk Hilbert space with matter and consider the physics of overlapping area operators.
In \cite{Chen:2024unp}, while they do not include matter, they use a more realistic TQFT. 
It would be interesting future work to combine these constructions.

\subsection*{Notation and conventions}
A lattice $\Lambda = \{V,L,P\}$ is a set of a collection of vertices $V$, a collection of links $L$, and a collection of plaquettes.
A subregion $b = \{V_b,L_b,P_b\}$ is a set such that $V_b \subseteq V$, $L_b \subseteq L$, and $P_b \subseteq P$. 
We will use $\eth b$ to denote the set of links connecting a vertex in $b$ to a vertex in the complement of $b$.

\section{Doubly gauged lattice models}\label{sec:DGmodels}
The goal of this note involves incorporating a topological gauge theory into a tensor network.
This section introduces the topological lattice model we will use, and then discusses important properties, including its algebra of operators and insensitivity to the lattice.

\subsection{Hilbert space}
First we consider the case without matter.
The model is essentially Kitaev's quantum double model \cite{Kitaev:1997wr} restricted to the ground space.
Let $G$ be a finite group,
$\Sigma$ an oriented 2D surface (possibly with boundary), and $\Lambda = (V, L, P)$ be an arbitrary oriented lattice on $\Sigma$, where $V, L,$ and $P$ are the sets of vertices, oriented links, and plaquettes of the lattice respectively.
We restrict to $\Sigma = D^{2}$ for this work, though we expect it to straightforwardly generalize to the cylinder $\Sigma = S^{1} \times I$ as well.
For every $\ell \in L$, let $\mathcal{H}_\ell \coloneqq \mathcal{H}_G \coloneqq L^2(G)$ be a Hilbert space associated to that edge, spanned by the basis $\{\ket{g}: g \in G \}$, which we call the group basis.\footnote{It will sometimes be convenient to allow ourselves to reverse the orientation of a link while leaving the physics unchanged. In general we will refer to the reverse of the link $\ell$ as $\bar{\ell}$, and use the isomorphism between $\mathcal{H}_{\bar{\ell}}$ and $\mathcal{H}_{\ell}$ given by $\ket{g}_{\bar{\ell}} \cong \ket{g^{-1}}_{\ell}$. Note that a given set $L$ is only allowed to contain one of $\ell$ and $\bar{\ell}$.} 
Note that there is another basis for $\mathcal{H}_\ell$ that will be convenient later, called the ``representation basis'':
By the Peter-Weyl theorem (see e.g. Appendix A of \cite{Harlow:2018tng} for an introduction), the Hilbert space decomposes as
\begin{equation}\label{eq:HG_decomp}
	\mathcal{H}_G = \bigoplus_{\upmu \in \hat{G}} \mathcal{H}_\upmu \otimes \mathcal{H}_{\upmu^*}~,
\end{equation}
where $\hat{G}$ is the set of irreducible representations (irreps) of $G$. 
The representation basis is spanned by orthonormal states $\ket{\upmu,\mathsf{i}\mathsf{j}}$ where $\mathsf{i},\mathsf{j}$ index the states in $\mathcal{H}_\upmu, \mathcal{H}_{\upmu^*}$ respectively.
The Hilbert space associated to the collection of all the links is 
\begin{equation}
	\mathcal{H}_\mathrm{pre} \coloneqq \bigotimes_{\ell \in L} \mathcal{H}_\ell~,
\end{equation}
which we call the ``pre-gauged'' Hilbert space.
This $\mathcal{H}_\mathrm{pre}$ has a natural basis of states of the form 
\begin{equation}
	\ket{g_1,...,g_{|L|}}~,
\end{equation}
which we will use often.

We define the following operators.
The shift operators $L_\ell(h)$ (respectively $R_\ell(h)$) act on $\mathcal{H}_\ell$ by left (right) multiplying by $h$ ($h^{-1}$), i.e.
\begin{equation}\label{eq:shift_op}
	\begin{split}
		L_\ell(h) \ket{g_1,...,g_\ell,...,g_{|L|}} &= \ket{g_1,..., h g_\ell,...,g_{|L|}}~, \\
		R_\ell(h) \ket{g_1,...,g_\ell,...,g_{|L|}} &= \ket{g_1,..., g_\ell h^{-1},...,g_{|L|}}~.
	\end{split}
\end{equation}
These are sometimes also called the `electric' operators.\footnote{Note that these two shift operators are related by reversing the orientation of the link, i.e. $L_\ell (h) = R_{\bar{\ell}}(h)$.}
The `magnetic' operators are defined as follows. 
Let $\rho$ be a path through $\Lambda$, i.e. an ordered collection of vertices $\{v_1, v_2,...,v_{|\rho|}\}$, each vertex connected by a link to the one before and after.
Let $\ell_{i}$ be the link connecting $v_{i}$ and $v_{i+1}$.
Let $W_\rho(f)$ be defined to compute the product of the group elements of the edges connecting the vertices in $\rho$ and then apply the function $f: G \to \mathds{C}$ to the product. 
The prescription for computing the product is to start at the first vertex and then move along the edge connecting it to the next vertex, right multiplying by the associated group element, and inverting that group element if that edge is oriented opposite relative to the direction of travel.
If we call this product $g_\rho \in G$, then we can write
\begin{equation}\label{eq:clock_op}
	W_\rho(f) \ket{g_1,...,g_{|L|} } = f(g_\rho) \ket{g_1,...,g_{|L|}}~.
\end{equation}
One useful function $f$ is the Kronecker delta $\delta_h(g)$ which equals $1$ if $g = h$ and $0$ otherwise.

We use these to define the operators that appear in the gauge constraints as follows.
Define $A_v(g)$ to act on edges that touch $v \in V$ by $L_\ell(g)$ (or $R_\ell(g)$) if the link is oriented away (towards) $v$.
Let $(v,p)$ denote the counterclockwise path around plaquette $p$ starting at vertex $v$. 
Define $B_{(v,p)} (h) = W_{(v,p)}(\delta_{h})$
to annihilate a state where the group element around $(v,p)$ is not $h$, and to be $1$ on states where it is.
For example,
\begin{align}
	\includegraphics[width=0.8\textwidth,valign=c]{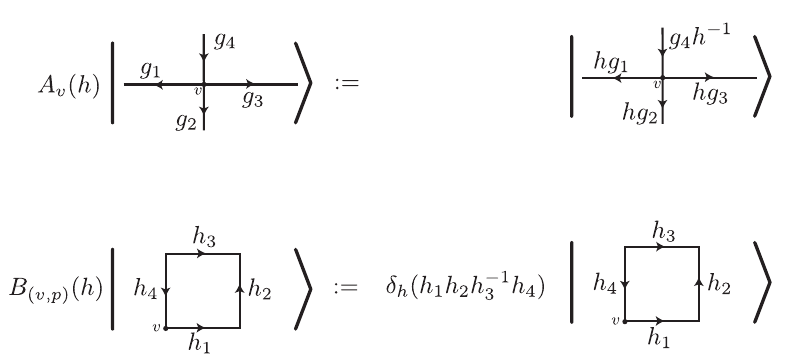}
\end{align}
These operators can easily be shown to satisfy the algebra (known as the quantum double algebra),
\begin{equation}\label{eq:oper_props}
	\begin{split}
		A_v(g^{-1}) &= A_v(g)^\dagger \\
		A_v(g) A_v(h) &= A_v(gh) \\
		B_{(v,p)}(g) B_{(v,p)}(h) &= \delta_{g,h} B_{(v,p)}(h) \\
		A_v(g) B_{(v,p)}(h) &= B_{(v,p)}(g h g^{-1}) A_v(g)~.
	\end{split}
\end{equation}
To define the physical Hilbert space, we will need the projectors
\begin{equation}\label{eq:vertex_gauge_projectors}
	\begin{split}
		A_v &\coloneqq \frac{1}{|G|} \sum_{g \in G} A_v(g) \\
		B_p &\coloneqq B_{(v,p)}(e)~, 
	\end{split}
\end{equation}
where $v$ is any vertex adjacent to $p$ and $e \in G$ is the identity group element. (Note that when $h = e$, $B_{(v,p)}(h)$ depends only on $p$ and not on the choice of $v$.)
These satisfy
\begin{equation}
	\begin{split}
		A_v(g) A_v &= A_v \\
		B_{(v,p)}(h) B_p &= \delta(h,e) B_p~.
	\end{split}
\end{equation}
	\eqref{eq:vertex_gauge_projectors} are both projectors by the following argument.
	By the above equation, $A_v A_v = A_v$, and by the invariance of $\sum_{g \in G}$ under $g \to g^{-1}$ we have $A_v = A_v^\dagger$. 
	Likewise, $B_p B_p = B_p$ and manifestly $B_p = B_p^\dagger$.
Using \eqref{eq:oper_props}, one can check that for all $v \in V$ and $p \in P$, $[A_v, B_p] = 0$.

We now will use these to build projectors onto the ``gauge-invariant subspace.'' 
First, for generality let there be a subset $V_\mathrm{bdry} \subset V$ of vertices and $P_\mathrm{bdry} \subset P$ of plaquettes that we will \emph{not} impose constraints on.
These include plaquettes and vertices at the boundary of $\Sigma$ and also any plaquettes that encircle non-trivial cycles of $\Sigma$.
Let the complements of these sets be $V_\mathrm{bulk}$ and $P_\mathrm{bulk}$.
Define the projectors onto the gauge-invariant subspace
\begin{equation}
	\begin{split}
		A &\coloneqq \bigotimes_{v \in V_\mathrm{bulk}} A_v~, \\
		B &\coloneqq \bigotimes_{p \in P_\mathrm{bulk}} B_p~.
	\end{split}
\end{equation}
$A$ projects onto the subspace satisfying Gauss's law at each (non-boundary) vertex, and $B$ projects onto the subspace with a trivial holonomy -- i.e. flat connection -- around each (non-boundary) plaquette.
Define the physical, ``gauged,'' Hilbert space 
\begin{equation}
	\mathcal{H}_\mathrm{phys} \coloneqq \frac{\mathcal{H}_\mathrm{pre}}{\mathrm{Gauss} \times \mathrm{Flatness}} \coloneqq A B \mathcal{H}_\mathrm{pre}~.
\end{equation}

This equation reflects a very important difference in our perspective compared to much previous work.
In the lattice gauge theory literature, only the $A$-type, Gauss's law, constraints are imposed in the definition of the physical Hilbert space.
Similarly, in the literature on topological phases, it is common to identify our $\mathcal{H}_{\mathrm{pre}}$ and $\mathcal{H}_{\mathrm{phys}}$ as the physical and ground state spaces respectively.
That is natural from a condensed matter perspective, since there are no materials whose fundamental theory is topological.
However, in the comparison to (the gauge theory description of) $2+1d$ general relativity, both the Gauss's law and flatness constraints are toy models for the diffeomorphism constraints, and so it is important for us that they are both used to define the physical Hilbert space.\footnote{
	Readers familiar with the Chern-Simons description of 3d gravity might find this comment a little confusing, since in that case the diffeomorphism constraints map to flatness constraints on the gauge field.
	Flatness constraints in continuum $G_{\mathsf{k}} \times G_{\mathsf{-k}}$ Chern-Simons theory become both types of constraints in the lattice model \cite{Levin:2004mi}.
}

Including matter changes things as follows.
Let `site' denote a pair $(v,p)$ of a vertex and a plaquette, such that the vertex is on the bottom-left of the plaquette (this is a convention).
Denote by $S$ the collection of sites.
To each site we associate a Hilbert space $\mathcal{H}_{(v,p)}$, carrying a representation of the quantum double algebra \eqref{eq:oper_props}.
The pre-gauged Hilbert space is now
\begin{equation}
	\mathcal{H}_{\mathrm{pre}} = \bigotimes_{\ell \in L} \mathcal{H}_{\ell} \bigotimes_{(v,p) \in S} \mathcal{H}_{(v,p)}.
	\label{eqn:pre-matt}
\end{equation}
The constraints are modified to
\begin{align}
	A_{v} (g) &\to A_{v} (g) A_{\mathrm{mat},(v,p)} (g) \nonumber\\
	B_{(v,p)} (g) &\to \frac{1}{\abs{G}} \sum_{h \in G} B_{(v,p)} (g h^{-1}) B_{\mathrm{mat},(v,p)} (h),
	\label{eqn:matt-cons}
\end{align}
where the operators $A_{\mathrm{mat}}, B_{\mathrm{mat}}$ act on $\mathcal{H}_{(v,p)}$ and satisfy the algebra \eqref{eq:oper_props}.
The constraints are \eqref{eq:vertex_gauge_projectors}, with these new operators on the right hand side.
We allow $A_{\mathrm{mat}}, B_{\mathrm{mat}}$ to be the identity operators at some sites, in which case the constraints at those sites are not modified; for simplicity we also assume that at these sites the matter Hilbert space is trivial, $\mathcal{H}_{(v,p)} = \mathds{C}$.

Lattices that we will consider look for example like
\begin{equation}\label{eq:general_bulk_lattice}
	\includegraphics[width=0.3\textwidth,valign=c]{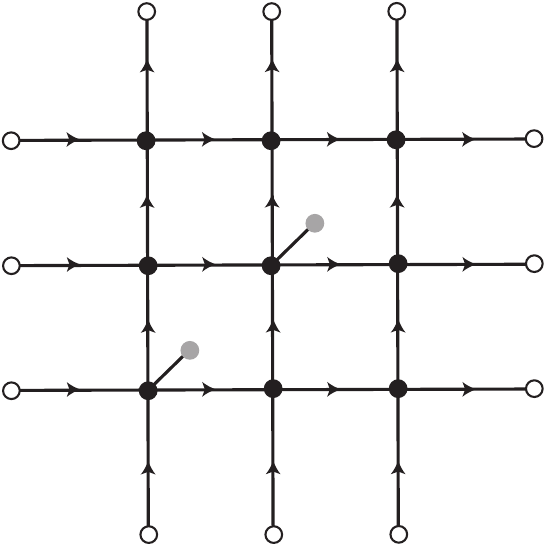}
\end{equation}
Here, black circles denote bulk vertices and white circles denote boundary vertices.
Diagonal lines connected to gray circles denote which bulk sites come with matter degrees of freedom \cite{Delcamp:2016yix,Delcamp:2016eya} -- the associated vertex is the one connected to the gray circle by a line, and the associated plaquette is the one that contains the gray circle.
We can think of these gray circles as being where the matter lives -- any site without one has no matter degree of freedom.

It will be important to note how Gauss's law manifests in the representation basis.
Say we have $n$ links connected to a vertex, all oriented outwards for simplicity.
Let there be matter as well.
Recall that each link is spanned by a basis of the form $\ket{\upmu, \mathsf{i}\mathsf{j}}$, as in \eqref{eq:HG_decomp}, and the matter has some Hilbert space also in general decomposing as a direct sum over Hilbert spaces associated to irreps, which we might write as spanned by $\ket{\upmu, \mathsf{i}, c}$ where $\mathsf{i}$ is a representation index (like the $\mathsf{i},\mathsf{j}$ for links) and $c$ is a multiplicity index allowed for generality.  
The subspace invariant under the action of $A$ is the one for which all the $\mathsf{i}$ indices ``fuse together'' such that the joint representation is the trivial irrep.
There are only particular combinations of the $\upmu$ that can fuse appropriately, and the entanglement in the $\mathsf{i}$ indices is greatly constrained.
For example, if $n=3$ and there's no matter, a general state takes the form
\begin{equation}
	\sum_{\substack{\upmu_1 \upmu_2 \upmu_3 \\ \mathsf{i}_1 \mathsf{i}_2 \mathsf{i}_3 \\ \mathsf{j}_1 \mathsf{j}_2 \mathsf{j}_3}} R^{\upmu_1 \upmu_2 \upmu_3}_{\mathsf{j}_1 \mathsf{j}_2 \mathsf{j}_3} C^{\upmu_1 \upmu_2 \upmu_3}_{\mathsf{i}_1 \mathsf{i}_2 \mathsf{i}_3} \ket{\upmu_1,\mathsf{i}_1 \mathsf{j}_1}\ket{\upmu_2,\mathsf{i}_2 \mathsf{j}_2}\ket{\upmu_3,\mathsf{i}_3 \mathsf{j}_3}~,
\end{equation}
where $R^{\upmu_1 \upmu_2 \upmu_3}_{\mathsf{j}_1 \mathsf{j}_2 \mathsf{j}_3}$ are free parameters, but $C^{\upmu_1 \upmu_2 \upmu_3}_{\mathsf{i}_1 \mathsf{i}_2 \mathsf{i}_3}$ are the Clebsch-Gordan coefficients, and are \emph{completely fixed}, depending only on the group $G$. 
The fusion of more than three legs also has qualitatively similar restrictions, except there tend to be more than one way to fuse the $\mathsf{i}$ indices given the set of $\upmu$ indices.

\subsection{Physical operators}
Given $\mathcal{H}_\mathrm{phys}$, what do the physical operators look like? 
The operators \eqref{eq:shift_op} are not gauge-invariant when acting on bulk links.
If $\ket{\psi} \in \mathcal{H}_{\mathrm{phys}}$, $L_{\ell} (h) \ket{\psi}$ violates Gauss's law at a vertex adjacent to $\ell$, since $[L_{\ell} (h), A_{v}] \neq 0$.
We can construct the gauge-invariant operators as follows.

First note that a slight generalization of $L_\ell(h)$ will violate Gauss's law at a different vertex instead.
Given a link $\ell$ and a path $\rho = \{v_{|\rho|},..., v_1\}$ with $\ell$ oriented away from $v_1$, let $T_{\ell,\rho}(h)$ shift the element assigned to link $\ell$ by $h$ conjugated by the product of group elements along $\rho$, for example for $\rho = \{ v_3, v_4, v_1 \}$ and $\ell$ the link between $v_1$ and $v_2$,
\begin{equation}
	\includegraphics[width=0.8\textwidth,valign=c]{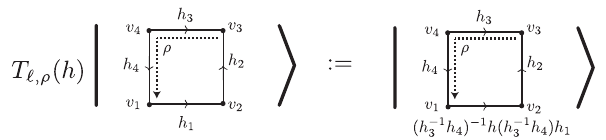}.
\end{equation}
We call these \emph{transported shift} operators.
Morally, we are picking an element $h \in G$ in the frame of $v_3$ and then transporting it along $\rho$ to $v_1$.
Then at $v_1$ we left multiply the element on $\ell$.
We can confirm that $T_{\ell,\rho}$ fails to commute with $A_{v}$ only for the $v$ at the start of $\rho$.
Transported versions of the $R_\ell(h)$ can also be constructed.
Note the usefulness of these transported shifts: we can define a shift operator on an arbitrary edge $\ell$ that commutes with $A$ by starting $\rho$ at a vertex in $V_\mathrm{bdry}$.
However, while boundary anchored transported shifts commute with $A$, it is straightforward to show they do \emph{not} commute with $B$ as long as $\ell$ borders some $p \in P_\mathrm{bulk}$. 

We will now define operators that commute with both, called \emph{ribbon operators} \cite{Kitaev:1997wr,Bombin:2007qv}.
A \emph{ribbon} is a set of two paths, one through the graph (the ``spine''), the other an adjacent path through the dual graph (the links intersected by this dual graph path are called the ``spokes'').
We draw ribbons with an oriented dashed line along the dual graph path, and shade the space between the two paths, as below.
Given a ribbon $\gamma$, we define a ribbon operator as follows.
Let $g, h \in G$. The ribbon operator $F_\gamma(h,g)$ acts as
\begin{equation}
	\includegraphics[width=0.8\textwidth,valign=c]{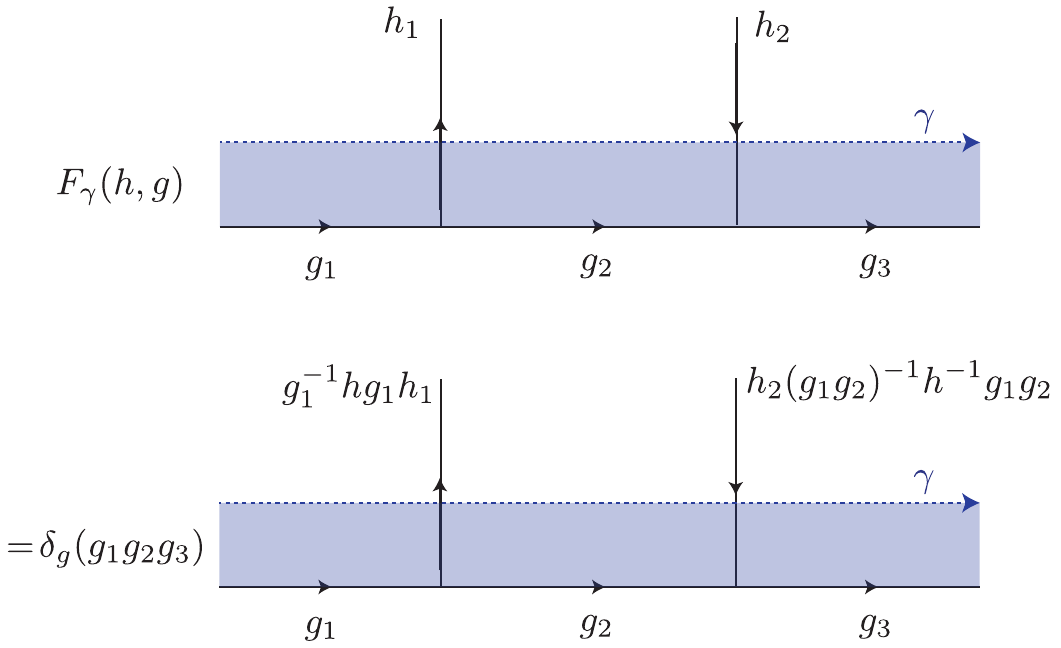}
\end{equation}
One can confirm this commutes with both $A, B$ except possibly at the end points of $\gamma$.
Therefore this operator is gauge invariant if its endpoints are at the boundary.

In the presence of matter degrees of freedom, a ribbon can also end on a site with matter.
Any charged matter has to be dressed with the appropriate ribbon operator, either to another charge or to the boundary.

An important property of ribbon operators is that they are topological.
If two ribbons $\gamma,\gamma'$ share the same end-points and $\gamma$ can be continuously deformed to $\gamma'$ without crossing any matter excitations, then
\begin{equation}
	F_{\gamma} (h,g) = F_{\gamma'} (h,g).
	\label{eqn:ribb-def}
\end{equation}
See \cite{Bombin:2007qv} for a detailed proof.

\subsection{Lattice independence}\label{sec:lattice_in}
We now describe a powerful idea that we will use heavily: lattice independence.
Above, we started from a lattice $\Lambda$ which defined a $\mathcal{H}_\mathrm{pre}$ and then by extension a $\mathcal{H}_\mathrm{phys}$. 
But ultimately, we only care about $\mathcal{H}_\mathrm{phys}$ -- the lattice and its associated pre-gauged Hilbert space are just tools helping us visualize the physical Hilbert space.
This is a handy realization because many lattices lead to the same $\mathcal{H}_\mathrm{phys}$!
Given a physical Hilbert space, we might as well use whichever lattice makes it easiest to answer the question at hand.

We will think about lattice independence as follows.
Say we start with a lattice $\Lambda_1$, defining $\mathcal{H}_{\mathrm{pre}}^{(1)}$, projectors $A^{(1)}$ and $B^{(1)}$, and physical Hilbert space $\mathcal{H}_\mathrm{phys} = A^{(1)} B^{(1)} \mathcal{H}_{\mathrm{pre}}^{(1)}$. 
There are two ``elementary moves'' that change the lattice but leave the physical Hilbert space unchanged, see e.g. \cite{Aguado:2007oza, Dennis:2001nw}.
That is, applying one of these elementary moves would give us a $\Lambda_2$, such that $\Lambda_2$ defines $\mathcal{H}_{\mathrm{pre}}^{(2)}$ and projectors $A^{(2)}$ and $B^{(2)}$ with
\begin{equation}
	A^{(2)} B^{(2)} \mathcal{H}_{\mathrm{pre}}^{(2)} = A^{(1)} B^{(1)} \mathcal{H}_{\mathrm{pre}}^{(1)}~.
\end{equation}
We describe the moves visually here.
See Appendix \ref{app:dg_model_properties} for a mathematical description.

\paragraph{Move 1: Add (or remove) a vertex}
An example of this move is
\begin{equation}\label{eq:add_vertex}
	\includegraphics[width=0.8\textwidth,valign=c]{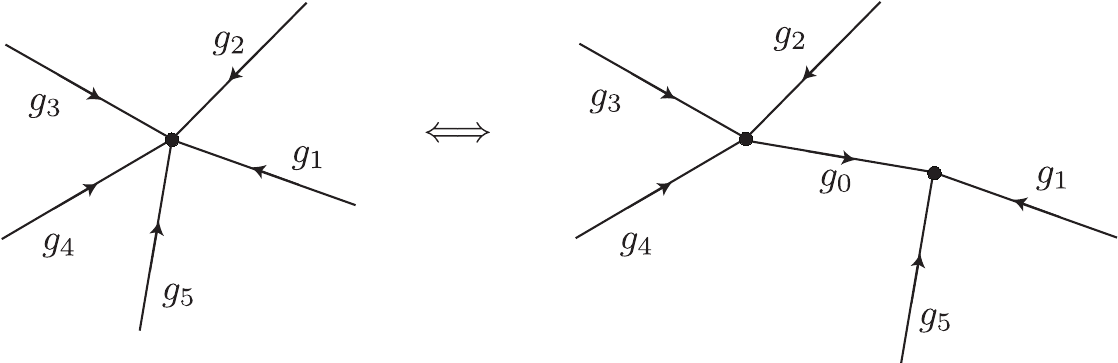}
\end{equation}
Note that we can move in either direction.

This isomorphism between physical Hilbert spaces can be understood as follows.
Consider a state where all five links on the left lattice are carrying fixed irreps $\upmu_{1,\dots 5}$.
Gauss' law requires that the five irreps on the five links fuse to the identity irrep, $\upmu_{1} \otimes  \dots \otimes \upmu_{5} \to \mathbf{1}$.
The key fact is that fusion of irreps is associative.
If $\upmu_{2,3,4}$ fuse to $\upmu_{0}$ and $\upmu_{1,5}$ to $\upmu_{0}'$, then Gauss' law requires that $\upmu_{0},\upmu_{0}'$ also fuse to the identity.
This is only possible if they are conjugate irreps, exactly like the two ends of a link, $\upmu_{0}' = \upmu_{0}^{*}$.
The new link then carries the irrep $\upmu_{0}$.

In general, the state might be in a superposition of many $\upmu_{0}$ (or even many copies of the same irrep), but this map extends linearly.
The new link carries the total electric flux propagating out of $\ell_{2,3,4}$, which can be stated mathematically as 
\begin{equation}
	R_{\ell_{2}} (h) R_{\ell_{3}} (h) R_{\ell_{4}} (h) L_{\ell_{0}} (h) \Big|_{\mathcal{H}_{\mathrm{phys}}} = \mathds{1},
	\label{eqn:new-gauss-law}
\end{equation}
which is exactly the Gauss's law constraint on the right lattice.
Similarly with the other end of the new link.
To go the other way, we just run the above argument backwards: since fusion is associative, we don't need to separately fuse $\upmu_{2,3,4}$ and $\upmu_{1,5}$.

An illustrative special case of this move is to split one link into two:
\begin{equation}
	\includegraphics[width=0.7\textwidth,valign=c]{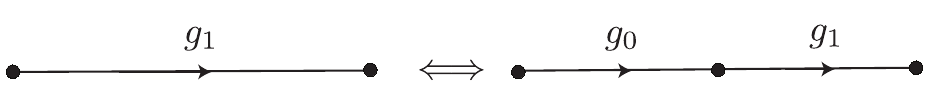}
\end{equation}
In the irrep basis the map takes the form
\begin{equation}
	V_\mathrm{vertex} \ket{\upmu; \mathsf{i}\mathsf{j}}_1 = \frac{1}{\sqrt{d_\upmu}} \sum_{\mathsf{k} = 1}^{d_\upmu} \ket{\upmu; \mathsf{i} \mathsf{k}}_0 \ket{\upmu; \mathsf{k} \mathsf{j}}_1~.
\end{equation}
The flux emanating out of $\ell_{1}$ (or, more properly, $\bar{\ell}_{1}$) is $\upmu$, and that is what the new link carries.

\paragraph{Move 2: Add (or remove a plaquette)}
The move is simply
\begin{equation}\label{eq:add_plaq}
	\includegraphics[width=0.8\textwidth,valign=c]{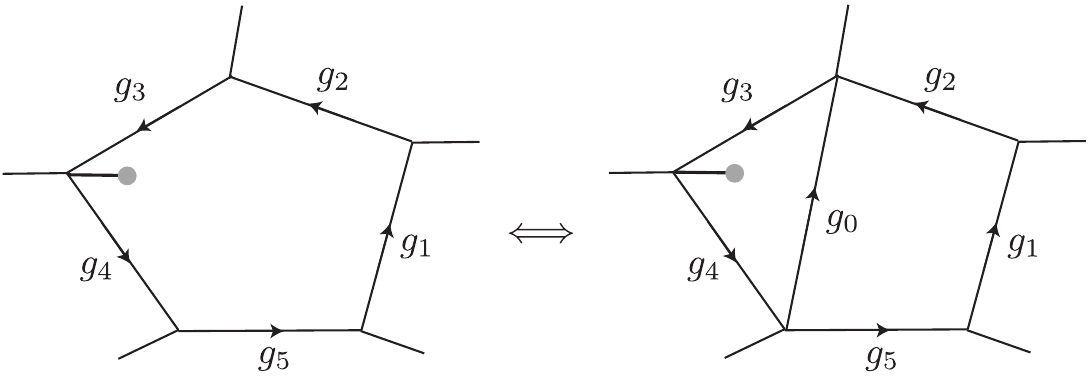}
\end{equation}
If there is a matter degree of freedom in the original plaquette, then we need to make the decision of which of the two new plaquettes it lives in.

Importantly, it does not matter that we added a link inside of a plaquette that already existed.
We can take an unclosed set of links -- which do not form a plaquette and therefore do not satisfy any flatness constraint -- and close them by adding a new link. 
The new plaquette satisfies the flatness constraint regardless. 
The reverse operation is also important: we can take a plaquette on the edge of a lattice, then remove the outermost link, removing exactly one plaquette.  

\subsubsection*{Ribbon operator transformation}
A ribbon operator $F$ acts on $\mathcal{H}_\mathrm{phys}$ and therefore must be represented on any associated lattice. 
We can ask: given $F_\gamma (h,g)$ acting on $\mathcal{H}_{\mathrm{pre}}^{(1)}$, what is the associated operator on $\mathcal{H}_{\mathrm{pre}}^{(2)}$? 
The answer is that it is also a ribbon operator, now including the new link if a plaquette was added along its path. 
For example:
\begin{equation}\label{eq:ribbon_under_deform}
	\includegraphics[width=.9\textwidth,valign=c]{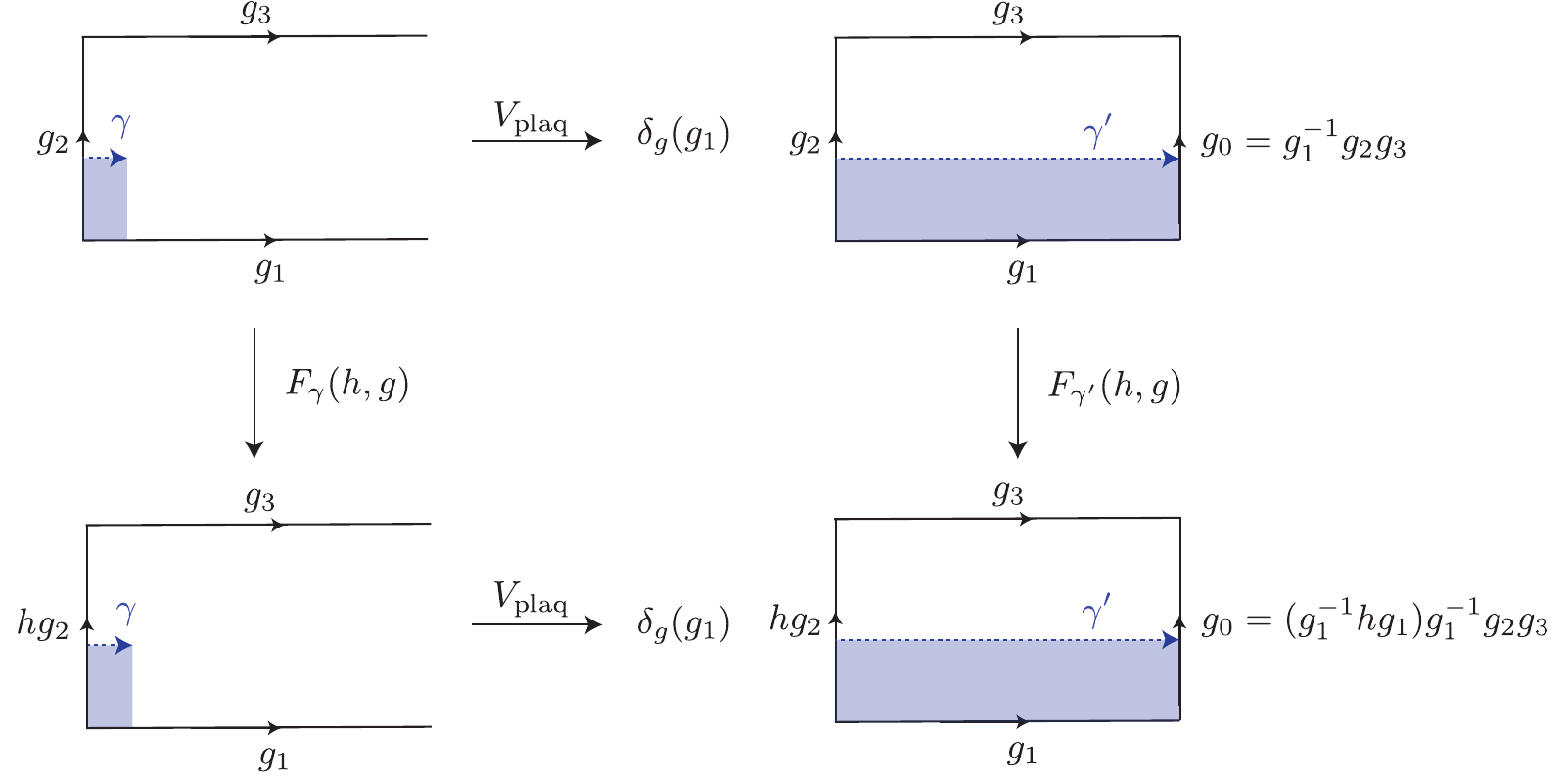}
\end{equation}

In general, the rule is as follows.
The ribbon $\gamma$ is completely specified by its topological properties, i.e. its end-points, orientation, and position relative to matter degrees of freedom.
The equivalent ribbon on the new lattice is simply the one that has the same properties.

\subsection{Subalgebras and their centers}\label{sec:subalg_and_centers}
Now that we have understood the properties of the global system, we turn to subregions and subalgebras.
Given a lattice $\Lambda$, a subregion $b = \{ V_b, L_b, P_b \}$ is a subset of vertices, links, and plaquettes, with $V_b \subseteq V, L_b \subseteq L, P_b \subseteq P$.\footnote{Note that we can also define a region by drawing a dual path. Just use this definition after adding new vertices wherever the dual path intersects the lattice.}
We wish to associate to $b$ an algebra of physical operators $\mathcal{A}_b$. 
	The feature of the subalgebra that will interest us most is the center $\mathcal{Z}_{b} \subseteq \mathcal{A}_{b}$, since that is the part associated to the area operator \cite{Harlow:2016vwg}.
	(The center is the subalgebra of $\mathcal{A}_{b}$ that commutes with all of $\mathcal{A}_{b}$.)

It turns out there are multiple types of subalgebras we will be interested in.
In this section we will explain the simplest, most natural kind of subalgebra.
In Section \ref{ssec:subalg_revisited} and Appendix \ref{app:gen-regs} we will explain the other types, and why we consider them.
Physically, all of these subalgebras have in common that their center includes the operator that measures the net electric flux out of $b$. 
This is important, and means we can always find in the center an area operator with this same physical interpretation.

Given a region $b$, perhaps the most natural subalgebra to associate to it is all operators on $\mathcal{H}_\mathrm{pre}$ that commute with $A$ and $B$ and act trivially on the complementary set of links ($\mathcal{H}_\ell$) and matter ($\mathcal{H}_{(v,p)}$).
This is the type of algebra we consider in this subsection.

We furthermore impose the following restrictions on $b$ for simplicity, in this subsection.
We will specialize to lattices $\Lambda$ associated to 2D surfaces $\Sigma$ with the topology of a disk, $D^2$.
These are analogous to Cauchy slices of global $\text{AdS}_3$.\footnote{It would be straightforward to generalize our discussion to the case where $\Sigma$ is a cylinder, analogous to the two-sided black hole. 
	In this setting we can consider subregions bounded by cuts $\eth b$ that are topologically $S^{1}$, and the center for such subregions was written down in \cite{Delcamp:2016eya}.
	Much like $F_{\gamma} (\upmu)$ projects onto a sector of fixed electric flux, the central ribbon operators in this case project onto fixed irrep of the quantum double $D(G)$.}
 As mentioned in the introduction, $\eth b$ is the set of links connecting $V_{b}$ to $V \setminus V_{b}$.
The set $\eth b$ forms a dual path in the lattice, which intersects some plaquettes $P_{\eth b}$.
We impose the restriction that $\eth b$ is topologically an interval, dividing the $D^2$ into two pieces.
We also require that no plaquette in $P_{\eth b}$ contain a matter degree of freedom.
It is possible to make this the case using elementary lattice moves, so there is no loss of generality, and the subsequent discussion will be simplified with this requirement.

Let $\gamma$ be a ribbon whose spokes are $\eth b$ and whose spine is the path connecting the vertices in $V_{b}$ adjacent to links in $\eth b$.
The center is generated by the following operators that live on this ribbon:
\begin{equation}
	F_{\eth b} ([h]) \coloneqq \frac{1}{|[h]|} \sum_{w \in [h]} \sum_{g\in G} F_\gamma(w,g)~,
\end{equation}
where $[h] \coloneqq \{ w \in G : \exists g \in G~\text{s.t.}~ g^{-1} w g = h \}$ is the conjugacy class of $h$.
A different basis will be convenient:
\begin{equation}\label{eq:central_ribbons}
	F_{\eth b} (\upmu) \coloneqq \frac{d_\upmu}{|G|} \sum_{h \in G} \chi_\upmu(h) F_{\eth b} ([h])~.
\end{equation}
Here $\upmu$ labels irreducible representations (irreps) of $G$, and $\chi_\upmu(h)$ is the character of irrep $\upmu$ and element $h$. 
A simple calculation shows that these are a set of orthogonal projectors,
\begin{equation}
	F_{\eth b} (\upmu) F_{\eth b} (\upmu') = \delta_{\upmu,\upmu'} F_{\eth b} (\upmu).
	\label{eqn:cent-ribb-alg}
\end{equation}

We prove these are central in Appendix \ref{app:cent}, along with other properties, with a straightforward argument: we write down all operators in $\mathcal{A}_b$ and then check which commute. 
Physically, these operators measure the \emph{total} electric flux out of a region.
$\upmu$ with larger dimensions corresponds to more net flux.
Intuitively, these are central because no gauge-invariant operator confined to a region can change the net flux. 

Let's convince ourselves that these operators measure the net electric flux using the lattice independence tools from Section \ref{sec:lattice_in}. 
Consider as indicated here a subregion $b$ and the ribbon acting on $\eth b$ (note $b$ includes all vertices and links that are even partially inside the circled region),
\begin{equation}
    \includegraphics[width=0.25\textwidth,valign=c]{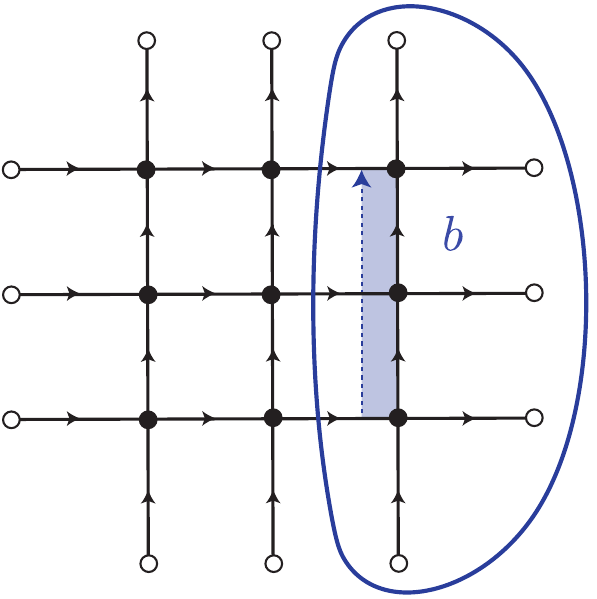}
\end{equation}
Note that we did not draw the ribbon extending all the way to the boundary vertex.
The rules for these central ribbon operators are that they can end on spokes; the part outside the spokes is irrelevant because it is summed over. See Proposition \ref{prop:ribb-short}.

As explained in Section \ref{sec:lattice_in}, we change nothing by removing plaquettes along the divide (in the right way).
After two applications of \eqref{eq:add_plaq}, we obtain a lattice with just one link along the path of this ribbon,
\begin{equation}
	\includegraphics[width=0.6\textwidth,valign=c]{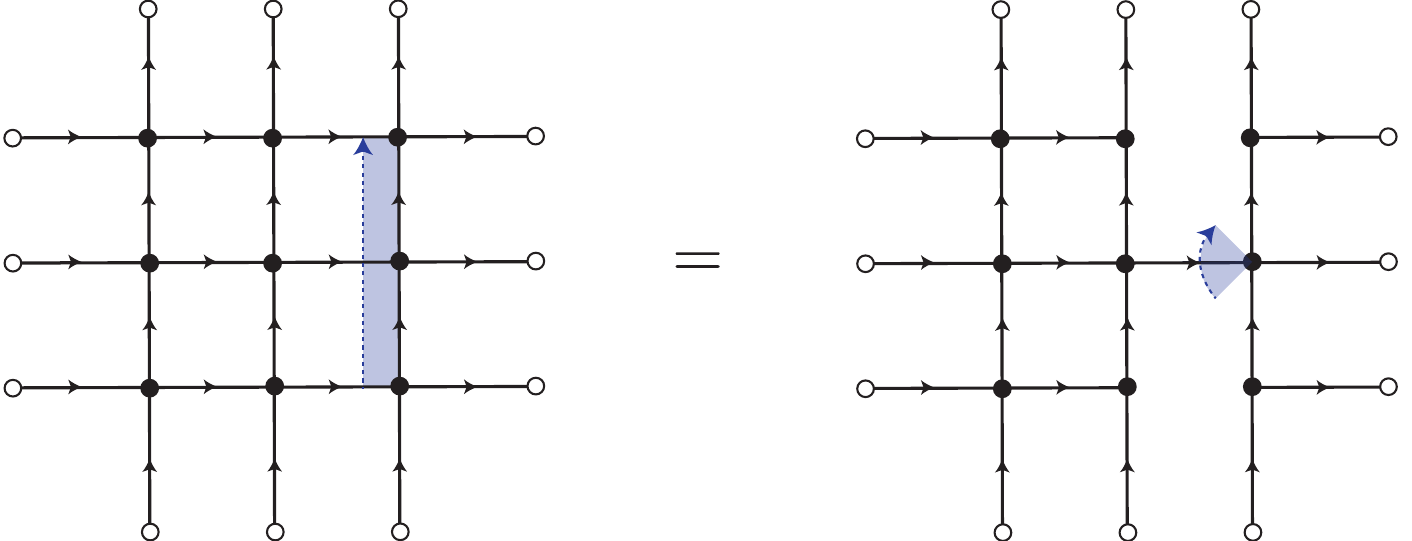}
\end{equation}
Now we see: the central ribbon operator on the original lattice acts an electric operator \eqref{eq:shift_op} on the single link at the edge of the subregion on the new lattice.
Again, nothing physical changed under each lattice manipulation. All that changed was how we represented the physical Hilbert space. 
Therefore the physical interpretation of these central ribbon operators is always the total electric flux, independent of which (equivalent) lattice we use.

We have explained the central operators of the simplest kind of  subalgebra we might associate to a region $b$.
As mentioned, we will also consider other types of subalgebras to assign to regions.
These we discuss in Section \ref{ssec:subalg_revisited} (and in more detail in Appendix \ref{app:gen-regs}). 
The basic reason is that we want to associate to all $b$ an algebra in which the center includes operators measuring the total electric flux out of $b$, \emph{but not operators measuring the flux out of individual parts}. 
This can make the subalgebra complicated.
For example, say we are given a $b$ with two connected parts $b_1$ and $b_2$, but with $b_1$ and $b_2$ far away from each other.
Say we associate to $b_1$ and $b_2$ the natural algebra described above, and furthermore say we associate to $b$ the algebraic union of these two subalgebras, $\mathcal{A}_{b} = \mathcal{A}_{b_1} \vee \mathcal{A}_{b_2}$.
This is not what we want. 
In the center of $\mathcal{A}_b$ are operators measuring the net flux out of $b_1$ and $b_2$ individually. 
We will instead consider $\mathcal{A}_b$ with even more operators, some of which will fail to commute with the individual centers of $b_1$ and $b_2$.
The only electric flux measurement in the center will be the net flux out of all of $b$.

\subsection{Overlapping central ribbons don't commute} \label{sssec:non-comm}
One important fact about the central ribbon operators is that they generally fail to commute with the central ribbon operators of other, overlapping regions. 
This is important for the following reason.
In future sections, the entropy we will assign to (some) $b$ will have the form\footnote{More generally, the entropy will still take this form but with an operator $\hat{A}_b$ of a slightly different form. Physically, this $\hat{A}_b$ still measures the net electric flux out of $b$.}
\begin{equation}
	S (b)_{\ket{\psi}} = \mel{\psi}{\hat{A}_b}{\psi} + S(b;\mathrm{alg})_{\ket{\psi}}~, \qquad \hat{A}_b \coloneqq \sum_{\upmu} \log(d_{\upmu}) F_{\eth b} (\upmu).
	\label{eqn:ee-first-look}
\end{equation}
The first term is the expectation value of a state-independent operator, and we will refer to it as the area operator, and the second term is the ``algebraic von Neumann entropy'' which we will define later.
Two crossing area operators generally fail to commute, which we will interpret as analogous to the ``non-commuting areas'' property \cite{Bao:2019fpq} in gravity.
In Section \ref{ssec:noncom_areas} we explain this aspect of our tensor network.

We prove that suitably overlapping area operators fail to commute in Appendix \ref{sapp:non-comm-pf}.
Here we show an example.
Consider this $a$ and $b$:
\begin{equation}
    \includegraphics[width=0.25\textwidth,valign=c]{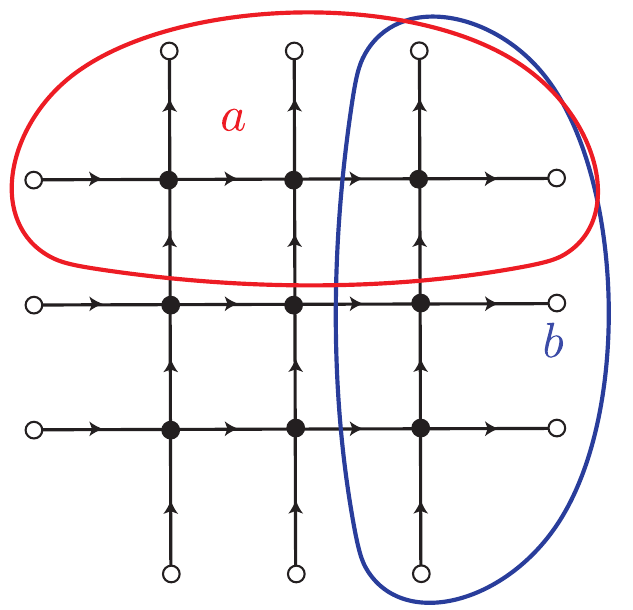}
\end{equation}
Say we fix the net flux out of the $a$ region. What happens to the net flux out of $b$? Can we simultaneously fix it?
For a non-abelian $G$, the answer is no.
Fixing the flux out of $a$ means projecting onto a state of definite $\upmu$ for the $a$ region, where $\upmu$ is the label for the \emph{joint representation} of all links in $\eth a$.
In general we cannot simultaneously fix the joint representation of all links in both $\eth a$ and $\eth b$ if $a$ and $b$ are distinct but overlap.

For example, consider a simple lattice with four links connected at one vertex, with regions $a$ and $b$ each two of the links (remember, they include the entire link if it is even partially circled):
\begin{equation}
    \includegraphics[width=0.25\textwidth,valign=c]{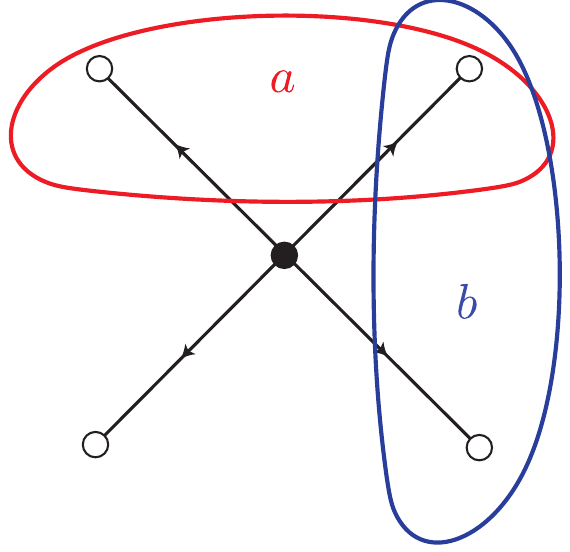}
\end{equation}
Say $G = SU(2)$. 
Gauge-invariance tells us that all four links must fuse to the trivial irrep, but there are multiple ways to do this.
Consider the case that all four links are in the spin $1/2$ representation.
This is the familiar setting of four spin $1/2$ particles that together are in a singlet state.
To fuse to the spin $0$ representation, the two links in $a$ could fuse to $\upmu_a = 0$ or $\upmu_a = 1$, and in either case the two complementary links have to do the same. 
But fixing $\upmu_a$ either way gives a singlet state with $\upmu_b$ very \emph{not} fixed. 
There's no total spin $0$ state with both $\upmu_a$ and $\upmu_b$ fixed.
The operators that measure them fail to commute.

\subsection{Reduced lattices} \label{ssec:red-latt}
We can use the lattice deformations described in Section \ref{sec:lattice_in} to make a `minimal' lattice, which we call the reduced lattice.
We describe the reduced lattice for the disk $D^{2}$, then argue that any lattice (embedded in $D^2$) can be deformed to it, and finally describe what the ribbon operators (and fused ribbon operators) in $\mathcal{L}(\mathcal{H}_{\mathrm{phys}})$ look like in this reduced lattice.

\begin{figure}[h!]
	\centering
	\includegraphics[width=0.4\textwidth]{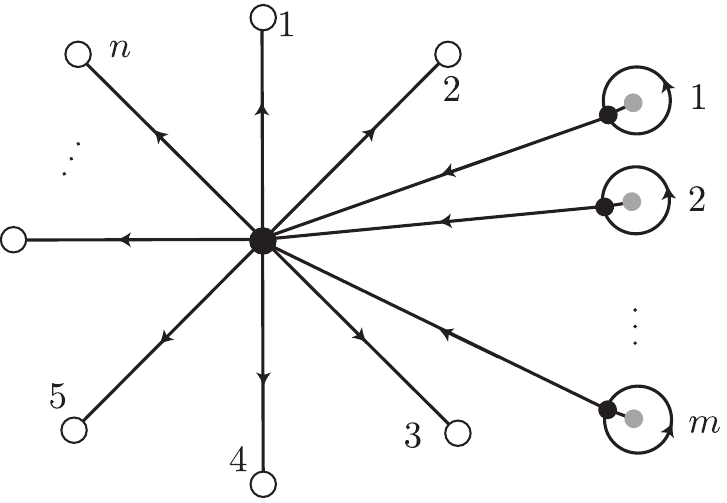}
	\caption{Reduced lattice for $D^{2}$. There are $n$ boundary points denoted by white circles, and $m$ lollipops. Each lollipop consists of two links connected at a vertex, with some matter living at that vertex.}
	\label{fig:red-latts}
\end{figure}

The reduced lattice for $D^{2}$ is as follows.\footnote{The reduced lattice for the cylinder is similar, with one extra ingredient.
	There are two more links, starting as well as ending on the central vertex; all lollipops are between these two links.
These two links are both representatives of the non-contractible loop of the cylinder, one for each boundary of the cylinder.}
It consists of
\begin{enumerate}
	\item A central vertex that all boundary points are connected to by links.
	\item A ``lollipop'' for every matter degree of freedom, also connected to the central vertex.
\end{enumerate}
See Figure \ref{fig:red-latts}.

We change the original lattice to the reduced lattice by using the elementary moves.
In particular, for any plaquette we contract all but one of the links so that the plaquette consists of one link starting and ending on the same vertex, see Figure \ref{fig:plaq-contract}.
If the holonomy around the plaquette is flat, then the flatness constraint implies that the state on this link is $\ket{e}$.
Gauss's law at this vertex leaves this link invariant, since $e \to h e h^{-1} = e$.
Thus, this link is a one-dimensional tensor factor and we can drop it.
We do this for all contractible plaquettes, resulting in a new lattice where all plaquettes are inequivalent.\footnote{
	Another way to arrive at the reduced lattice is via the fusion basis lattice of \cite{Delcamp:2016yix,Delcamp:2016eya}.
	For $D^{2}$, they find a tree lattice with one node for every boundary vertex and one lollipop for every plaquette with a matter degree of freedom.
	They show that the different assignments of irreps for links on the lattice specifies a complete basis for the physical Hilbert space.
	Our reduced lattice can be obtained from this tree by removing all but one bulk vertices on the `trunk.'
}
In the case when the plaquette contains a matter degree of freedom, we add a link to separate out a lollipop.\footnote{When constructing the reduced lattice for a more general manifold, some plaquettes may be non-contractible because it surrounds a hole in the manifold. In that case, do not remove it.}
This gives us the reduced lattice described in Figure \ref{fig:red-latts} for $D^{2}$.

\begin{figure}[h!]
	\centering
	\includegraphics[width=.7\textwidth]{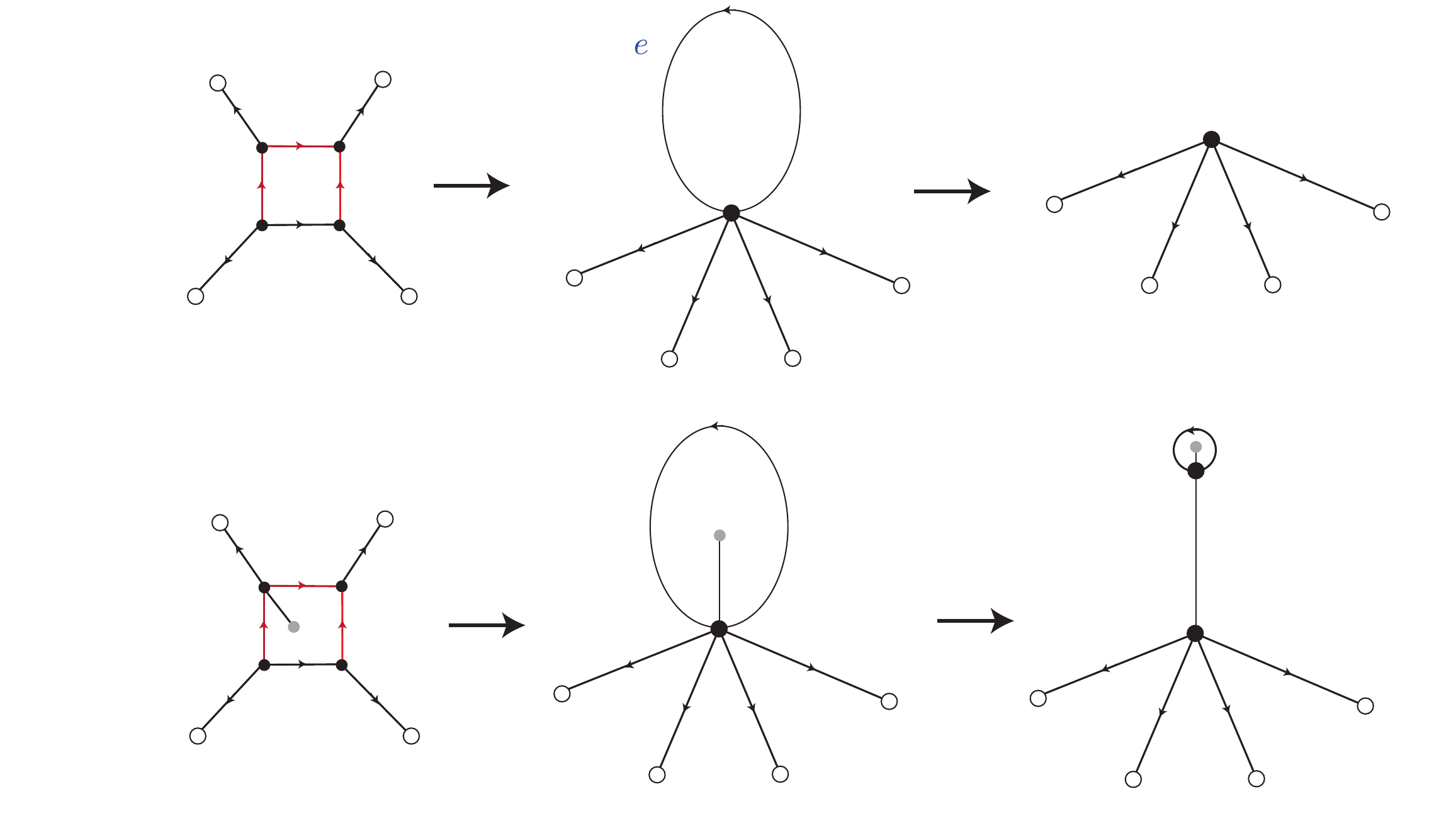}
	\caption{We contract all but one link of any plaquette, so that the plaquette consists of one link. If there is no matter inside, we can get rid of the link. If there isn't, we split it off into a `lollipop.'}
	\label{fig:plaq-contract}
\end{figure}

Let us see an explicit example.
Begin with
\begin{equation}\label{eq:general_bulk_lattice}
	\includegraphics[width=0.3\textwidth,valign=c]{figs/general_lattice.pdf}
\end{equation}
As before, black circles denote bulk vertices, white circles denote boundary vertices, and diagonal lines connected to gray circles denote which bulk sites come with matter degrees of freedom.
First,
\begin{equation}
	\includegraphics[width=0.7\textwidth,valign=c]{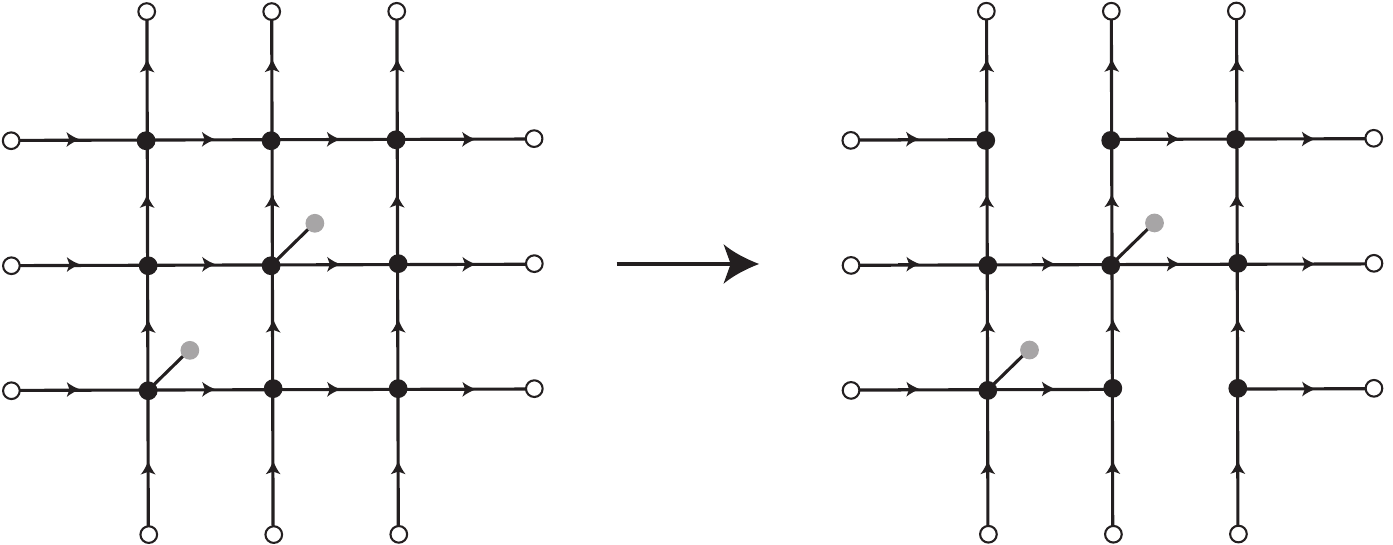}
\end{equation}
Here we have used the move \eqref{eq:add_plaq} to remove one link from each of the plaquettes without matter.
Next,
\begin{equation}
	\includegraphics[width=0.7\textwidth,valign=c]{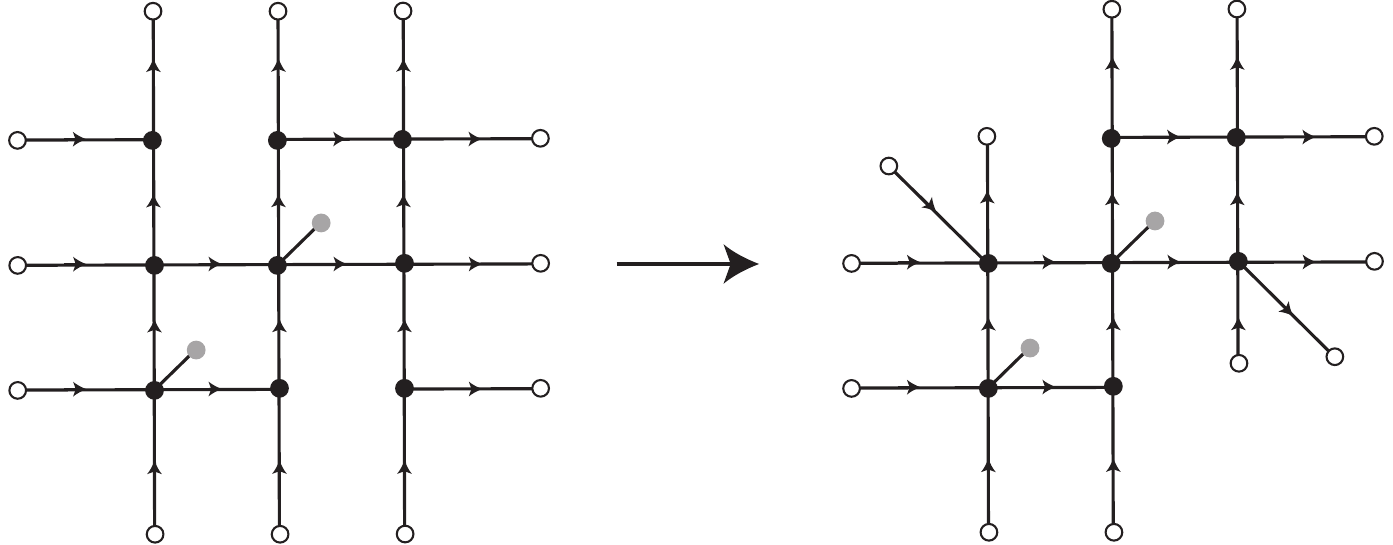}
\end{equation}
We have used \eqref{eq:add_vertex} to remove two vertices, consolidating the graph.
Next,
\begin{equation}
	\includegraphics[width=0.7\textwidth,valign=c]{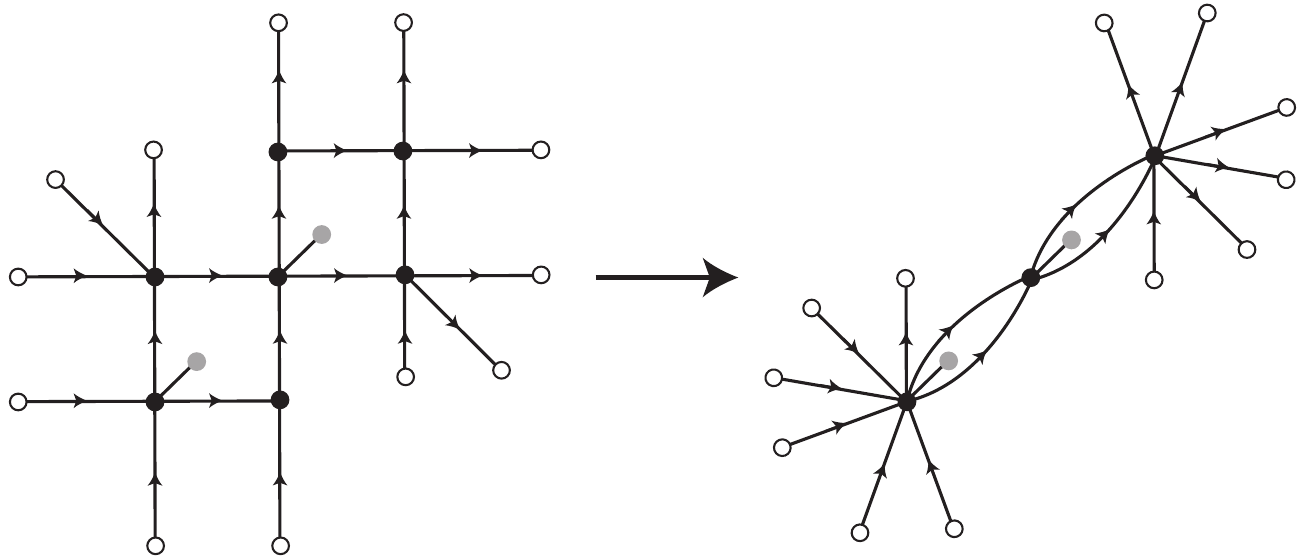}
\end{equation}
Here we have removed four more vertices with \eqref{eq:add_vertex}, two from each remaining plaquette. 
Next,
\begin{equation}
	\includegraphics[width=0.7\textwidth,valign=c]{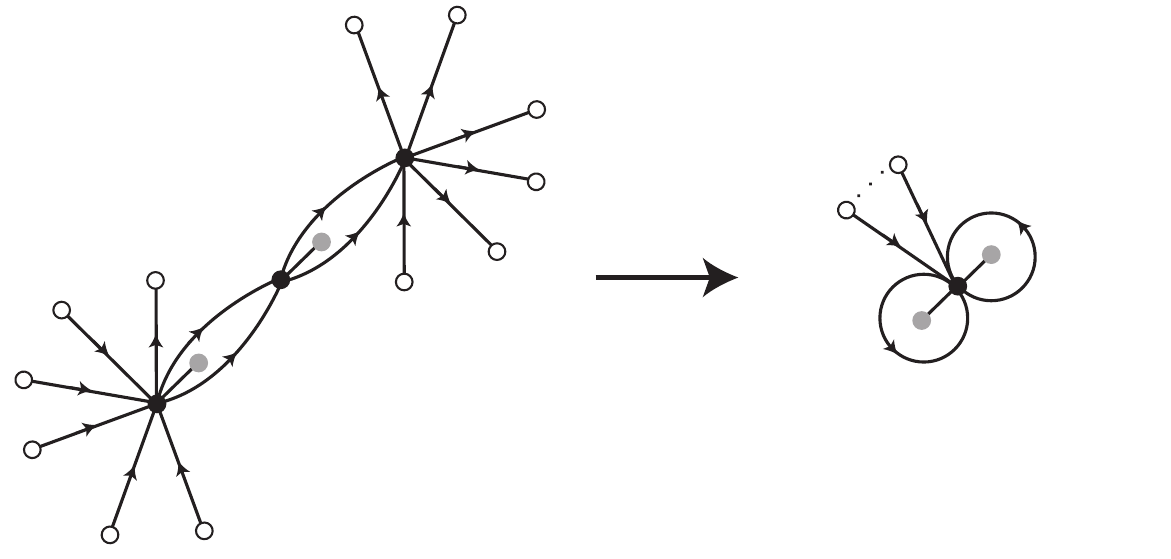}
\end{equation}
We again used \eqref{eq:add_vertex} to remove two vertices, one from each plaquette.
To reduce clutter we have suppressed 10 of the 12 boundary vertices and each of their links, indicated by ``$\cdots$''.
Finally, 
\begin{equation}
	\includegraphics[width=0.5\textwidth,valign=c]{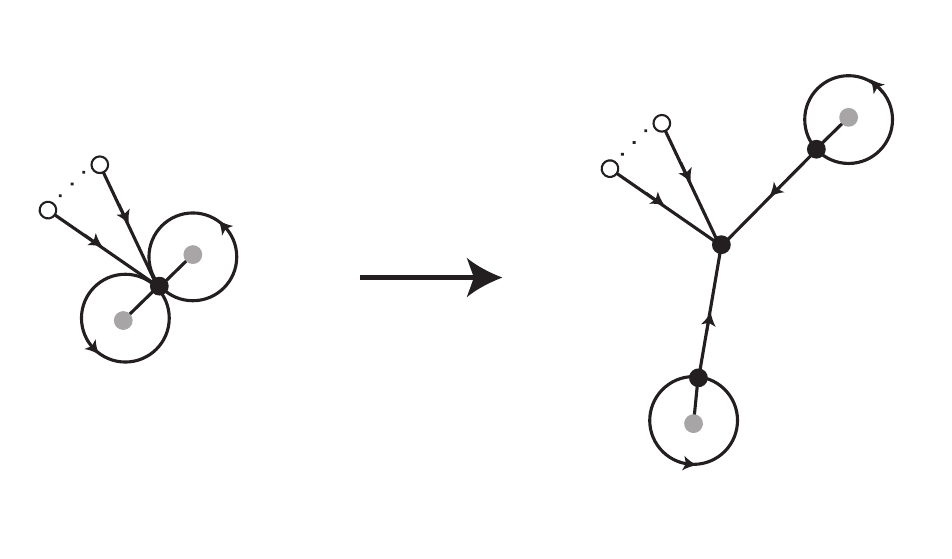}
\end{equation}
We added in two vertices using \eqref{eq:add_vertex}.
This graph is now in the form \eqref{eq:star_plus_lollipops}.

\subsection{Subalgebras revisited}\label{ssec:subalg_revisited}
We are now in a position to discuss the general kinds of subalgebras we might assign to a subregion $b$. 
For better or for worse, our tensor network construction will not allow us to consider only subalgebras of the simple kind from Section \ref{sec:subalg_and_centers}.
Indeed, the tensor network will satisfy a holographic entropy formula like
\begin{equation}
    S(B) = \min_b \left(\mel{\psi}{\hat{A}_{b}}{\psi} + S(b; \mathrm{alg})\right)~,
\end{equation}
where the minimization is over a set of bulk subregions $b$, each candidate $b$ including a different set of matter legs. 
What's important is that given each $b$, there is an associated subalgebra (determined by the details of the tensor network).
The particular subalgebra is important, and for example affects the precise value of the algebraic entropy $S(b; \mathrm{alg})$.

The general subalgebras we'll consider are defined as follows.
Say we are given a reduced lattice as in Figure \ref{fig:red-latts}.
We pick some subset of ``boundary links'' (connected to white circles) and lollipops to be a subregion $\tilde{b}$. 
To this $\tilde{b}$, assign the natural kind of algebra from Section \ref{sec:subalg_and_centers}, which we'll call $\mathcal{A}_{\tilde{b}}$. 
Now, convert the reduced lattice to a more regular ``full'' lattice.
The algebra $\mathcal{A}_{\tilde{b}}$ becomes an isomorphic algebra we'll call $\mathcal{A}_b$ acting on this full lattice. 
$\mathcal{A}_b$ can be associated to a subregion, which we can call $b$ -- indeed it still involves operators acting on a particular set of matter legs, for example.
	However, it is not generally just the set of physical operators acting trivially outside $b$.
We explore these algebras in more detail in Appendix \ref{app:gen-regs}.
What is important is this: the center consists of operators measuring the net electric flux out of $b$,
and does \emph{not} include operators measuring the electric flux out of subregions of $b$.

\section{The tensor network}\label{sec:TN}

We are now prepared to present our main result: a tensor network with a novel, topological kind of area operator in its holographic entropy formula.
This is desirable because it permits the interpretation that the lattice of the tensor network is analogous to the discretized geometry on which the TQFT description of gravity lives (which should be irrelevant to physical quantities, like the CFT entropy).
One concrete advantage of this area operator is that it does not suffer from the ``commuting areas problem'' of other tensor networks, as we'll explain.
A related noteworthy feature is that -- because it is topological -- this area operator's expectation value need not grow with the number of cut links, indicative of the fact that the entanglement accounted for by this area operator is not that of bipartite pairs associated to each link, a point we will discuss in detail in Section \ref{sec:factorization}.

\subsection{The model}

The setup is as follows.
Say we are given a system as in Section \ref{sec:DGmodels}, with some $\mathcal{H}_\mathrm{phys}$ defined on some lattice. 
We regard this as the bulk Hilbert space. 
We define the boundary Hilbert space as the set of links with one end in $V_\mathrm{bdry}$, and let the tensor product of these links be the boundary Hilbert space.
In other words, letting $(xy)$ denote the link connecting vertices $x$ and $y$, 
\begin{equation}
	\begin{split}
		\mathcal{H}_\mathrm{bulk} &:= \mathcal{H}_\mathrm{phys}~, \\
		\mathcal{H}_\mathrm{bdry} &:= \bigotimes_{y \in V_\mathrm{bdry}} \mathcal{H}_{(xy)}~.
	\end{split}
\end{equation}
Our goal is to define a map $V: \mathcal{H}_\mathrm{bulk} \to \mathcal{H}_\mathrm{bdry}$.
For example,
\begin{equation}
	\includegraphics[width=0.8\textwidth,valign=c]{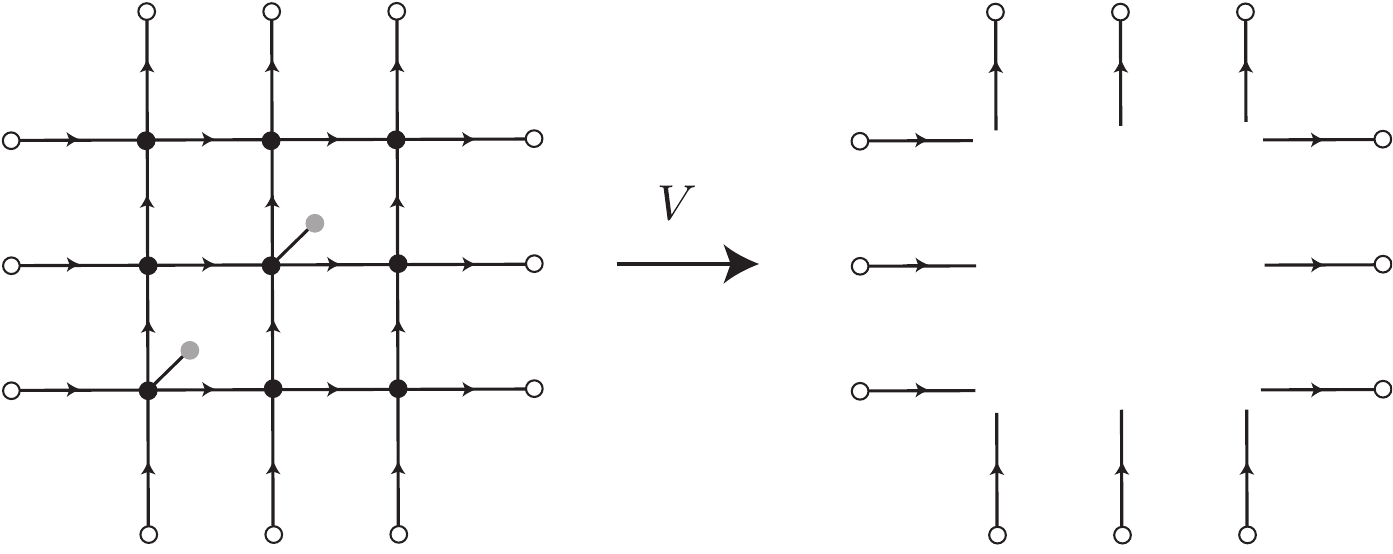}
\end{equation}
The $V$ we define has three steps, which we'll list and then explain:
\begin{enumerate}
	\item It fully reduces the lattice as in Section \ref{ssec:red-latt}.
	\item It (isometrically) embeds $\mathcal{H}_\mathrm{phys}$ into the pre-gauged Hilbert space associated to this reduced lattice.
	\item It acts random tensors $\bra{T}$ on each lollipop factor.
\end{enumerate}
We can draw this sequence of steps as
\begin{equation}\label{eq:map_steps}
	\includegraphics[width=0.9\textwidth,valign=c]{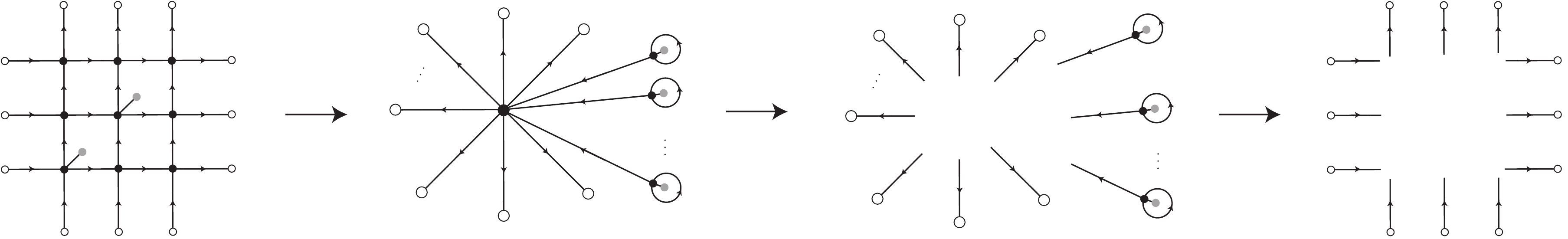}
\end{equation}
Note these steps are schematic -- for example, the true reduced lattice of the starting lattice would only have two lollipops in the next stage.
We now explain the steps in detail. 

First, without loss of generality we can imagine $\mathcal{H}_\mathrm{phys}$ described by a fully reduced lattice, as explained in Section \ref{ssec:red-latt}.
This requires no physical operation on $\mathcal{H}_\mathrm{phys}$; it simply requires using a particular $\mathcal{H}_\mathrm{pre}$. 
There are in general multiple ways to reduce the lattice which do not correspond to the same $\mathcal{H}_\mathrm{pre}$. However, any choice will work, and there is a finite amount of data involved in specifying which reduced lattice we wish to use and which steps we take to obtain it from the original lattice, and so we will proceed as though some choice has been made, and we have a lattice of the following form:
\begin{equation}\label{eq:star_plus_lollipops}
	\includegraphics[width=0.4\textwidth,valign=c]{figs/single_vertex_with_lollipops.pdf}
\end{equation}
Second, we embed this lattice into the pregauged Hilbert space,
\begin{equation}
	\includegraphics[width=0.9\textwidth,valign=c]{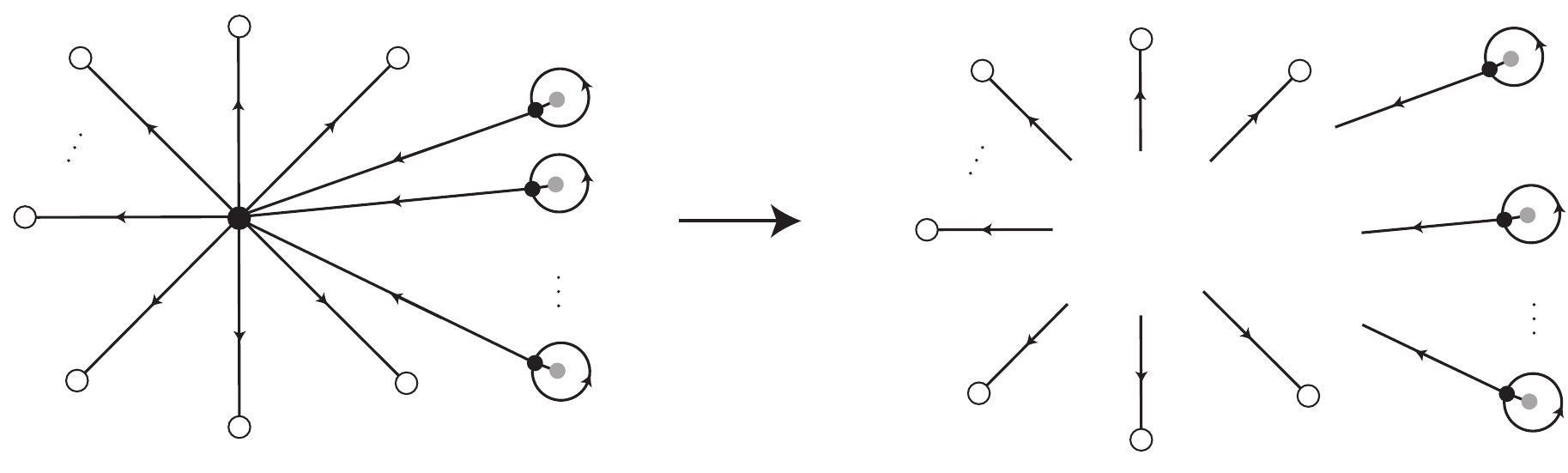}
\end{equation}
(Strictly speaking, $\mathcal{H}_\mathrm{pre}$ also lifts the Gauss constraint within each lollipop, so we should really not draw them still connected at their black circles. However, it will not make a difference in the later steps, and so we will continue to draw them as though they satisfy Gauss' law at their respective vertices.)
Let us give a simple example to illustrate what this means.
Say there is no matter, $m = 0$.
In this case, that means a map $\mathcal{H}^{\otimes n}_G / {\mathrm{Gauss}} \to \mathcal{H}^{\otimes n}_G$.
Embedding into the pre-gauged Hilbert space now simply means that we lift the Gauss constraint -- and now the Hilbert space factorizes.
For example, if $n=2$ the bulk Hilbert space would be spanned by states $\ket{\upmu,\mathsf{i}\mathsf{j}}$, and the embedding into the pre-gauged Hilbert space would mean the map
\begin{equation}
	\ket{\upmu,\mathsf{i}\mathsf{j}} \longmapsto \sum_{\mathsf{k} = 1}^{d_\upmu} \ket{\upmu,\mathsf{i}\mathsf{k}}\ket{\upmu,\mathsf{k}\mathsf{j}}/\sqrt{d_\upmu}~.
	\label{eqn:fact-map-1}
\end{equation}

Now let's reintroduce matter to the bulk Hilbert space. 
Then $\mathcal{H}_\mathrm{pre}$ is not the same as $\mathcal{H}_\mathrm{bdry}$, because it also includes the lollipop factors.
We need to get rid of them, and we would like to do so in a way that is conducive to obtaining a holographic entropy formula (for example, we do not want to simply destroy the information contained in those factors).
We accomplish this by acting random tensors on the extra factors.
This is the third and final step of the map.\footnote{As we will mention when deriving the holographic entropy formula, this only preserves the information if the original state was sufficiently nice. In particular, it needs to have a large amount of electric flux (relative to the amount of bulk entropy) from each lollipop to the boundary legs. This is like the usual random tensor network requirement that the bond dimension of in-plane legs be sufficiently large relative to the amount of bulk entropy.}

We define the random tensors and their action as follows.
After the embedding into $\mathcal{H}_\mathrm{pre}$, we have $n + m$ factors: the $n$ ``boundary'' links which form a set we'll call $\mathfrak{f}_\partial$ and the $m$ lollipops which form a set we'll call $\mathfrak{f}_\mathrm{lol}$.
Call the set of all such factors $\mathfrak{f} = \mathfrak{f}_\partial \sqcup \mathfrak{f}_\mathrm{lol}$. 
To act with random tensors means to act with the operator $\otimes_i \bra{T_i}$ for $i \in \mathfrak{f}_\mathrm{lol}$ indexing the lollipops.
This eliminates the lollipop factors.
These $\bra{T_i}$ are each ``gaussian random tensors'' that we define as follows, following \cite{Cheng:2022ori}:\footnote{This is different from \cite{Hayden:2016cfa}, which did not use Gaussian random tensors but instead chose tensors at random from the Haar measure. These are the same distribution up to a normalization. The Gaussian random vectors $\bra{T_i}$ have norm $\lVert \bra{T_i} \rVert$ that is independent of the normalized vector $\bra{T_i} / \lVert \bra{T_i} \rVert$, and these normalized vectors are distributed uniformly. Hence the models agree up to normalization.} given some fixed basis, every entry of the dual vector $\bra{T_i}$ is an independent complex Gaussian random variable, i.e. can be written as $(x + i y)/\sqrt{2}$ where $x$ and $y$ are independent real Gaussian random variables of mean $0$ and variance $1$.\footnote{Explicitly, each lollipop Hilbert space is a sum over irreps $R$ of the quantum double $\mathfrak{f}_{\mathrm{lol},i} = \oplus_{R} \mathcal{H}_{R,i}$.
	Denoting a basis as $\ket{R,I}$, we are taking $\braket{T_{i}}{R,I}$ to be a Gaussian random variable.
	We can decompose $\bra{T_{i}}$ into an irrep probability and a tensor in each irrep as
	\begin{equation}
		\bra{T_{i}} \coloneqq \sum_{R} \sqrt{p_{R}} \bra{t_{R,i}}, \qquad \bra{t_{R,i}} \in \mathcal{H}_{R,i}^{*}.
		\label{eqn:tensor-decomp}
	\end{equation}
The prescription outlined above is equivalent to averaging over both $p_{R}$ as well as $\bra{t_{R,i}}$ with a correlated weight.}

\subsection{Holographic entropy formula}

Having defined our tensor network, we now argue it has a holographic entropy formula
\begin{equation}\label{eq:HEF_sec3}
	S(B)_{V\ket{\psi}} = \min_b \left(\mel{\psi}{\hat{A}_b}{\psi} + S(b;\mathrm{alg})_{\ket{\psi}}\right)~,
\end{equation}
where the second term is the algebraic von Neumann entropy defined below. 
This is similar to traditional random tensor networks \cite{Hayden:2016cfa}, but novel in three ways.

The first novelty is that the minimization over bulk regions $b$ is slightly different.
We do not consider all possible cuts through the lattice homologous to $B$.
Instead, each candidate $b$ is a different collection of matter legs.
The minimization is really over which matter legs are included.
This roughly translates to a minimization over bulk regions.

The second novelty is that given a subregion $b$, the subalgebra $\mathcal{A}_b$ we associate to it is not always the natural one described in Section \ref{sec:subalg_and_centers}.
This is for reasons discussed in Sections \ref{sec:subalg_and_centers} and \ref{ssec:subalg_revisited} and Appendix \ref{app:gen-regs}.

The third novelty is that the area operator $\hat{A}_b$ is quite different than in traditional tensor networks.
It is no longer sensitive to the geometry of the lattice.
It is now a certain physical operator in the DG model of Section \ref{sec:DGmodels}, in the center of the algebra $\mathcal{A}_b$.
In particular, it is the operator that measures the net electric flux flowing out of $b$. 
Therefore it is topological, only caring about its placement relative to matter degrees of freedom.
Let us be more specific.
When $\mathcal{A}_b$ happens to be of the simple kind described in Section \ref{sec:subalg_and_centers}, $\hat{A}_b$ is the ribbon operator
\begin{equation}\label{eq:connected_area}
	\hat{A}_b = \sum_\upmu \log(d_\upmu) F_{\eth b}(\upmu)~,
\end{equation}
where $F_{\eth b}(\upmu)$ is the projector onto the fixed $\upmu$ state defined in \eqref{eq:central_ribbons}.
More generally, $\hat{A}_b$ is a different kind of operator that we call a ``fused ribbon operator'',
\begin{equation}\label{eq:disconnected_area}
	\hat{A}_b = \sum_\upmu \log(d_\upmu) F^\mathrm{fused}_{\eth b}(\upmu)~,
\end{equation}
where $F^\mathrm{fused}_{\eth b}(\upmu)$ is defined in Appendix \ref{app:gen-regs}.

Let us now derive the holographic entropy formula \eqref{eq:HEF_sec3}.
As a warmup, consider the lattice \eqref{eq:star_plus_lollipops} with $n=2, m=0$. 
That is, two links attached at a vertex.
Recall that we obtain the boundary Hilbert space simply by isometrically embedding this into the pre-gauged Hilbert space, $\mathcal{H}_G \to \mathcal{H}_G \otimes \mathcal{H}_G$.
Simple as it is, this embedding of $\mathcal{H}_\mathrm{phys} \to \mathcal{H}_\mathrm{pre}$ already exhibits a holographic entropy formula.\footnote{This is not surprising in light of \cite{Harlow:2016vwg}. This bulk to boundary map is an isometry with complementary recovery.}
Say we have a state $\ket{\psi} \in \mathcal{H}_\mathrm{phys} \otimes \mathcal{H}_R$ for this two link Hilbert space and an arbitrary reference system $R$.
We have a corresponding state $\ket*{\widetilde{\psi}} \in  \mathcal{H}_\mathrm{pre} \otimes \mathcal{H}_R$.
Say we select one of the $\mathcal{H}_G$ factors in $\mathcal{H}_\mathrm{pre}$ and call it $B$, and we wish to compute the entropy of $B$ in the state $\ket*{\widetilde{\psi}}$.
As we know, $\mathcal{H}_\mathrm{phys}$ does not factorize, instead taking the form $\mathcal{H}_G \otimes \mathcal{H}_G / \mathrm{Gauss}$, which we can decompose as
\begin{equation}\label{eq:two_link_block_decomp}
	\mathcal{H}_\mathrm{phys} = \bigoplus_\upmu \left( \mathcal{H}_{b_\upmu} \otimes \mathcal{H}_{\overline{b}_\upmu} \right)~,
\end{equation}
where $\upmu$ labels eigenvalues of the ``electric'' operators.
Hence a general state in the bulk Hilbert space takes the form
\begin{equation}\label{eq:two_link_state_decomp}
	\ket{\psi} = \sum_{\upmu} \sqrt{p_\upmu} \ket{\psi_\upmu}~,
\end{equation}
where $\sum_\upmu p_\upmu = 1$ and
\begin{equation}
	\ket{\psi_\upmu} = \sum_{\mathsf{i},\mathsf{j} = 1}^{d_\upmu} c^\upmu_{\mathsf{i}\mathsf{j}} \ket{\upmu; \mathsf{i}\mathsf{j}}_{b_\upmu \overline{b}_\upmu}~,
\end{equation}
with $\sum_{\mathsf{i}\mathsf{j}} \lvert c^\upmu_{\mathsf{i}\mathsf{j}} \rvert = 1$.
The state in the (factorizing) boundary Hilbert space $\mathcal{H}_G \otimes \mathcal{H}_G$ takes the form
\begin{equation}
	J\ket{\psi} = \sum_{\upmu} \sqrt{p_\upmu} \sum_{\mathsf{i},\mathsf{j},\mathsf{k} = 1}^{d_\upmu} \frac{c^\upmu_{\mathsf{i}\mathsf{j}}}{\sqrt{d_\upmu}} \ket{\upmu; \mathsf{i}\mathsf{k}}_B \ket{\upmu; \mathsf{k}\mathsf{j}}_{\overline{B}}~.
\end{equation}
By direct computation we see that the entropy of $B$ equals
\begin{equation}
	S(B)_{J\ket{\psi}} = \sum_\upmu p_\upmu \log d_\upmu - \sum_\upmu p_\upmu \log p_\upmu + \sum_\upmu p_\upmu S(b_\upmu)_{\ket{\psi_\upmu}}~.
\end{equation}
We combine these last two terms into the ``algebraic von Neumann entropy'' $S(b;\mathrm{alg})_{\ket{\psi}}$, for algebra $\mathcal{A}_b = \oplus_\upmu \left( \mathcal{L}(\mathcal{H}_{b_\upmu}) \otimes \mathds{1}_{\overline{b}_\upmu} \right)$. 
Then we see this takes the form 
\begin{equation}\label{eq:single_vertex_qms}
	S(B)_{J\ket{\psi}} = \mel{\psi}{ \hat{A}_b }{\psi} + S(b; \mathrm{alg})_{\ket{\psi}}~,
\end{equation}
where
\begin{equation}
	\hat{A}_b = \sum_{\upmu} \log(d_\upmu) F_{\eth b}(\upmu)~.
\end{equation}
Here $F_{\eth b}(\upmu)$ is the central ribbon projector \eqref{eq:central_ribbons} in this case acting only on the one link intersected by $\eth b$, which is the only link in $\mathcal{H}_\mathrm{bulk}$. 

The case with $n > 2$ links is completely analogous. 
The only difference is that the blocks in the decomposition \eqref{eq:two_link_block_decomp} are now related to eigenvalues of the central ribbon operator \eqref{eq:central_ribbons}.
It is the total electric flux out of $B$ that matters in both cases -- in the two link case that just happens to be measured by a single link operator.
Therefore, for a connected region $B$ like the three links indicated here:
\begin{equation}\label{eq:single_vertex_Bribbon}
	\includegraphics[width=0.3\textwidth,valign=c]{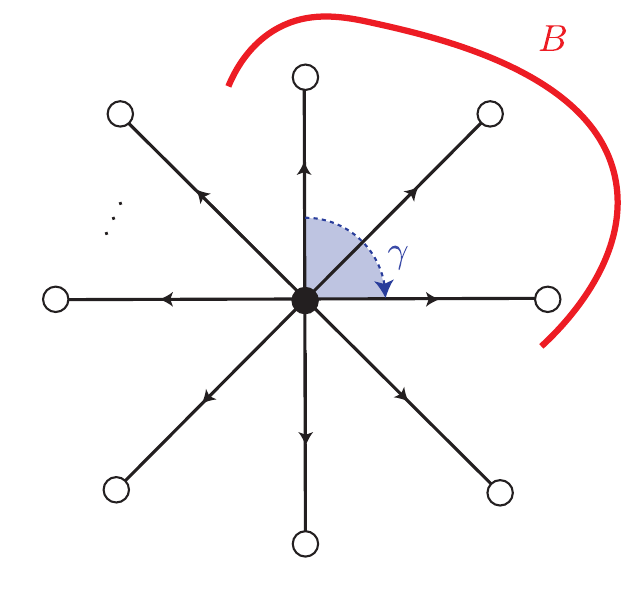}
\end{equation}
the formula becomes \eqref{eq:single_vertex_qms} with area operator \eqref{eq:connected_area}, and the path $\eth b$ labelled as $\gamma$ in \eqref{eq:single_vertex_Bribbon}.
If $B$ is disconnected, the formula is still \eqref{eq:single_vertex_qms} but the area operator is \eqref{eq:disconnected_area}.
The difference arises from the topology: when $B$ is disconnected there isn't one normal ribbon operator that acts on the links in $B$ but not its complement. Nonetheless, it is still physical to ask what is the net electric flux out of $B$, and that is what the fused ribbon operator \eqref{eq:disconnected_area} does.

Now we consider the case with matter.
It will be convenient to write explicitly the division of the map $V$ into multiple parts, say $V = T J$. 
The implicit first step is to map the given lattice into the reduced lattice -- this does not require an explicit operator in $V$ because both lattices represent the same $\mathcal{H}_\mathrm{phys}$.
The second step is to act $J$, which embeds $\mathcal{H}_\mathrm{phys} \to \mathcal{H}_\mathrm{pre}$.
The final step is $T$, which acts the random tensors.

Say we are given a state $\ket{\psi} \in \mathcal{H}_\mathrm{bulk}$. 
Letting $\bra{T} = \otimes_{i \in F_\mathrm{lol}} \bra{T_i}$, we can write $T = \mathds{1}_{\partial} \otimes \bra{T}$.
The factor $\mathds{1}_{\partial}$ indicates that $T$ acts trivially on $\mathfrak{f}_\partial$.
The state on $\mathcal{H}_\mathrm{bdry}$ is 
\begin{equation}
	V \ket{\psi} = (\mathds{1}_{\partial} \otimes \bra{T}) (J \ket{\psi}) = (\mathds{1}_{\partial} \otimes \bra{T}) \ket{J\psi}.
\end{equation}
and we can write
\begin{equation}
	\begin{split}
		V \ket{\psi}\bra{\psi} V^\dagger &= (\mathds{1}_{\partial} \otimes \bra{T}) \ket{J\psi}\bra{J\psi}  (\mathds{1}_{\partial} \otimes \ket{T}) \\
																		 &= \tr\left[ (\mathds{1}_{\partial} \otimes \ket{T}\bra{T}) \ket{J\psi}\bra{J\psi} \right]~.
	\end{split}
\end{equation}

Now given a boundary subregion $B \subseteq \mathfrak{f}_\partial$ let's compute the $k$th Renyi entropy $S_k(B)_{V\ket{\psi}}$.
This is defined as follows: given a state $\ket{\phi} \in \mathcal{H}_B \otimes \mathcal{H}_{\overline{B}}$, with $\rho \coloneqq \tr_{\overline{B}}\ket{\phi}\bra{\phi}$ the density matrix of $B$, we have
\begin{equation}\label{eq:renyi_entropy_def}
	S_k(B)_{\ket{\phi}} \coloneqq \frac{1}{1-k}\log \frac{\tr[\rho^k]}{\tr[\rho]^k}~,
\end{equation}
for $k \in (0,1) \cup (1,\infty)$.
We care about this because there is a way to compute it in random tensor networks using standard techniques \cite{Hayden:2016cfa, Cheng:2022ori}, and the limit $k \to 1$ is the von Neumann entropy.   

Let $S_k$ be the symmetric group on $k$ elements, and let $R(\pi)$ denote the representation of $\pi \in S_k$ on $\mathcal{H}^{\otimes k}$ which acts by permuting the kets according to $\pi$.
We will write $R_i(\pi)$ for $i \in \mathfrak{f}$ when it acts on ($k$ copies of) factor $i$, and also $R_B(\pi)$ when it acts on ($k$ copies of) $B$ (similarly for $\overline{B}$).
Let $\tau$ denote the cyclic $k$-cycle 
\begin{equation}
	\tau = (1 2 ... k)~.
\end{equation}
Now, notice
\begin{equation}
	\tr_B[\rho^k] = \tr[ (R_B(\tau) \otimes R_{\overline{B}}(e)) \ket{\phi}\bra{\phi}^{\otimes k} ]~.
\end{equation}
Furthermore, note an important property of Gaussian random tensors:
\begin{equation}\label{eq:random_tensor_average}
	\mathds{E}\left[\ket{T_i}\bra{T_i}^{\otimes k}\right] = \sum_{\pi \in S_k} R_i(\pi)~.
\end{equation}
Here $\mathds{E}$ denotes the expectation value over the ensemble of tensors.
Now let $\ket{\phi} = V\ket{\psi}$ and compute
\begin{equation}
	\begin{split}
		\mathds{E} \tr[\rho^k] &= \mathds{E} \tr[ (R_B(\tau) \otimes R_{\overline{B}}(e)) \ket{\phi}\bra{\phi}^{\otimes k} ] \\
													 &= \mathds{E} \tr[ (R_B(\tau) \otimes R_{\overline{B}}(e))(\mathds{1}_{\partial} \otimes \ket{T}\bra{T})^{\otimes k} \ket{J\psi}\bra{J\psi}^{\otimes k} ] \\
													 &= \tr[ (R_B(\tau) \otimes R_{\overline{B}}(e))\mathds{E}\left[(\mathds{1}_{\partial} \otimes \ket{T}\bra{T})^{\otimes k}\right] \ket{J\psi}\bra{J\psi}^{\otimes k} ]~.
	\end{split}
\end{equation}
To simplify further, we introduce the set
\begin{equation}
	S_{B,\sigma} \coloneqq \left\{ \{ \pi_{i} \}_{i \in F}: \pi_i \in S_k,~\text{where}~\pi_i = \sigma~\text{for}~i \in B~\text{and}~ \pi_i = e~\text{for}~i \in \overline{B}\right\}~,
\end{equation}
for any $\sigma \in S_k$.
An element of $S_{B,\sigma}$ is an assignment of $\pi_i$ to each $i \in \mathfrak{f}$, subject to the constraint that all $i \in \mathfrak{f}_\partial$ are fixed: $i \in B$ have $\pi_i = \sigma$ and $i \in \overline{B}$ have $\pi_i = e$, the identity element.
Using \eqref{eq:random_tensor_average}, we have
\begin{equation}
	\mathds{E} \tr[\rho^k] = \sum_{\{\pi_i\} \in S_{B,\tau}} \tr\left[ \bigotimes_{i \in \mathfrak{f}} R_i(\pi_i) \ket{J\psi}\bra{J\psi}^{\otimes k} \right]~.
\end{equation}
Now we will make an assumption to simplify the calculation:\footnote{This assumption will be valid for some states $\ket{\psi}$ but not all, see e.g. \cite{Akers:2020pmf}. In our model, nice states include those with large amounts of electric flux relative to matter entropy, and fairly simple flux patterns. We choose to make this assumption because it neglects subtleties that are not special to this model, and it allows us to more concisely demonstrate what's special about this model.} \emph{replica symmetry}.
Under the assumption of replica symmetry, every $i \in \mathfrak{f}$ is assigned either the element $\tau$ or $e$.
Let us denote by $\Delta$ the set assigned $\tau$ (which always includes $B$), and $\overline{\Delta}$ the set assigned $e$ (which always includes $\overline{B}$). 
Let $C(B)$ denote the set of all assignments $\Delta$.
Then we can further simplify
\begin{equation}\label{eq:after_replica_sym}
	\mathds{E} \tr[\rho^k] = \sum_{\Delta \in C(B)} \tr[(R_{\Delta}(\tau) \otimes \mathds{1}_{\overline{\Delta}}) \ket{J \psi}\bra{J\psi}^{\otimes k}]= \sum_{\Delta \in C(B)} e^{-(1-k)S_k(\Delta)_{\ket{J\psi}}}~.
\end{equation}
We can plug this into \eqref{eq:renyi_entropy_def} to obtain the average Renyi entropy.
A typical selection of tensors $\bra{T}$ will lead to an answer very close to this average \cite{Hayden:2016cfa}, and so we have effectively computed the Renyi entropy for a given draw of $\bra{T}$ with high probability. 
This is as much as we need to say about computing the Renyi entropy.

Now we turn to computing the von Neumann entropy $S = \lim_{k \to 1} S_k$. 
We simplify again by making a second assumption: the validity of the \emph{saddle point approximation},
\begin{equation}
	\mathds{E} \tr[\rho^k] \stackrel{!}{=} e^{-(1-k) S_k(\Delta_m)_{\ket{J \psi}}}~,
\end{equation}
where $\Delta_m \subseteq \mathfrak{f}$ is some fixed set of factors, and the symbol $\stackrel{!}{=}$ denotes the assumption.
Specifically, we assume that \eqref{eq:after_replica_sym} is well-enough approximated by a single $\Delta$ (which we call $\Delta_m$) for all $k$ such that we get approximately the right von Neumann entropy by neglecting all of the others:
\begin{equation}
	S(B)_{V\ket{\psi}} \stackrel{!}{=} S(\Delta)_{\ket{J \psi}} = \lim_{k \to 1} S_k(\Delta)_{\ket{J \psi}}~.
\end{equation}
This assumption is valid for many states and choices of $B$, as in traditional random tensor networks \cite{Hayden:2016cfa}, and in this note we will not attempt a full discussion of when it is valid.
From now on we will drop the $!$ above the $=$, leaving the saddle point approximation implicit.

To finish computing $S(B)_{V\ket{\psi}}$, we must evaluate $S(\Delta)_{\ket{J \psi}}$ for a given configuration $\Delta$.
This $\Delta$ has two parts: the degrees of freedom that are also in $\mathcal{H}_\mathrm{bulk}$ which we'll call $b_1$, and the part that's introduced by the embedding into the pre-gauged Hilbert space, which we'll call $b_2$. 
Exactly as in \eqref{eq:two_link_block_decomp}, the bulk Hilbert space decomposes into blocks, once again with $\upmu$ the eigenvalue of the ribbon operator acting on $\partial b$, associated to the total electric flux between $b$ and its complement.
So, we can again write
\begin{equation}
	\ket{\psi} = \sum_\upmu \sqrt{p_\upmu} \ket{\psi_\upmu}_{b_{1,\upmu} \overline{b}_{1,\upmu}}~.
\end{equation}
In the embedding into the pre-gauged Hilbert space, we tack on a factor that we'll write as $\ket{\chi_\upmu} \in \mathcal{H}_{b_{2,\upmu}} \otimes \mathcal{H}_{\overline{b}_{2,\upmu}}$, giving a state
\begin{equation}
	\ket{S \psi} = \sum_\upmu \sqrt{p_\upmu} \ket{\psi_\upmu}_{b_{1,\upmu} \overline{b}_{1,\upmu}} \ket{\chi_\upmu}_{b_{2,\upmu} \overline{b}_{2,\upmu}}~.
\end{equation}
Computing $S(\Delta)_{\ket{J \psi}}$ hence gives
\begin{equation}
	S(b)_{\ket*{\widetilde{\psi}}} = \sum_\upmu p_\upmu S(b_{2,\upmu})_{\chi_\upmu} + \sum_\upmu p_\upmu S(b_{1,\upmu})_{\psi_\upmu} - \sum_\upmu p_\upmu \log p_\upmu~.
\end{equation}
Recalling that $\tr_{b_{2,\upmu}}\ket{\chi_\upmu}\bra{\chi_\upmu} = \mathds{1}/d_\upmu$, and that under the saddle point approximation $S(B)_{V\ket{\psi}} = S(\Delta)_{\ket{J \psi}}$ for the $\Delta$ minimizing the right hand side, we finally arrive at
\begin{equation}
	S(B)_{V\ket{\psi}} = \mel{\psi}{\hat{A}}{\psi} + S(b; \mathrm{alg})_{\ket{\psi}}~, 
\end{equation}
where $\hat{A} = \oplus_\upmu \log(d_\upmu) \mathds{1}_{b_{1,\upmu}}$ and $S(b; \mathrm{alg})_{\ket{\psi}} = \sum_\upmu p_\upmu S(b_{1,\upmu})_{\psi_\upmu} - \sum_\upmu p_\upmu \log p_\upmu$.
This completes the argument that our tensor network satisfies the holographic entropy formula \eqref{eq:HEF_sec3}, if we start from the reduced lattice, i.e. neglecting the first step of \eqref{eq:map_steps}.

We now argue that the holographic entropy formula continues to hold if we start from a general lattice.
The first step of \eqref{eq:map_steps}, changing to the reduced lattice, does not change anything physical about $\mathcal{H}_\mathrm{bulk}$ or the fact that the minimization in \eqref{eq:HEF_sec3} is over which matter legs get included in the region $b$.
All that changes is what the physical operators look like on the lattice. 
When $b$ is a single connected region in the reduced lattice, its algebra $\mathcal{A}_b$ is straightforward, and maps to the natural subalgebra of a single connected region in the full lattice. 
However, in the more general case that $b$ in the reduced lattice is disconnected, the algebra $\mathcal{A}_b$ is different.
We explain the details in Appendix \ref{app:gen-regs}.
Intuitively, we have defined the algebra to include the net electric flux out of the region $b$ but not out of its sub-parts.

\subsection{Non-commuting area operators}\label{ssec:noncom_areas}
As pointed out in \cite{Bao:2019fpq}, traditional tensor networks fail to match the ``non-commuting area operators'' property of AdS/CFT.
Say in AdS/CFT we consider two boundary regions $A$ and $B$ that overlap, and a state $\ket{\psi}$ of the AdS bulk.
We can consider the holographic entropy formula for each:
\begin{equation}
	\begin{split}
		S(A) &= \mel{\psi}{\hat{A}_a}{\psi} + S(a)_{\ket{\psi}}~, \\
		S(B) &= \mel{\psi}{\hat{A}_b}{\psi} + S(b)_{\ket{\psi}}~,
	\end{split}
\end{equation}
where $a$ and $b$ are bulk regions that minimize the respective right hand sides.
It turns out that one can in general find a bulk state $\ket{\psi}$ such that $\hat{A}_a$ has very small fluctuations \cite{Akers:2018fow, Dong:2018seb} (or in which $\hat{A}_b$ has very small fluctuations).
Specifically, given some $\varepsilon > 0$ we can in general find a state such that $\mel{\psi}{\hat{A}^2_a}{\psi} - \lvert \mel{\psi}{\hat{A}_a}{\psi}\rvert^2 < \varepsilon$.\footnote{To be safe, one should really keep $\varepsilon$ sufficiently large relative to $\exp(-O(1/G))$, to stay within the regime of semiclassical gravity.} 
This was a fortunate discovery for traditional tensor networks, because their area operators have very small fluctuations (in fact zero, for most tensor networks).
One can imagine these ``fixed-area'' states of gravity are in this limited sense the correct AdS analog of traditional tensor networks.

However, it was pointed out in \cite{Bao:2019fpq} that this tensor network / fixed-area state analogy only goes so far.
In gravity, one can argue (using the gravitational constraint equations) that there does not exist a state $\ket{\psi}$ with very small fluctuations for the area operator of every boundary region simultaneously.
In particular, the area operators of overlapping regions cannot both have small fluctuations.
We can't find states with arbitrarily small fluctuations in both $\hat{A}_a$ and $\hat{A}_b$.
This is different from traditional tensor networks, which can have small fluctuations across all cuts simultaneously. 

Our tensor network improves this situation.
In it, it is not possible to find a non-trivial bulk state that is an eigenstate of overlapping boundary subregions.
This is because the area operators of overlapping boundary regions are overlapping ribbon operators (or fused ribbon operators) and will not commute, as explained in Section \ref{sec:DGmodels}.

\section{Multipartite edge modes}\label{sec:factorization}
In this section, we attempt to clarify the choices made in the construction of the tensor network in Section \ref{sec:TN}.
One might wonder, for example, why we chose (unconventionally) to construct the tensor network based on the reduced lattice rather than the original extended lattice.
Or where is the beloved relation between geometric area and the length of the cut?

Here, we motivate our choices by studying entanglement in the DG model.
As in all gauge theories, there are multiple prescriptions for defining the entanglement of a subregion.
Of all these prescriptions, we are specifically interested in those that involve a factorization map, i.e. embedding the gauge theory into a larger, factorizable Hilbert space.
This is because that's what tensor networks do!
The boundary Hilbert space in Section \ref{sec:TN} was the product of factors, $\mathcal{H}_G^{\otimes n}$.
The holographic entropy formula computes the entropy of these factors.
Therefore it is by definition computing the entropy using a factorization map -- the entropy of a subregion in a factorizable Hilbert space that the DG model has been embedded into. 

\emph{Edge modes} are what we call the new degrees of freedom present in the factorized Hilbert space but not the original gauge theory Hilbert space.
There are multiple known ways to define such factorization maps, each of which can be said to introduce different kinds of edge modes.
However, as we'll argue, most factorization maps will fail to match certain properties we need in the holographic entropy formula.
Their edge modes will have the wrong entanglement structure.
In fact, we can essentially narrow down which factorizations of the DG model could give certain desired properties, down to the particular factorization map we employ in Section \ref{sec:TN}.
This argument is the point of this section -- with the goal of motivating the perhaps surprising choices we made in Section \ref{sec:TN}.

Let us summarize our reasons for two of the choices we made in Section \ref{sec:TN}:
\begin{enumerate}
	\item Usually, entanglement in quantum double models is defined by a \emph{local} factorization map on the original lattice.
		We do not do this.

		The reasons are twofold.
		Using this local factorization map, the von Neumann entropy of a region $b$ contains the two terms $\left( \abs{\eth b} - 1 \right) \log \abs{G}$.
		The first term is a problem because -- as we argued in Section \ref{sec:intro} -- the size of the cut in the original lattice does not have a relevant gravitational interpretation.
		The second term is \emph{also} a problem because it makes the entanglement growth sub-extensive in a way that poorly matches AdS/CFT.
		We explain this in Section \ref{ssec:two-party}.
	\item We define the holographic map by first deforming the original lattice to the reduced one.

		We motivate this in Section \ref{ssec:non-comm-area} with a tension between the non-local bipartite factorization and crossing cuts.
\end{enumerate}
These do not appear as distinct steps in our construction, as the first choice is implemented by the second, but we motivate them separately.
There is also a third, conventional, choice, which is that we factorized the bulk by embedding $\mathcal{H}_{\mathrm{phys}}$ into the $\mathcal{H}_{\mathrm{pre}}$ of the reduced lattice.
This is a particular choice of edge modes on the reduced lattice.
Surprisingly, it turns out that this is forced upon us by the above two choices, as we outline in Section \ref{ssec:edge-modes-fin} and prove in Appendix \ref{app:unique}.

Let us begin by giving some more careful definitions.
Suppose we have many subalgebras $\mathcal{A}_{ 1\dots n } \subseteq \mathcal{L} (\mathcal{H}_{\mathrm{phys}})$ with some network of inclusion relations (which could be fairly complicated).
In general, many of these subalgebras may have non-trivial centers.

To calculate the entropy of an algebra with center, say $\mathcal{A}_{1}$, we first calculate a reduced density matrix by embedding $\mathcal{H}_{\mathrm{phys}} \hookrightarrow \mathcal{H}_{1} \otimes \mathcal{H}_{1'}$, using a \emph{factorization map} $J_{1}$ \cite{Jafferis:2019wkd,Hung:2019bnq}.
The entanglement entropy takes the form \cite{Soni:2015yga,Harlow:2016vwg}.\footnote{
	Alternatively, we can work abstractly in the language of generalized traces \cite{Akers:2023fqr}.
	All of the literature on bipartite entanglement with centers \cite{Buividovich:2008gq,Donnelly:2011hn,Casini:2013rba,Soni:2015yga,Harlow:2016vwg,Delcamp:2016yix,Lin:2018bud} can be translated into this language.
	We have not attempted to translate the multipartite story below into this language; it is not straightforward.
}
\begin{equation}
	S (\mathcal{H}_{1})_{J_{1} \ket{\psi}} = \mel{\psi}{A (J_{1})}{\psi} + S (\mathcal{A}_{1}; \mathrm{alg})_{\ket{\psi}},
	\label{eqn:bipartite}
\end{equation}
for some operator $A (J_{1})$ in the center $\mathcal{A}_{1} \cap \mathcal{A}_{1}'$.

A multipartite factorization map $J$ is an embedding of $\mathcal{H}_{\mathrm{phys}} \hookrightarrow \mathcal{H}_{\mathrm{fact}}$ such that \emph{all} the algebras $J \mathcal{A}_{ 1\dots n } J^{\dagger}$ act on tensor factors of the Hilbert space $\mathcal{H}_\mathrm{fact}$.
As we will see below, there can be multipartite factorization maps that are not built out of products of bipartite factorizations.
When this is the case, we say that the map introduced `multipartite edge modes.'
One point of this section is to argue that if we want to incorporate a DG model into a tensor network as a toy model for gravity, then the edge modes we introduce should be multipartite.

\subsection{Bipartite factorization} \label{ssec:two-party}

\subsubsection*{Issue with local factorization} \label{sssec:donn}
Previous work on entanglement in gauge theories has introduced a factorization map \cite{Kitaev:2005dm,Levin:2006zz,Buividovich:2008gq,Donnelly:2011hn,Casini:2013rba,Soni:2015yga,Lin:2018bud}, which consists of \eqref{eqn:fact-map-1} for every link in $\eth b$.
Let us call this the local factorization map.
We now explain why this map has undesirable properties for building toy models of gravity, expanding on the discussion in \cite{Delcamp:2016eya}.

Let $b$ be a subregion of $D^{2}$, and let there be no bulk charges.
The von Neumann entropy with the local factorization map is \cite{Kitaev:2005dm,Levin:2006zz,Chen:2018pda,Belin:2019mlt}
\begin{equation}
	S_{\mathrm{loc}} (b) =  \abs{\eth b} \log \abs{G} - \log \abs{G} + S_{\mathrm{nonloc}} (b),
	\label{eqn:old-tee}
\end{equation}
where $S_{\mathrm{nonloc}} (b)$ is going to be the entropy in our factorization defined momentarily.\footnote{In the absence of matter, $S_{\mathrm{nonloc}}$ can be thought of as the entropy of boundary degrees of freedom \cite{Belin:2019mlt}, since the central ribbon operator at $\eth b$ can be deformed to hug the boundary $\tilde{\partial} b$.}

Both of the first two terms are problematic for appearing in a holographic entropy formula.
The extensive first term in \eqref{eqn:old-tee} depends on the number of links in the lattice.
This is the length of $\eth b$ in the discrete metric we have introduced to regulate the topological field theory.
However, the area of the extremal surface is just a specific Wilson line in the Chern-Simons formulation \cite{Ammon:2013hba,Castro:2018srf}.\footnote{
	For completeness, let us briefly describe the Wilson line introduced in these works.
	Denote by $\partial B \coloneqq \partial \eth b = \partial \tilde{\partial} b$ the corners of $b$.
	$\partial B$ consists of two points on the boundary, and the Wilson line stretches between them.
	Then, the Wilson line is the Euclidean quantum mechanical path integral of a particle on $SL(2,\mathds{R}) \times SL(2,\mathds{R})$, propagating along $\eth b$.
	The irrep of the Wilson line is encoded in the mass and spin of the particle, and the initial and final states of the particle at the two points on $\partial B$ are defined by Ishibashi states within the highest-weight irrep the particle lives in.
	This particle localizes to a saddle-point in the large mass limit, and the saddle-point corresponds to a bulk geodesic; the on-shell action is proportional to the length of the geodesic.

	Our central ribbons are a little different from the Wilson lines, in that they project onto certain values of the irrep flowing through $\eth b$.
	Furthermore, our central ribbons should be valued not in highest-weight irreps but in principal series irreps \cite{McGough:2013gka,Mertens:2022ujr,Wong:2022eiu}.
	However, both our central ribbon and their Wilson line measure the same quantity; in the presence of matter, the HRT formula with both constructions become quantum minimal surface formulas as in Section \ref{sec:TN}.
	So, the construction of \cite{Ammon:2013hba,Castro:2018srf} is enough to show that area is an operator in the algebra of the topological field theory, though the precise operator might be harder to pin down at the full quantum level.
	\label{fn:nabilandra}
}
The central ribbon measuring the amount of electric flux flowing out of $b$ is such an operator, and its contributions to the entropy show up only in  $S_{\mathrm{nonloc}}$.

Suppose you are not convinced by this argument, taking the perspective that the whole reason tensor networks have been useful is the area law entanglement.
But then you run into a second problem, which is the second term in \eqref{eqn:old-tee}.
If we want to interpret the first term as $A/4 G_{N}$, the second term is an entirely unwelcome $- 1/4 G_{N}$.
Furthermore, this negative term is the famous topological entanglement entropy \cite{Kitaev:2005dm,Levin:2006zz}, and its value is a state-independent constant that depends only on the anyon fusion algebra, so we cannot even get rid of it.
There is no analog of this violation of extensivity in the HRT formula.

This leads to unwelcome behaviour not just in the von Neumann entropy, but also in other entropic quantities, as pointed out for example in \cite{Casini:2019kex}.
Consider three contiguous boundary intervals $B_{1,2,3}$ in a 2d CFT.
The tripartite information is
\begin{equation}
	I_{3} (1:2:3) \coloneqq S_{1} + S_{2} + S_{3} - S_{12} - S_{23} - S_{13} + S_{123},
	\label{eqn:tpi}
\end{equation}
where $S_1 = S(B_1)$ etc.
The classic calculation of this quantity in AdS/CFT \cite{Hayden:2011ag} shows that\footnote{
	If the three regions have lengths $l_{1},l_{2},l_{3}$, the tripartite information in the vacuum is
	\begin{equation*}
		I_{3} = \frac{c}{3} \log \frac{l_{1} l_{2} l_{3} (l_{1} + l_{2} + l_{3})}{(l_{1} + l_{2}) (l_{2} + l_{3}) l_{2} (l_{1} + l_{2} + l_{3})} \xrightarrow{l_{2} \to 0} - \frac{c}{3} \left( \frac{1}{l_{1}} + \frac{1}{l_{3}} \right) l_{2}.
	\end{equation*}
}
\begin{equation}
	S_{2} \ll S_{1,3} \quad \implies \quad I_{3} \propto S_{2}
	\label{eqn:tpi-lim}
\end{equation}

\begin{figure}[h!]
	\centering
	\includegraphics[width=.5\textwidth]{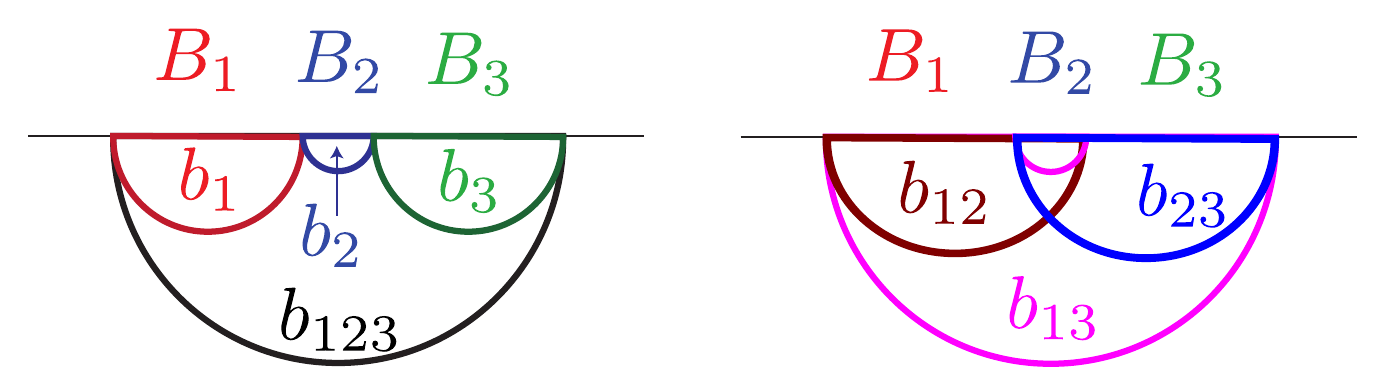}
	\caption{The different bulk regions that appear for the calculation of the tripartite information. Positive contributions on the left and negative ones on the right. The entanglement wedge and its boundaries are colour-coded. While the boundary $\partial \Sigma$ is drawn as a line for simplicity of notation, in the actual calculation, we take it to be $S^{1}$.}
	\label{fig:tpi-regions}
\end{figure}

Now assume that we have a tensor network with a holographic entropy formula where we minimize \eqref{eqn:old-tee} over bulk regions of the correct homology class, using for example the construction of \cite{Akers:2024ixq}.
Assume that the entanglement wedges of the various regions are topologically the same as you would find in AdS/CFT, see Figure \ref{fig:tpi-regions}.
Then, the tripartite information is
\begin{equation}
	I_{3} = - \log \abs{G} + I_{3,\eth} + I_{3,\mathrm{nonloc}},
	\label{eqn:donn-tpi}
\end{equation}
where $I_{3,\eth}$ is the contribution of the extensive term (which behaves similarly to the area term in gravity) and the last term is the same combination of $S_{\mathrm{nonloc}}$.
Both of these last two terms satisfy \eqref{eqn:tpi-lim}.\footnote{The second satisfies \eqref{eqn:tpi-lim} for the same reason as the holographic entropy. For the third term, note that as we shrink $B_2$ the volume of $b_2$, and therefore its maximum entropy shrinks also.}
However, the first term does not, so \eqref{eqn:donn-tpi} overall fails to satisfy \eqref{eqn:tpi-lim}.

To prove \eqref{eqn:donn-tpi}, we can use \eqref{eqn:old-tee} for the regions $B_{1,2,3}, B_{12}, B_{23}, B_{123}$, since all of their entanglement wedges have the same topology.
While the formula was not originally proven for the topology of $b_{13}$, which is a strip connecting $B_{1}$ to $B_{3}$, it remains true by the following argument, following \cite{Levin:2006zz,Soni:2015yga}.
Let us recall the derivation of the form \eqref{eqn:old-tee}, say for $b_2$.
We deform the lattice as in the left of Figure \ref{fig:tee-old-calc}, so that there is one plaquette in the bulk; the controlled unitaries from Appendix \ref{app:dg_model_properties} that implement this deformation act only within $b_2$.
The Hilbert space is labelled in terms of the group element around the half-loops, labelled as grey arrows.
Each half-loop at $\eth b_2$ is completely unconstrained in the density matrix, except that the product of all of them is $e$, by flatness of the plaquette.
Going to a sector of fixed holonomies at $\partial \Sigma$, the reduced density matrix is maximally mixed over a $\abs{G}^{\abs{\eth b_2} - 1}$-dimensional subspace, labelled by half-loop configurations satisfying the single constraint.
The last term in \eqref{eqn:old-tee} is the entropy of these boundary holonomies.
This argument only relies on the bulk region having topology $D^{2}$, so it goes over to the strip of $b_{13}$.

\begin{figure}[h!]
  \centering
  \includegraphics[width=.9\textwidth,valign=c]{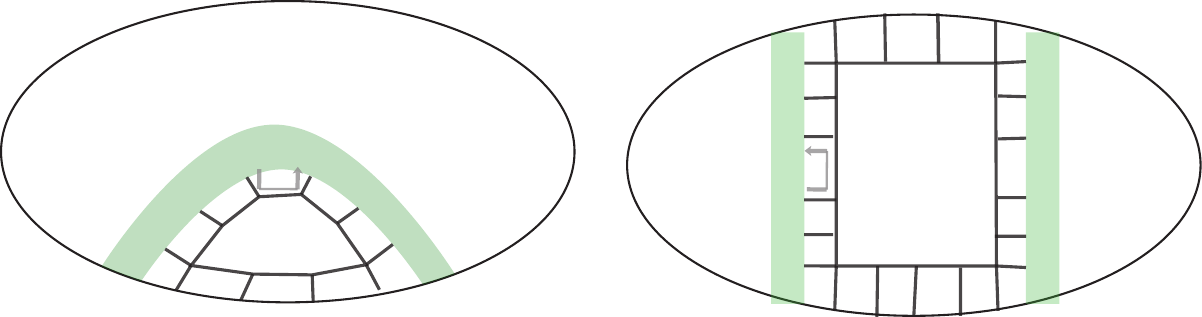}
  \caption{A lattice that makes the calculation of entanglement of the region bounded by the thick green bars and the boundary easy. We have used lattice deformations to make the bulk of the region a single plaquette. The flatness constraint is imposed for this plaquette, since it is in the bulk.
	We decompose the state in terms of the `half-loop' holonomies, shown as grey arrows.
	Left: the subregion intersects the physical boundary once. Right: it intersects the physical boundary twice, but that doesn't change the argument.
	}
  \label{fig:tee-old-calc}
\end{figure}

\subsubsection*{A non-local factorization map} \label{sssec:us}
We define a new factorization map, following \cite{Delcamp:2016eya}, that is non-local on $\eth b$.
Let us describe it in some detail for a connected subregion $b$ on $D^{2}$.  
Use the elementary lattice moves in Section \ref{sec:lattice_in} to make $\eth b$ cut a single link as in Figure \ref{fig:fact-map-link}.
In that case, the central ribbon $F(\upmu)$ is just a projector onto that link being in the irrep $\upmu$.
And then the factorization map is simply
\begin{equation}
	J \ket{\upmu,\mathsf{i} \mathsf{j}} = \frac{1}{\sqrt{d_{\upmu}}} \sum_{\mathsf{k}} \ket{\upmu,\mathsf{i} \mathsf{k}} \ket{\upmu,\mathsf{k} \mathsf{j}}
	\label{eqn:our-fact-map-concrete}
\end{equation}
on this link.
This is the usual (local) factorization map of lattice gauge theory \cite{Buividovich:2008gq,Donnelly:2011hn} for the single link; but the lattice moves we did first make it a different way to factorize the original lattice.

\begin{figure}[h!]
	\centering
	\includegraphics[width=.3\textwidth]{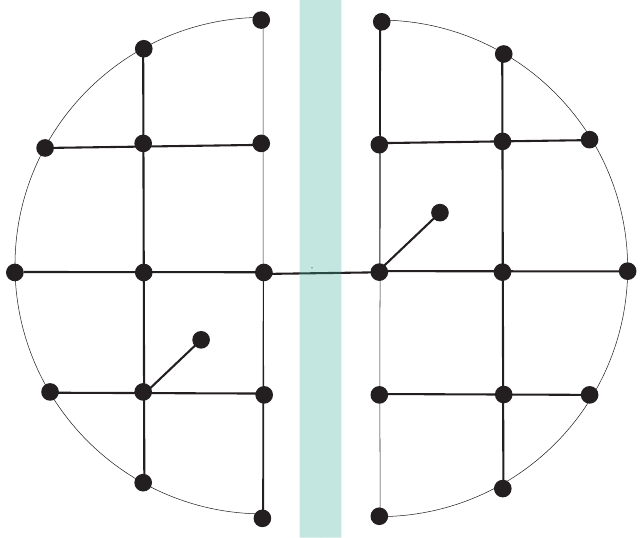}
	\caption{We deform a lattice so that $\eth b$ is a single link.}
	\label{fig:fact-map-link}
\end{figure}

Another way to think about this is in terms of the original lattice.
In that case, the edge modes we introduce are collective modes that live on the entire cut rather than local degrees of freedom at different points on the cut, in sharp contrast to the local factorization map.
The lattice deformation is helpful because it allows us to give a local description of this non-local edge mode.

We show in Appendix \ref{app:ent-calcs} that this factorization map, unlike the local one, has the property \eqref{eqn:tpi-lim} (assuming that the tripartite information is negative).

\subsection{Bipartite factorizations can fail to commute} \label{ssec:non-comm-area}

Another issue with bipartite factorization maps is that they generally don't combine uniquely into a multipartite factorization map.

Let $\mathcal{A}$ be an algebra acting on a finite dimensional Hilbert space $\mathcal{H}$.
Assume $\mathcal{A}$ has a non-trivial center, i.e. $\mathcal{A} \cap \mathcal{A}'$ contains more than multiples of the identity operator on $\mathcal{H}$.
Any operator in this center can be written as a linear combination of a set of commuting projectors $P_{\upalpha}$ (see e.g. \cite{Harlow:2016vwg}).
They must commute, since the center is a commutative algebra.

Let $J: \mathcal{H} \to \mathcal{H}_{B} \otimes \mathcal{H}_{\overline{B}}$ be a factorization map with respect to $\mathcal{A}$, i.e. an isometry such that $\mathcal{A}$ acts on $\mathcal{H}_B$ and its commutant $\mathcal{A}'$ acts on $\mathcal{H}_{\overline{B}}$.\footnote{
	In the language of \cite{Harlow:2016vwg}, $J$ defines an operator-algebra quantum erasure code with complementary recovery, with respect to $\mathcal{A}$.
}
Such a factorization map can be written in the following way.
For $\ket{\psi} \in \mathcal{H}$,
\begin{equation}
	J \ket{\psi} = \sum_{\upalpha} P_{\upalpha} \ket{\psi}_{B_p^{\upalpha} \overline{B}_p^{\upalpha}} \otimes \ket{\chi_{\upalpha}}_{B_f^{\upalpha} \overline{B}_f^{\upalpha}}~,
	\label{eqn:fact-map-form}
\end{equation}
where we have used a decomposition $\mathcal{H}_B = \oplus_{\alpha} (\mathcal{H}_{B_p^{\upalpha}} \otimes \mathcal{H}_{B_f^{\upalpha}}) \oplus \mathcal{H}_{B_r}$, and similarly for $\mathcal{H}_{\overline{B}}$.
The central projectors $P_{\upalpha}$ act on $\mathcal{H}_{B_1} \otimes \mathcal{H}_{\overline{B}_1}$.
What's important here is that the central projectors $P_{\upalpha}$ enter crucially into the definition of the factorization map $J$.

Now say we have two algebras $\mathcal{A}_1$ and $\mathcal{A}_2$ both acting on $\mathcal{H}$, with central projectors $P_{\upalpha_1}$ and $P_{\upalpha_2}$ respectively.
Let 
\begin{equation}
\begin{split}
    J_1: \mathcal{H} &\to \mathcal{H}_B \otimes \mathcal{H}_{\overline{B}} \\
    J_2: \mathcal{H} &\to \mathcal{H}_C \otimes \mathcal{H}_{\overline{C}}
\end{split}
\end{equation}
be factorization maps, with respect to $\mathcal{A}_1$ and $\mathcal{A}_2$ respectively. 
We are interested in acting both factorizations on $\mathcal{H}$. 
This can work as follows.
Say we first act $J_1$. 
How does $J_2$ act on $\mathcal{H}_B \otimes \mathcal{H}_{\overline{B}}$?
Because $J_1$ was an isometry, we can consider the operator $J_2$ pulled through $J_1$.
More explicitly, the action of $J_{2}$ on $J_{1} \mathcal{H}$ is defined using the projectors $J_{1} P_{\upalpha_{2}} J_{1}^{\dagger}$.
This allows us to define $J_2 J_1 \mathcal{H}$, which embeds $\mathcal{H}$ into a Hilbert space with more than two factors.
However, because $[P_{\upalpha_1}, P_{\upalpha_2}] \neq 0$, in general $J_2 J_1 \mathcal{H} \neq J_1 J_2 \mathcal{H}$!
We cannot in general construct a unique multipartite factorization map by the product of all bipartite factorizations.
Factorizing in a different order leads to a different final Hilbert space.

Let us see this concretely in the DG model.
Split the disk into four regions $b_{1\dots4}$ as in Figure \ref{fig:latt-def-inc}.
We are interested in the subregions $b_{1} b_{2}$ and $b_{2} b_{4}$.
The non-commutativity of the bipartite factorization maps for these two subregions can be seen in the fact that the elementary lattice moves required to achieve each bipartite factorization are different.
For example, if we change to a lattice with only one link along $\eth (b_{1} b_{2})$, then we have two semicircles that separately need to factorize to split $b_2$ from $b_1$ and $b_4$ from $b_3$, which would create \emph{two} total links along $\eth (b_{2} b_{4})$. 
We cannot draw a lattice where both $\eth (b_{1} b_{2})$ and $\eth (b_{2} b_{4})$ consist of only one link.

\begin{figure}[h!]
	\centering
	\includegraphics[width=.5\textwidth]{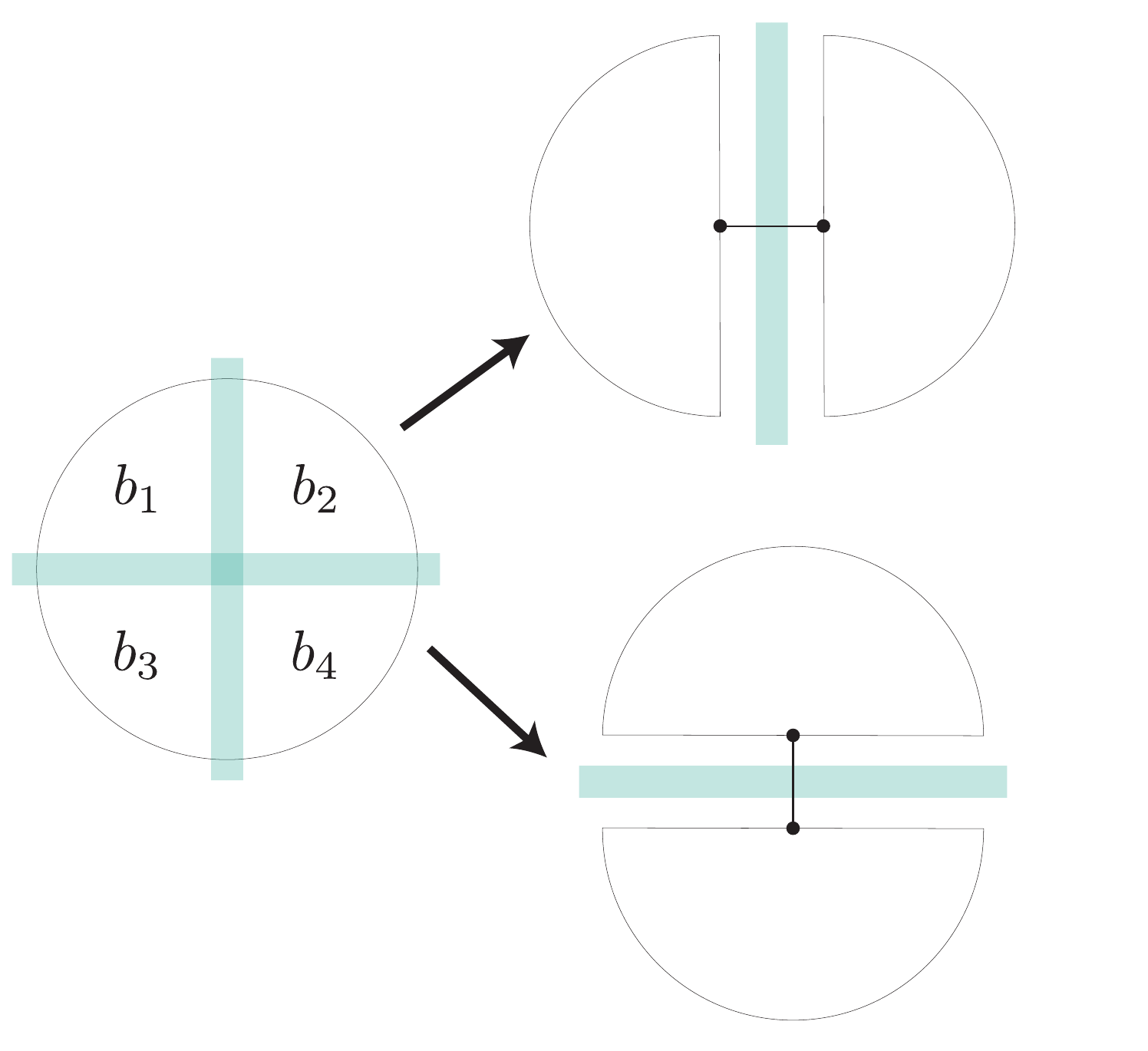}
	\caption{The lattice deformations that we need to factorize intersecting regions are incompatible.}
	\label{fig:latt-def-inc}
\end{figure}

Though it might seem cartoonish, this problem is the central one.
To make it more precise, take a state where there are fixed irreps $\upmu_{1\dots4}, \upmu_{12}$ flowing out of $b_{1\dots 4}, b_{1} b_{2}$.
Furthermore, take each of $b_{1},\dots b_{4}$ to consist of a single link, so we have four links total.
We can decompose this state into a superposition of states with fixed irrep $\upmu_{23}$ flowing out of $b_{2} b_{3}$ using the $F$-matrices \cite{Levin:2004mi}\footnote{
	Our convention for the $F$-matrices do not exactly match those in \cite{Levin:2004mi}.
	This will not affect any of the following discussion.
	The only important thing is that \eqref{eqn:F-move} is a unitary change of basis.
}
\begin{equation}
	\ket{\includegraphics[height=4em,valign=c]{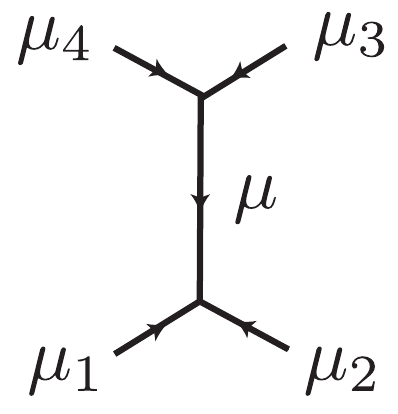}} = \sum_{\upnu} F^{\upmu_{1} \upmu_{2} \upmu}_{\upmu_{3} \upmu_{4} \upnu} \ket{\includegraphics[height=4em,valign=c]{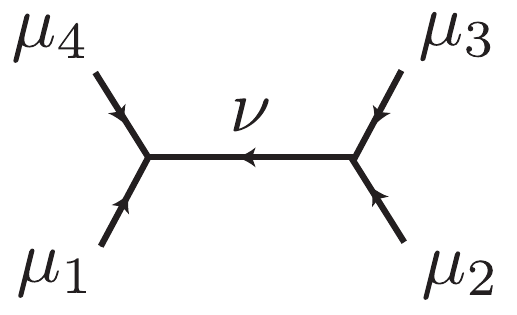}}.
	\label{eqn:F-move}
\end{equation}
For the quantum double model, the $F$-matrix is the $6j$-symbol of $G$; more generally, it is part of the definition of the tensor category that defines the topological phase \cite{panangaden2011categorical}.
For a non-Abelian theory, the right hand side generically has many non-zero terms.
Thus, we cannot simultaneously fix the irreps flowing out of $b_{1} b_{2}$ and $b_{2} b_{3}$.
This is a manifestation of the fact that the corresponding central ribbons don't commute, as we argued in Section \ref{sec:subalg_and_centers}.

We might try to deal with this by introducing one set of edge modes for every segment $\eth b_{i} \cup \eth b_{i+1}$ of the subregion boundaries.
For example, in Figure \ref{fig:bond-const}, we could factorize all the links crossing the blue lines.
But, as detailed in \cite{Bonderson:2017osr}, factorizing more than one link cutting $\eth b_{1}$, even if it is just two links, gives rise to the negative contribution in \eqref{eqn:old-tee}.
The entropy will again run afoul of \eqref{eqn:tpi-lim}.

\begin{figure}[h!]
	\centering
	\includegraphics[width=.4\textwidth]{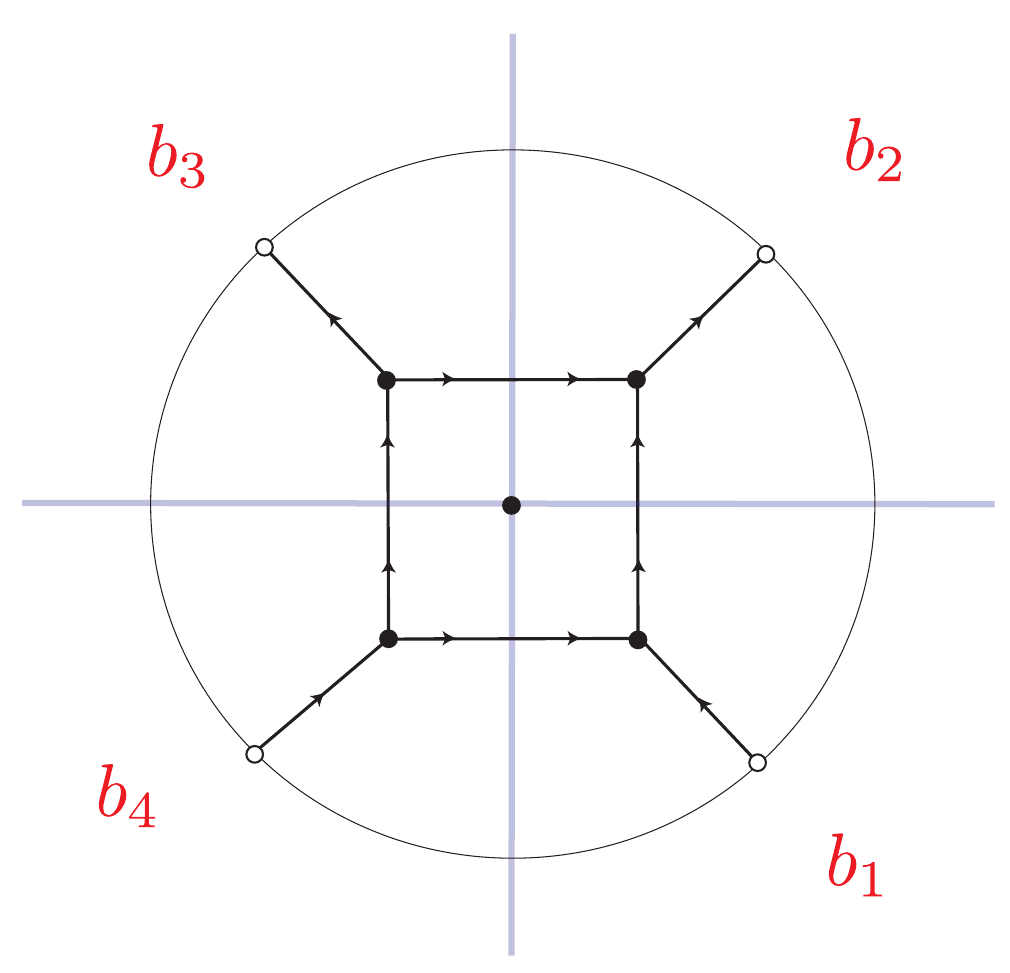}
	\caption{In local factorization maps, we can introduce edge modes for every segment of the boundary of a region.}
	\label{fig:bond-const}
\end{figure}

This is an important obstacle to constructing a tensor network with non-commuting area operators.
A tensor network acts on a bulk Hilbert space that does not factorize, and maps it to a boundary Hilbert space which is simultaneously factorized across all partitions.
It is implementing some multipartite factorization map -- but which one? 
As we have argued, it cannot be some product of bipartite factorizations. 
It must be something more sophisticated and inherently multipartite, introducing ``multipartite edge modes.''\footnote{The connection between non-commuting modular Hamiltonians and multipartite entanglement was also explored in \cite{Kim:2021gjx,Kim:2021tse,Zou:2022nuj}.}

We accomplish this in the tensor network of Section \ref{sec:TN} by embedding $\mathcal{H}_\mathrm{phys}$ into the $\mathcal{H}_\mathrm{pre}$ of the reduced lattice.
This different kind of factorization is the main idea in this work that allowed us to define tensor networks with the desired properties.

On the reduced lattice, \eqref{eqn:F-move} is modified in a simple way.
Instead of being a relation between different lattices, it is now a relationship between different bases for $\mathcal{H}_{\mathrm{phys}}$.
The relation remains true in $\mathcal{H}_{\mathrm{pre}}$, because the physical state is given by the fusion of the four irreps via Clebsch-Gordan coefficients.
Taking the inner product of \eqref{eqn:F-move} with $\bra{\upmu_{1},\mathsf{i}_{1},\mathsf{j}_{1}; \dots \upmu_{4},\mathsf{i}_{4},\mathsf{j}_{4}} J$ makes it a relation between Clebsch-Gordan coefficients and $6j$-symbols of the group that is known to be true.

\subsection{Bootstrapping the multipartite edge modes} \label{ssec:edge-modes-fin}
The above discussion explains why our TN is constructed using the reduced lattice.
Now we ask: can we further justify the factorization map we use on the reduced lattice?
This is not possible for bipartite edge modes, as noted in \cite{Penington:2023dql}.
It turns out that in the multipartite case the factorization map is much more constrained.
In Appendix \ref{app:unique}, we prove that the factorization map is unique given certain assumptions. 
The assumptions are that the edge modes introduced by the factorization map depend only on the total irrep flowing out of the region, and that the edge mode state for the identity irrep is factorized.
Let us give an overview of the logic here.

The basic idea is that we can take all possible equations of the form \eqref{eqn:F-move} for an arbitrary number of boundary vertices and apply the factorization map.
Each of these equations becomes an equation for the matrix elements of the factorization map $J$, and the only solutions to this whole set of equations is the CG coefficients.
This is related to what is known as Tannaka-Krein duality, which says that a group can be reconstructed from the fusion rules and $F$-matrices of its irreps.
A helpful review is \cite{Ferrer:2023}.

We prove a weaker statement than either of the above, but it does say that the required edge modes are (up to local unitaries) those we use in our factorization map.
We take reduced lattices with $2n$ boundary links, all in the irrep $\upmu$, such that they fuse to the identity in pairs.
This state can be written in two ways
\begin{equation}
	\sbox0{\ \includegraphics[height=5em,valign=c]{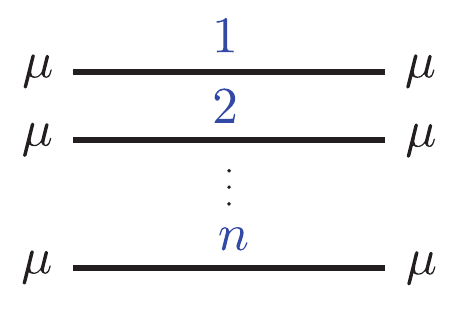} } \mathopen{\resizebox{1.2\width}{\ht0}{$\Bigg|$ }} \usebox{0} \mathclose{\resizebox{1.2\width}{\ht0}{$\Bigg\rangle$ }}
	=\sum_{\upnu} \sqrt{\frac{d_{\upnu}}{d_{\upmu}^{n}}} \sum_{a=1}^{\left[ \mathds{N}_{\upmu}^{n} \right]_{\upmu}^{\upnu}} 
	\sbox1{\ \includegraphics[height=5em,valign=c]{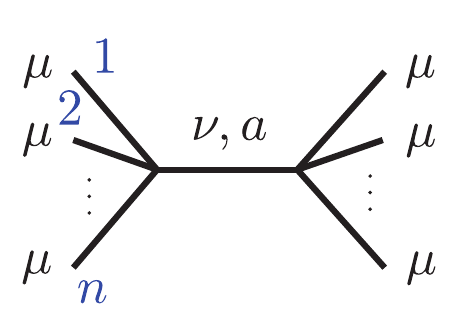} } \mathopen{\resizebox{1.2\width}{\ht0}{$\Bigg|$ }} \usebox{1} \mathclose{\resizebox{1.2\width}{\ht0}{$\Bigg\rangle$ }}.
	\label{eqn:n-trick-main-text}
\end{equation}
Then, calculating the entropy in two ways, we find consistency only when the edge modes are maximally mixed with rank $d_{\upmu}$.

This concludes our motivation for our tensor network construction.
We had to transform to the reduced lattice because otherwise we would either run afoul of holographic tripartite information or not know how to uniquely factorize overlapping regions.
The factorization map on the reduced lattice had to be the one we used because the multipartite edge modes in the DG model are highly constrained.

\section{Gravity interpretation} \label{sec:hol-int}
Unlike traditional tensor networks, our tensor networks need \emph{not} be a tiling of hyperbolic space.
Instead, the connectivity in our tensor networks is analogous to the geometry on which a TQFT lives; that is, it's not fundamentally important.
An advantage is what we've argued in this paper: edge modes with more gravity-like properties, leading for example to non-commuting area operators.
The disadvantage is that the physical, gravitational interpretation of the state is less clear.
In this (largely qualitative) section, we aim to clarify this interpretation.

We use the fact that 3d general relativity is genuinely a topological theory, albeit one that is not included in the set of DG models.
However, we expect that qualitative aspects of our results do generalize (with appropriate refinements).
The difficulties with overlapping bipartite factorization that we encountered in Section \ref{ssec:non-comm-area} are a consequence of non-trivial $F$-matrices, which exist also in other topological theories and also GR \cite{Belin:2023efa,Chen:2024unp}.
There is an interesting similarity between the algebraic structure of our tensor network and that introduced in \cite{Casini:2019kex}, as we will argue below.
Finally, we view the uniqueness of edge modes formalized in theorem \ref{thm:uniqueness}, which holds for a large class of topological phases (though not GR), as a toy model for the reason that low-energy gravity `knows' about the UV entropy but not its microstates.
Thus, while the rest of this section has not yet been made precise, we expect that it is possible to do so.

\subsection*{Comparison with Conventional RTNs} \label{ssec:latt-sup}
The first major difference between our tensor networks and traditional ones is that in the traditional ones each link is a fixed segment of a fixed curve, and its bond dimension is interpreted as the area of the segment.
Even when the area is a non-trivial operator it is a sum of areas of segments with fluctuating bond dimension.
In our tensor networks, however, the area of the entire quantum minimal surface is the expectation value of a single, non-local, topological operator.
As justification for this claim, we point to the works \cite{McGough:2013gka,Ammon:2013hba,Castro:2018srf,Mertens:2022ujr,Wong:2022eiu,Chen:2024unp}, which showed that the topological Wilson lines and irrep data are related (in the semi-classical limit) to the area of quantum extremal surfaces and not to arbitrary surfaces.\footnote{
	\cite{Ammon:2013hba,Castro:2018srf} showed that their Wilson lines localised to geodesics, but their Wilson lines are different from our area operator, see footnote \ref{fn:nabilandra}.
	\cite{McGough:2013gka,Mertens:2022ujr,Wong:2022eiu,Chen:2024unp} showed that factorizing across cuts of specific topological classes (roughly the same as those we have considered) gave entropy equal to the area of the QES in the same topological class.
	There is less work in the presence of matter; in two dimensions, \cite{Lin:2021tlr} established the relationship between the irrep flowing across the cut and the area of the QES of the same topological class.
	We will also discuss an important subtlety in this statement due to matter around Figure \ref{fig:area-conf}.
}
There is nothing that corresponds to the area of a fixed segment of a QES, or the area of a non-extremal surface.

The area being a topological operator means that it takes the same value on any two topologically equivalent cuts.
This can seem confusing, however, since multiple topologically equivalent cuts on the same lattice then correspond to the same geodesic in the bulk.
The apparent puzzle can be resolved by remembering that the gauge field in the CS description is the metric, meaning that the group element on a link (which has non-zero fluctuations) is related to a length.\footnote{
	It's not quite a length, since the gauge field also has the spin connection in it.
	It exactly becomes a length for a geodesic; more generally, it is the length of a topologically equivalent geodesic.
	The connection between length of a geodesic and a mixture of length and spin connection (related to extrinsic curvature) of a different curve is the subject of \cite{Soni:2024}.
}
Secondly, any physical state has non-zero fluctuations of the group element, due to Gauss's law.
Thus, a state in the topological theory on the lattice describes a \emph{superposition} of embeddings of the lattice into the spacetime.

An important caveat with this last statement, and also the rest of this section, is that we do not have a precise gravitational interpretation of the fully quantum topological theory.
So, all of these statements are true only in the semi-classical limit of the TQFT, where the coupling constant goes to zero and we restrict attention to coherent states.

This superposition presumably includes all ways of embedding the lattice into a Wheeler-deWitt patch.
The reason to believe this is that the gauge constraints of the Chern-Simons action are (on-shell) equivalent to the Hamiltonian and momentum constraints of gravity, and both of these generate translations of points on a Cauchy slice; momentum constraints generate diffeomorphisms of the Cauchy slice, and the Hamiltonian constraint generates translations of the Cauchy slice within the WdW patch.
See Figure \ref{fig:diffeos}.
Since we work with gauge-invariant states, they must be dual to superpositions over all embeddings of the lattice in the WdW patch.
This also explains why two different ribbons on different sets of links can correspond to the same geodesic.
The expectation value of our area operator on a certain cut agrees with the expectation value of a geodesic, but that does not mean that the links that the operator is supported on are dual to the geodesic.

\begin{figure}[h!]
	\centering
	\includegraphics[width=.8\textwidth]{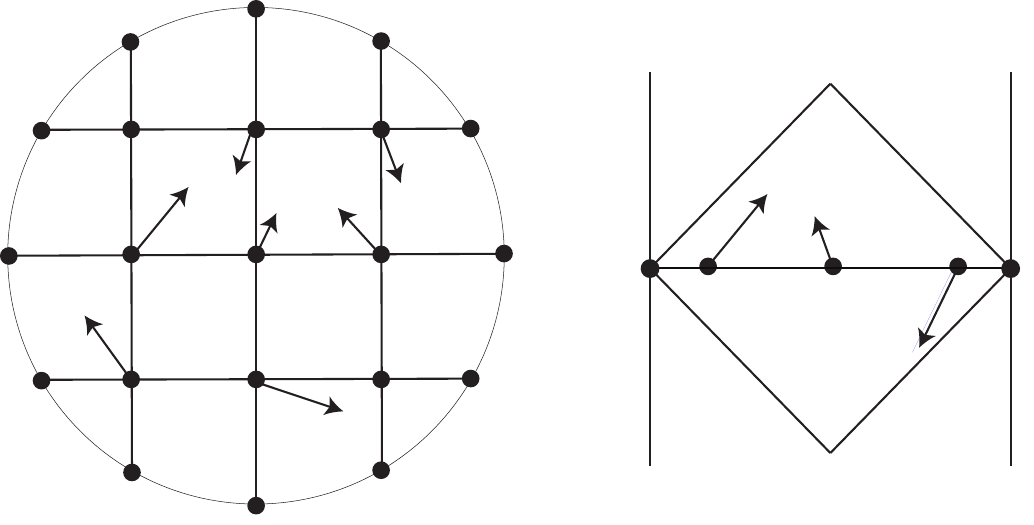}
	\caption{If we imagine the lattice as being embedded in AdS, then the diffeomorphism constraints (left) and Hamiltonian constraints (right) move the lattice around.
		Since the constraints in the TQFT are semi-classically these two types of constraints, we argue that our TN is a toy model for a superposition of embeddings of the lattice.
	}
	\label{fig:diffeos}
\end{figure}

Let us see how this works for the simplest case, the reduced lattice for $D^{2}$ without matter.
The central ribbon on one link $b_{1}$, and that on two links $b_{1} b_{2}$, measure the areas of two spacelike-separated surfaces, as shown in Figure \ref{fig:geod-dual}.
Mathematically, this is because the irreps $\upmu_{1}, \upmu_{2}$ on two boundary links might be entangled to produce only a subset of $\upmu_{1} \otimes \upmu_{2}$, so that $\ev*{\hat{A}_{12}} < \ev*{\hat{A}_{1}} + \ev*{\hat{A}_{2}}$.
So the links $b_{1}, b_{2}$ should neither be interpreted as the surface $X_{1} \cup X_{2}$ nor as the surface $X_{12}$.
But different operators on these two links reproduce \emph{properties} of either of these surfaces.\footnote{
The most `classical' lattice is the fusion basis lattice of \cite{Delcamp:2016yix,Delcamp:2016eya}; in that case, each link can be assigned a particular QES.
For example, in Figure \ref{fig:geod-dual}, $b_{1}$ and $b_{2}$ would fuse to a third link, let's call it $b_{12}$, such that the irrep flowing out of $b_{1} b_{2}$ is the irrep on $b_{12}$.
Thus, a fusion basis lattice can be said to be lattice-dual to a non-intersecting network of geodesics, as pointed out also in \cite{Chen:2024unp}; $b_{1},b_{2},b_{12}$ would correspond to $X_{1}, X_{2}, X_{12}$ respectively.
However, the central ribbon operator on $b_{12}$ equally can be represented on $b_{1} \cup b_{2}$, and so there's still an element of convention to the identification.
}

\begin{figure}[h!]
	\centering
	\includegraphics[width=.5\textwidth]{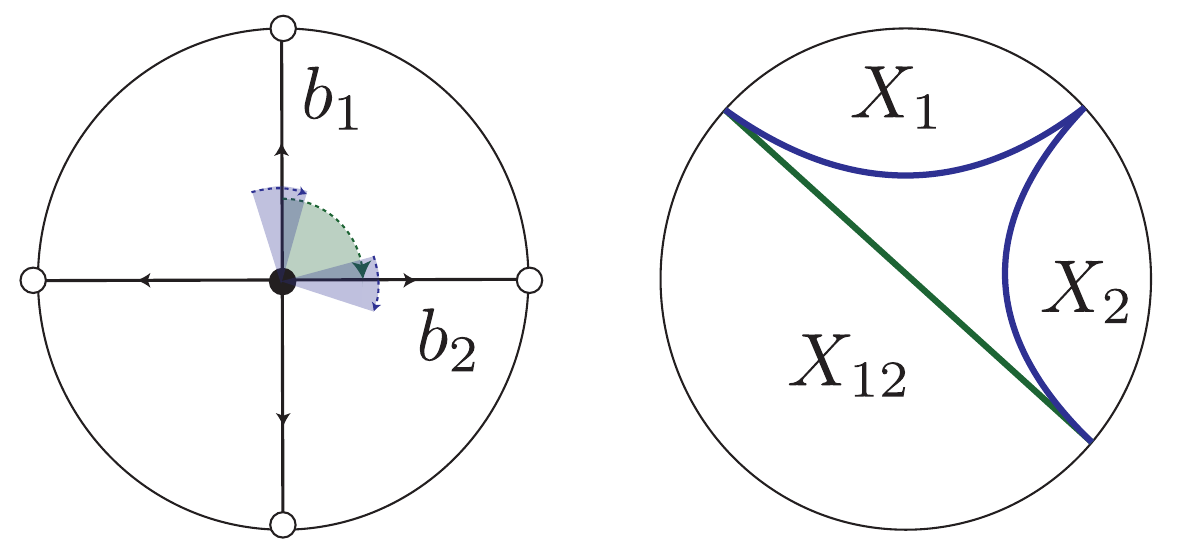}
	\caption{Central operators on individual links $b_{1}, b_{2}$ measure the area of different HRT surfaces, but the central operator on $b_{1} \cup b_{2}$ measures the area of a completely different extremal surface that is spacelike-separated from both of them. Thus the two links should not be associated to individual extremal surfaces, even in the classical limit.}
	\label{fig:geod-dual}
\end{figure}

Another potentially confusing point is that there are generically more topological classes of central ribbons than QESs, as shown in Figure \ref{fig:area-conf}.
For example, if there are two matter excitations very close together but very far away from any QES, there is a ribbon that separates these two excitations but no QES between them.
In that case, the area operator is not measuring the area of a QES in the geometry, but seems to be related to the outer entropy \cite{Engelhardt:2018kcs,Nomura:2018aus,Bousso:2018fou} or the holographic covariant entropy bound \cite{Soni:2024}.
It is the area of a QES that would exist in a different geometry where one of the matter excitations has been taken away.\footnote{
	This is true when there is a normal surface between the two excitations; trapped surfaces are more confusing, and would likely require more explicit computations.
}

\begin{figure}[h!]
	\centering
	\includegraphics[width=.6\textwidth]{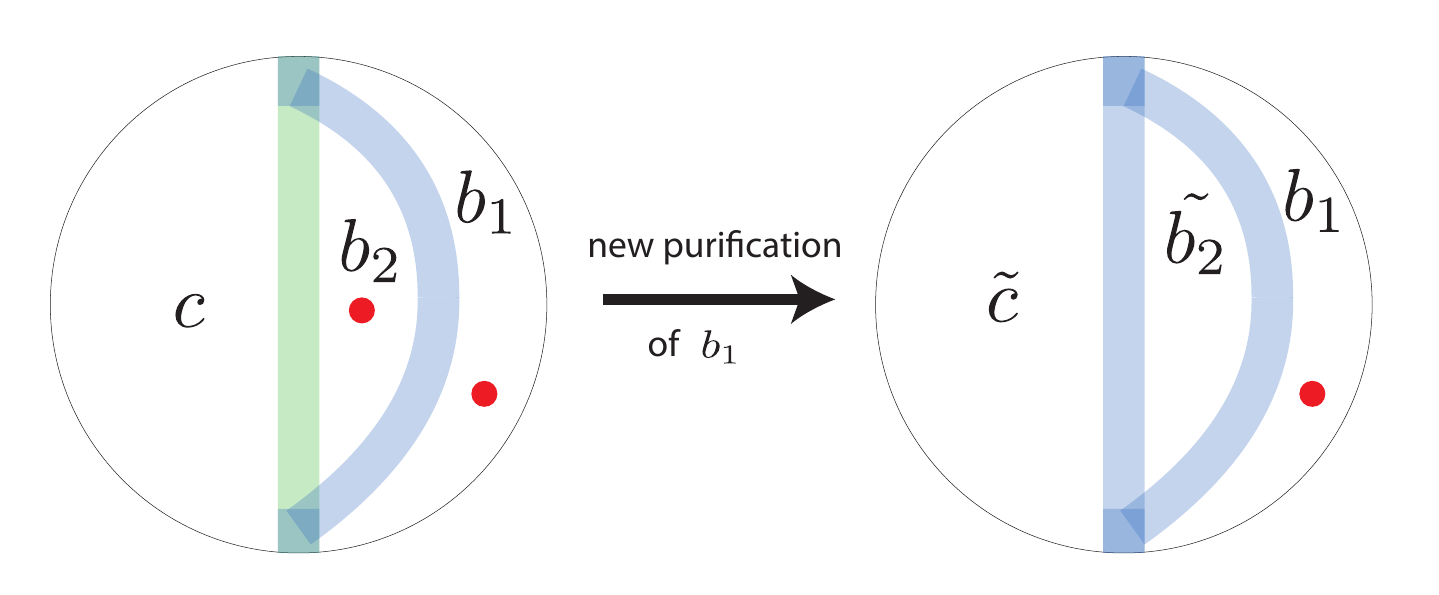}
	\caption{We can calculate the ``area'' of any cut, but that might not correspond to the length of any geodesic that exists in the spacetime.
		However there could be a purification of the bulk region where the geodesic exists.
		On the left, the green central ribbon is the minimal one, whose value should be related to the length of a geodesic.
		The blue ribbon is topologically inequivalent, and there may not be a geodesic with the right end-points in that topological class.
		However, there is a different purification of the region $b_1$ between the blue ribbon and the boundary, where there should be a geodesic of the same topological class.}

	\label{fig:area-conf}
\end{figure}

\subsection*{A Gauge Theory of Intertwiners} \label{ssec:ints}
On a related note, there are fascinating connections between our construction and the algebraic story of \cite{Casini:2019kex,Furuya:2020wxf}.
They conceptualize the quantum error-correcting structure of AdS/CFT as an approximate version of Doplicher-Haag-Roberts theory, which is the algebraic approach to gapped theories with superselection sectors.
The restriction of CFT operators to code subspace operators is implemented by a `conditional expectation,' which in the DHR case is a restriction to operators that don't change the superselection sector.

Consider first the case without matter.
There are two algebras acting on the links, $\mathcal{L} (\mathcal{H}_\mathrm{bdry})$ and $\mathcal{L} (\mathcal{H}_{\mathrm{phys}})$; consider the latter as the vacuum superselection sector of the former.
The conditional expectation from $\mathcal{L} (\mathcal{H}_{\mathrm{bdry}})$ to $\mathcal{L} (\mathcal{H}_{\mathrm{phys}})$ is implemented by Gauss's law at the central vertex.\footnote{
	For general lattices, this is a closed ribbon around the boundary.
}
For two boundary regions $B_{1}, B_{2}$, the physical algebra $\mathcal{A}_{12} \in \mathcal{L} (\mathcal{H}_{\mathrm{phys}})$ of the two regions is bigger than the algebra union $\mathcal{A}_{1} \vee \mathcal{A}_{2} \in \mathcal{L} (\mathcal{H}_{\mathrm{phys}})$ of the two subregion algebras separately.
The difference is made up of operators that create a charge in $B_{1}$ and an anti-charge in $B_{2}$, called intertwiners in the DHR theory.
For $B_{1}, B_{2}$ adjacent, these intertwiners are made up of all ribbon operators that begin in $B_{1}$ and end in $B_{2}$.
In the general case, it is the subset shown in Figure \ref{fig:pll-trans-ops}.\footnote{
	They also define a dual set of `twists.'
	In the DG model, these are also ribbon operators.
}

Note that DHR theory deals with the boundary theory, but we are giving bulk descriptions of the objects involved: the TQFT is a \emph{bulk} gauge theory of \emph{boundary} intertwiners.
This is perhaps not a surprise, since there is a reconstruction theorem relating DHR models in $1+1$ dimensions and tensor categories \cite{Haag:1996hvx}.

What is perhaps more novel is this.
Since error-correction in AdS/CFT is approximate \cite{Hayden:2018khn, Akers:2019wxj},
we have to back away from this limit of an exact bulk gauge theory of intertwiners, while keeping the bulk structure.
This is hard, because bulk matter makes the theory non-topological and therefore harder to make holographic; adding the random tensors seems to be a way to mitigate this problem and allows us to back away from the topological limit while keeping holography.
It would be interesting to understand these connections in detail.

\section{Discussion}\label{sec:discussion}
We have constructed a map 
\begin{equation}
	\mathcal{H}_\mathrm{bulk} \to \mathcal{H}_\mathrm{bdry}
\end{equation}
with the following properties:
First, the bulk Hilbert space $\mathcal{H}_\mathrm{bulk}$ consists of a topological theory coupled to matter.
Second, boundary entropies satisfy a holographic entropy formula
\begin{equation}\label{eq:QMS_concl}
	S(B)_{V\ket{\psi}} = \mel{\psi}{\hat{A}(b)}{\psi} + S(b;\mathrm{alg})_{\ket{\psi}}
\end{equation}
in which the minimization is over different sets of matter legs, and the area operator $\hat{A}$ is a (topological) operator measuring the net electric flux out of $B \cup b$.
This is desirable if we would like a tensor network that represents a discretization of the topological quantum field theory (TQFT) description of $2+1d$ gravity, because in that description the area of the Ryu-Takayanagi (or quantum extremal) surface will correspond to some topological operator in the TQFT.
As a byproduct, in this model the area operators of overlapping boundary regions fail to commute, as in gravity but not traditional tensor networks \cite{Bao:2019fpq}.

One could worry: isn't changing the area operator ruining the usefulness of tensor networks? 
Wasn't their whole point that boundary entropies were related to the cut through the graph that intersected the fewest links -- resembling the Ryu-Takayanagi formula?
A map with a topological area operator goes against that, so what is its point?
The point is that the area operator in this model is still in line with the Ryu-Takayanagi formula.
We still interpret $\hat{A}$ as measuring the geometric area, and we still are using random tensor networks to construct a map with an entropy formula minimizing \eqref{eq:QMS_concl}.
The only difference is that the TQFT description obscures the geometry of the gravitational description -- so it is no surprise that what is geometric area in the gravitational description can be a topological operator in the TQFT description. 

The reason the geometric description is obscured is twofold.
First, our tensor networks do not introduce bulk-local degrees of freedom where there should be none.
The matter-free theory is topological, so there is no need to add local tensors in this case.
Secondly, our tensor networks are toy models for states invariant under the Hamiltonian constraint of gravity, and should be compared not with Cauchy slices but Wheeler-deWitt patches.

One oddity of our tensor networks is the particular subalgebras it minimizes over in the holographic entropy formula. While our tensor network has many desirable properties, these subalgebras seem fairly strange, as we discuss in Section \ref{ssec:subalg_revisited} and Appendix \ref{app:gen-regs}. It would be nice to better understand the gravitational analog of these subalgebras, or how to construct a tensor network that minimizes over more natural subalgebras.

The broader question of interest is constructing a tensor network with matching local dynamics on both sides of the map, akin to AdS/CFT. 
We believe the model in this paper will help with this, because it includes constraints that are more gravity-like. 
Perhaps one way to proceed will be to combine with this model the insights of \cite{Kohler:2018kqk, Apel:2021tnn}.

Having explained our main result and the broader question, let us summarize some important physics we learned along the way.
The first lesson is that there is a qualitative difference between bipartite and multipartite edge modes.
Abstractly, this arises from tensions between additivity and duality in local algebras in the code subspace \cite{Casini:2019kex}.
The second, and most important, lesson is this: to model holography, we might need multipartite edge modes.
The fact that the centers of overlapping regions fail to commute poses a challenge for purely bipartite edge modes.

These lessons lead to a number of structural questions.
Remember, historically, that the bipartite edge modes were originally studied in gauge theories \cite{Buividovich:2008gq,Donnelly:2011hn,Casini:2013rba,Soni:2015yga,Donnelly:2016auv,Lin:2018bud}, but then were shown to be a general feature of error-correcting codes with bipartite code spaces \cite{Harlow:2016vwg} and holography \cite{Colafranceschi:2023urj}.
Perhaps we should regard this work as the first half of the same historical progression, for multipartite edge modes.
Can we construct a general theory of multipartite edge modes and prove that they are needed in holography?
What do multipartite edge modes look like in the gravitational phase space, analogous to \cite{Donnelly:2016auv} and follow-ups?
Does the theorem in Appendix \ref{app:unique} generalize to a statement inferring a unique entropy function from a set of subalgebras?
Understanding the multi-local operator that is dual to the soap-films of \cite{Gadde:2022cqi} will likely be useful to learn about these questions.

Let us end with some other interesting, but more specific, questions and directions for future work.
We'd like to be able to interpret quantities in our model, like the area commutator, in terms of gravitational quantities.
This could likely be done in the model of \cite{Chen:2024unp}.
On a related note, the lattice model we have used as the base for our tensor network is related in spirit to loop quantum gravity \cite{Delcamp:2016eya,Delcamp:2016yix}, and perhaps we can also import some insights from there.
Another interesting question is to identify the operator analogous to our central ribbons in the Virasoro TQFT of \cite{Collier:2023fwi}, and see whether the factorization maps of \cite{Mertens:2022ujr,Wong:2022eiu,Chen:2024unp} are related to these operators.
As a first step, the analog of our factorization map in Chern-Simons is being studied \cite{Akers:2024-2}.
It would also be interesting to study whether a model like ours realizes a quantum \emph{extremal} surface formula, rather than just quantum minimality.\footnote{
	We would like to thank Jing-Yuan Chen, Bartek Czech, Alex Frenkel, Xiao-Liang Qi and Gabriel Wong for discussions on similar points.
} 
Finally, we could mock up identity block domination of the CFT using specific string-net models, and see which gravity features we get from there.

Finally, we need to understand what we have learned about higher dimensions.
Likely the connection to the story of \cite{Casini:2019kex} will be crucial in doing so.

\section*{Acknowledgments}
We thank
%Put each name on a new line so it is easy to sort at the end.
Alejandra Castro,
Jing-Yuan Chen,
Shawn X. Cui,
Bartek Czech,
Xi Dong,
Jackson R. Fliss,
Alex Frenkel,
Shahn Majid,
Isaac Kim,
Laurens Lootens,
Xiao-Liang Qi,
Daniel Ranard,
Aron Wall,
Gabriel Wong,
and Bowen Yan
for discussions.

CA is supported by National Science Foundation under the grant number PHY-2207584, the Sivian Fund, and the Corning Glass Works Foundation Fellowship.
RMS is supported by the Isaac Newton Trust grant ``Quantum Cosmology and Emergent Time'' and the (United States) Air Force Office of Scientific Research (AFOSR) grant ``Tensor Networks and Holographic Spacetime,'' and also partially supported by STFC consolidated grant ST/T000694/1.
They also thank UC Berkeley for hospitality.
AYW acknowledges funding from the US DOE center ``Quantum Systems Accelerator."

\appendix

\section{Lattice Deformations}\label{app:dg_model_properties}

Recall that we think about lattice independence as follows.
Say we start with a lattice $\Lambda_1$, defining $\mathcal{H}_{\mathrm{pre}}^{(1)}$, projectors $A^{(1)}$ and $B^{(1)}$, and physical Hilbert space $\mathcal{H}_\mathrm{phys} = A^{(1)} B^{(1)} \mathcal{H}_{\mathrm{pre}}^{(1)}$. 
There are two ``elementary moves'' that change the lattice but leave the physical Hilbert space unchanged, see e.g. \cite{Aguado:2007oza, Dennis:2001nw}.
That is, applying one of these elementary moves would give us a $\Lambda_2$, such that $\Lambda_2$ defines $\mathcal{H}_{\mathrm{pre}}^{(2)}$ and projectors $A^{(2)}$ and $B^{(2)}$ with
\begin{equation}
	A^{(2)} B^{(2)} \mathcal{H}_{\mathrm{pre}}^{(2)} = A^{(1)} B^{(1)} \mathcal{H}_{\mathrm{pre}}^{(1)}~.
\end{equation}
Each move is an isometry (or co-isometry, in the reverse direction) that we can write down explicitly. 
To do so, we introduce the following controlled multiplication operators:
\begin{defn}
	Let $\mathcal{H}_C \cong \mathcal{H}_T \cong \mathcal{H}_G$.
	Given the bipartite Hilbert space $\mathcal{H}_C \otimes \mathcal{H}_T$, we call the first factor the ``control'' and the second factor the ``target'' when using the following four controlled multiplication operators:
	\begin{align}
		C_{II,CT} \ket{h}_C \ket{g}_T &\coloneqq \ket{h}_C \ket{gh}_T \\
		C_{IO,CT} \ket{h}_C \ket{g}_T &\coloneqq \ket{h}_C \ket{h^{-1}g}_T \\
		C_{OI,CT} \ket{h}_C \ket{g}_T &\coloneqq \ket{h}_C \ket{gh^{-1}}_T \\
		C_{OO,CT} \ket{h}_C \ket{g}_T &\coloneqq \ket{h}_C \ket{hg}_T~.
	\end{align}
\end{defn}

The two elementary moves are as follows.

\subsubsection*{Move 1: Add (or remove) a vertex}
\begin{equation}\label{eq:add_vertex_app}
	\includegraphics[width=0.8\textwidth,valign=c]{figs/split_vertex.pdf}
\end{equation}
Concretely, we define the isometry $V_\mathrm{vertex}: \mathcal{H}_{\mathrm{pre}}^{(1)} \to \mathcal{H}_{\mathrm{pre}}^{(2)}$ in a two step process.
First, we tensor onto our state $\ket{\psi} \in \mathcal{H}_{\mathrm{pre}}^{(1)}$ the state $\ket{\mathbf{1}} \in \mathcal{H}_G$, where
\begin{equation}
	\ket{\mathbf{1}} \coloneqq \frac{1}{\sqrt{|G|}} \sum_{g \in G} \ket{g}
\end{equation}
is the state corresponding to the trivial irrep. 
Second, we act controlled multiplication operators with the control being this new factor and the target being a set of links attached to this vertex and all adjacent to each other.
It doesn't matter the order in which we act these.
In \eqref{eq:add_vertex_app}, the targets would be the $g_5$ link and the $g_1$ link (or the $g_2, g_3, g_4$ links).
Which of the four control operations we use depends on the orientations of the two links.
If they are both oriented ``in'' towards the vertex, then we use $C_{II}$ (the $II$ stands for In-In). 
If the control is oriented in but target out, we use $C_{IO}$ (for In-Out), and so on.
In total, in \eqref{eq:add_vertex_app} we'd have
\begin{equation}
	\begin{split}
		V_\mathrm{vertex} \ket{g_1}_1 \ket{g_2}_2 \ket{g_3}_3 \ket{g_4}_4 \ket{g_5}_5 &= C_{II,05}C_{II,01} \ket{\mathbf{1}}_0 \ket{g_1}_1 \ket{g_2}_2 \ket{g_3}_3 \ket{g_4}_4 \ket{g_5}_5 \\
																																									&= \frac{1}{\sqrt{|G|}} \sum_{g_0 \in G} \ket{g_0}_0 \ket{g_1 g_0}_1 \ket{g_2}_2 \ket{g_3}_3 \ket{g_4}_4 \ket{g_5 g_0}_5~.
	\end{split}
\end{equation}
It is easy to confirm this satisfies Gauss's law at both new vertices (if the original vertex satisfied Gauss's law), as well as any flatness constraint in any adjacent plaquettes that satisfied it in the original lattice.

Removing a vertex happens by reversing the steps.
In \eqref{eq:add_vertex_app}, we would remove the $g_0$ link by first acting with $C_{II,05}^{-1} C_{II,01}^{-1}$, then acting with $\bra{\mathbf{1}}_0$.
Note this is \emph{not} an isometry, so we may be surprised that this leads to a sensible one-to-one identification of physical operators. 
Indeed it does.
The key point is that \emph{no matter what state} we start with in $AB\mathcal{H}_\mathrm{pre}$, acting $C_{II,05}^{-1} C_{II,01}^{-1}$ is sure to put link $0$ in the state $\ket{\mathbf{1}}_0$.
In other words, $V_\mathrm{vertex}^\dagger$ might not be an isometry on $\mathcal{H}_\mathrm{pre}$, but it's bijective on the physical subspace.

An important special case of this move is to split one link into two:
\begin{equation}
	\includegraphics[width=0.7\textwidth,valign=c]{figs/split_link.pdf}
\end{equation}
In this case,
\begin{equation}
	V_\mathrm{vertex} \ket{g_1}_1 = \frac{1}{\sqrt{|G|}} \sum_{g_0 \in G} \ket{g_0}_0 \ket{g_0^{-1}g_1}_1~.
\end{equation}
Note that in the irrep basis this takes the form
\begin{equation}
	V_\mathrm{vertex} \ket{\upmu, \mathsf{i}\mathsf{j}}_1 = \frac{1}{\sqrt{d_\upmu}} \sum_{k = 1}^{d_\upmu} \ket{\upmu, \mathsf{i} \mathsf{k}}_0 \ket{\upmu, \mathsf{k} \mathsf{j}}_1~.
\end{equation}

\subsubsection*{Move 2: Add (or remove) a plaquette}
\begin{equation}\label{eq:add_plaq_app}
	\includegraphics[width=0.8\textwidth,valign=c]{figs/split_plaq.pdf}
\end{equation}
Concretely, we again proceed in two steps.
In the first step, we append to $\ket{\psi} \in \mathcal{H}_{\mathrm{pre}}^{(1)}$ the state $\ket{e} \in \mathcal{H}_G$, where $e$ is the identity group element. 
Second, we move counterclockwise around the plaquette that was just formed (if two were formed, pick one), at each link acting a controlled multiplication operator, with the new link as the target. 
Which controlled multiplication depends on the orientations of the two links. 
If the new link is oriented clockwise (respectively counterclockwise), then the target is $I$ (respectively $O$).
If the other link is oriented clockwise (respectively counterclockwise), then the control is $O$ (respectively $I$). 
In \eqref{eq:add_plaq_app}, we would start by appending $\ket{e}_0$ and then acting $C_{II,50}$. 
Then we would act $C_{II,10}$ and then $C_{II,20}$.
In total we'd have
\begin{equation}
	\begin{split}
		V_\mathrm{plaq} \ket{g_1}_1 \ket{g_2}_2 \ket{g_3}_3 \ket{g_4}_4 \ket{g_5}_5 &= C_{II,20}C_{II,10}C_{II,50} \ket{e}_0 \ket{g_1}_1 \ket{g_2}_2 \ket{g_3}_3 \ket{g_4}_4 \ket{g_5}_5 \\
																																								&= \ket{g_2 g_1 g_5}_0 \ket{g_1}_1 \ket{g_2}_2 \ket{g_3}_3 \ket{g_4}_4 \ket{g_5}_5~. 
	\end{split}
\end{equation}
It is straightforward to see that the new plaquette has trivial holonomy.
Here, $g_0^{-1} g_5 g_1 g_2 = e$. 
We can see that Gauss's law is satisfied at the two modified vertices (if they satisfied Gauss's law before) by writing out the states before and after explicitly. 

To remove a plaquette, we perform the inverse operations. 
Here we'd act 
\begin{equation}   
	\bra{e}_0 C_{OO,50}^{-1}C^{-1}_{OO,10}C^{-1}_{OO,20}~.
\end{equation}
This is not an isometry on $\mathcal{H}_\mathrm{pre}$, but it still defines a one-to-one identification of physical states and operators.

Note that it did not matter that we added a link inside of a plaquette that already existed.
It would have worked to take an unclosed set of links -- which do not form a plaquette and therefore do not satisfy any flatness constraint -- and close them by adding a new link. 
The new plaquette satisfies the flatness constraint regardless. 
The reverse operation is also important: we can take a plaquette on the edge of a lattice, then remove the outermost link.

\subsubsection*{Ribbon operator transformation}
A ribbon operator $F$ acts on $\mathcal{H}_\mathrm{phys}$ and therefore must be represented on any associated lattice. 
We can ask: given $F_\gamma (h,g)$ acting on $\mathcal{H}_{\mathrm{pre}}^{(1)}$, what is the associated operator on $\mathcal{H}_{\mathrm{pre}}^{(2)}$? 
The answer is that it is also a ribbon operator, now including the new link if a plaquette was added along its path. 
For example:
\begin{equation}\label{eq:ribbon_under_deform}
	\includegraphics[width=\textwidth,valign=c]{figs/ribbons_under_deform.pdf}
\end{equation}
One way to see how $F_\gamma (h,g)$ maps is to recall that $V_\mathrm{plaq} = C \ket{e}_{0}$, where $C$ is a unitary and $\ell_0$ is a new link introduced in the identity element state. 
Then we know that what we want is an operator $O$ that satisfies
\begin{equation}
	\forall \ket{\psi} \in \mathcal{H}_\mathrm{phys}~,~~~~O V \ket{\psi} = V F_\gamma(h,g) \ket{\psi}~.
\end{equation}
This is satisfied by the anzats $O = C (\mathds{1}_{0} \otimes F_\gamma(h,g)) C^\dagger$.
In the plaquette example \eqref{eq:ribbon_under_deform}, we have $C = C_{IO,10} C_{OO,20}, C_{OO,30}$, and we compute
\begin{equation}
	\begin{split}
		C ( \mathds{1}_{0} \otimes F_\gamma(h,g)) C^\dagger \ket{g_0, g_1, g_2, g_3} &= C ( \mathds{1}_{0} \otimes F_\gamma(h,g))\ket{g_3^{-1} g_2^{-1} g_1 g_0, g_1, g_2, g_3} \\
																																								 &= \delta_{g}(g_1) C \ket{g_3^{-1} g_2^{-1} g_1 g_0, g_1, h g_2, g_3} \\
																																								 &= \delta_{g}(g_1) \ket{g_1^{-1} h g_1 g_0, g_1, h g_2, g_3} \\
																																								 &= F_{\gamma'}(h,g) \ket{g_0, g_1, g_2, g_3}~.
	\end{split}
\end{equation}
We see that $F_\gamma(h,g)$ has become a larger ribbon operator on $\gamma'$, which passes across the $g_1$ link to act on the $g_0$ link.

\subsubsection*{Equivalence of the Hilbert Spaces}

Now let us sketch the argument that the two elementary moves of Section \ref{sec:lattice_in} leave the physical Hilbert space unchanged, and that each physical operator maps 1-to-1 to a physical operator in the new Hilbert space.
Consider $V_\mathrm{vertex}$.
Let $\Lambda_1 = (V_1, L_1, P_1)$ and $\Lambda_2 = (V_2, L_2, P_2)$ be two lattices related by a move as in \eqref{eq:add_vertex_app}.
Without loss of generality, say in $\Lambda_1$ there is an $n$-valent vertex $v_1 \in V_1$ with incident links $\ell_1, \ell_2, \cdots \ell_n$ all pointing in toward $v$, while in $\Lambda_2$ there is an $m+1$-valent vertex $v_2$ with $m < n$ incident links $\ell_0, \ell_1, \cdots \ell_m$ with $\ell_0$ pointing out from $v_2$ while the rest point inwards, and an $n-m$-valent vertex $v'_2$ with incident links $\ell_0, \ell_{m+1}, \ell_{m+2}, \cdots, \ell_n$, with all pointing inwards toward $v'_2$. 

$\Lambda_1$ and $\Lambda_2$ define pre-gauged Hilbert spaces $\mathcal{H}_{\mathrm{pre}}^{(1)}$ and $\mathcal{H}_{\mathrm{pre}}^{(2)}$ respectively.
Moreover, in our initial lattice $\Lambda_1$, we have the constraint projectors
\begin{equation}
	\begin{split}
		A^{(1)} &= \bigotimes_{v \in V^{(1)}_\mathrm{bulk}} A^{(1)}_v \\
		B^{(1)} &= \bigotimes_{p \in P^{(1)}_\mathrm{bulk}} B^{(1)}_p~,
	\end{split}
\end{equation}
defining the physical Hilbert space $\mathcal{H}_{\mathrm{phys}}^{(1)} \coloneqq A^{(1)} B^{(1)} \mathcal{H}_{\mathrm{pre}}^{(1)}$.
Similarly, $\Lambda_2$ comes with constraint projectors $A^{(2)}$ and $B^{(2)}$. 
Crucially, $A^{(2)}$ is very much like $A^{(1)}$ except that $A^{(1)}$ acts on $v_1$ while $A^{(2)}$ instead acts on $v_2, v'_2$. 
$B^{(2)}$ is like $B^{(1)}$ except possibly one or two plaquettes now also involve the new link $\ell_0$.
By construction, 
\begin{equation}
	\begin{split}
	&V_\mathrm{vertex} : \mathcal{H}_{\mathrm{pre}}^{(1)} \to  \mathcal{H}_{\mathrm{pre}}^{(2)} \\
	&V_\mathrm{vertex}^\dagger : \mathcal{H}_{\mathrm{pre}}^{(2)} \to  \mathcal{H}_{\mathrm{pre}}^{(1)}~.
	\end{split}
\end{equation}
What we wish to show is that in particular
\begin{align}
	&V_\mathrm{vertex} : A^{(1)} B^{(1)}\mathcal{H}_{\mathrm{pre}}^{(1)} \to  A^{(2)} B^{(2)}\mathcal{H}_{\mathrm{pre}}^{(2)}  \label{eq:Vvert_forward} \\
	&V_\mathrm{vertex}^\dagger : A^{(2)} B^{(2)}\mathcal{H}_{\mathrm{pre}}^{(2)} \to  A^{(1)} B^{(1)} \mathcal{H}_{\mathrm{pre}}^{(1)} \label{eq:Vvert_back}~.
\end{align}
To show \eqref{eq:Vvert_forward}, recall that when acting $V_\mathrm{vertex}$ we first append an ancilla in the state $\ket{\mathbf{1}}$, and then we act unitary controlled multiplication operators.
Say we start with state $\ket{\psi} \in \mathcal{H}_{\mathrm{pre}}^{(1)}$ that satisfies $A^{(1)} B^{(1)} \ket{\psi}$. 
After appending the ancilla $\ell_0$ but before the unitaries, we have the state
\begin{equation}
	\ket{\mathbf{1}}_{0} \otimes \ket{\psi}_{1 \cdots m} \in  \mathcal{H}_{G} \otimes \mathcal{H}_{\mathrm{pre}}^{(1)}~,
\end{equation}
which satisfies 
\begin{align}
	L_{0}(g)  \otimes (\otimes_{i=1}^{n} R_i (g))  \ket{\mathbf{1}}_{0} \otimes \ket{\psi}_{1 \cdots n} &= \ket{\mathbf{1}}_{0} \otimes \ket{\psi}_{1 \cdots n}~,\label{eq:vertex_split_normal_stabilizer}\\
	R_g(0) \otimes \mathds{1}_{1 \cdots n} \ket{\mathbf{1}}_{0} \otimes \ket{\psi}_{1 \cdots n}  &= \ket{\mathbf{1}}_{0} \otimes \ket{\psi}_{1 \cdots n}~.\label{eq:vertex_split_ancilla_stabilizer}
\end{align}
To complete the action of $V_\mathrm{vertex}$ we then act with the unitary operator
\begin{equation}
	C \coloneqq \prod_{i \in \{m + 1, \cdots, n\}} C_{II,0i}~.
\end{equation}
It follows that 
\begin{equation}
	C (R_0(g) \otimes \mathds{1}_{1 \cdots n} )C^\dagger V_\mathrm{vertex} \ket{\psi} = V_\mathrm{vertex} \ket{\psi}~.
\end{equation}
We compute that
\begin{equation}
	\begin{split}
		C_{II,0i} (R_0(g) \otimes \mathds{1}_{i})C^\dagger_{II,0i} &= R_0(g) \otimes R_i(g)~,
	\end{split}
\end{equation}
and therefore
\begin{equation}
	\begin{split}
		C \left(\frac{1}{|G|} \sum_{g \in G} R_0(g) \otimes \mathds{1}_{1 \cdots n} \right) C^\dagger &=  A^{(2)}_{v'_2}~. 
	\end{split}
\end{equation}
We have shown $A^{(2)}_{v'_2} V_\mathrm{vertex} \ket{\psi} = V_\mathrm{vertex} \ket{\psi}$.
Now we want to show the same for $A^{(2)}_{v_2}$. 
We compute
\begin{equation}
	C_{II,0i} (L_0(g) \otimes R_i(g) ) C_{II,0i}^\dagger = L_0(g) \otimes \mathds{1}_{i}~.
\end{equation}
This implies
\begin{equation}
	C \left( L_0(g)  \otimes (\otimes_{i=1}^{n} R_i(g)) \right) C^\dagger = L_0(g) \otimes R_1(g) \otimes \cdots \otimes R_m(g) \otimes \mathds{1}_{{m+1}} \otimes \cdots \otimes \mathds{1}_{{n}}~.
\end{equation}
It follows that $A^{(2)}_{v_2} V_\mathrm{vertex} \ket{\psi} = V_\mathrm{vertex} \ket{\psi}$.
All that is left is to argue $B^{(2)} V_\mathrm{vertex} \ket{\psi} = V_\mathrm{vertex} \ket{\psi}$.
This follows because the introduction of $\ket{\mathbf{1}}$ does not change any of the holonomies, and the action of $C$ also preserves the holonomies. 

The reverse direction \eqref{eq:Vvert_back} and the analogs for the $B$ constraints can be shown similarly.

\section{Central ribbon operators} \label{app:cent}

\subsection*{Central ribbons}

Here we derive the form of the central ribbon operators for a bipartition of the disk into two regions $b$ and $\overline{b}$.
We assume that both $b,\overline{b}$ are topologically $D^{2}$, and $\eth b = \eth \overline{b}$ is topologically an interval.

\begin{thm}
	The center of the algebra $\mathcal{A}_{b}$ associated to $b$ is
	\[
		F_{\eth b}([h])=\frac{1}{|[h]|}\sum_{\substack{w\in[h]\\ g\in G}}F_\gamma(w,g),
	\]
	where $[k]$ is the conjugacy class of $k\in G$, and $\gamma$ is a ribbon whose spokes are $\eth b$, and whose spine is the path connecting the vertices in $V_{b}$ adjacent to links in $\eth b$.
	\label{thm:centre}
\end{thm}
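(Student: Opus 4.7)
The plan is to leverage lattice independence from Section \ref{sec:lattice_in} to reduce the claim to a simple lattice, and then identify the center from a Peter--Weyl block decomposition of $\mathcal{H}_\mathrm{phys}$.

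First I would apply Move 2 in reverse repeatedly to contract all but one link across $\eth b$, yielding an equivalent lattice in which $\eth b$ consists of a single link $\ell$ oriented from a vertex $v \in V_b$ to a vertex in $V \setminus V_b$. This is permitted because by hypothesis no matter lives on the plaquettes crossing $\eth b$. Under this deformation $F_{\eth b}([h])$ maps, by the rule \eqref{eq:ribbon_under_deform}, to the analogous ribbon operator whose sole spoke is $\ell$; on this single link it is elementary to check that it evaluates to $\frac{1}{|[h]|}\sum_{w \in [h]} L_\ell(w)$, since the sum over $g \in G$ collapses the delta factor in $F_\gamma(w,g)$ to the identity.

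Second, on the reduced lattice I would invoke the Peter--Weyl decomposition $\mathcal{H}_\ell = \bigoplus_\upmu \mathcal{H}_\upmu \otimes \mathcal{H}_{\upmu^*}$, regarding the $\mathcal{H}_\upmu$ factor as attached to the endpoint of $\ell$ inside $b$ and the $\mathcal{H}_{\upmu^*}$ factor as attached to the endpoint outside. Gauss's law at each endpoint fuses the corresponding factor into a singlet with the rest of $b$ or $\overline b$, yielding a block decomposition
\begin{equation*}
	\mathcal{H}_\mathrm{phys} = \bigoplus_\upmu \mathcal{H}_{b,\upmu} \otimes \mathcal{H}_{\overline{b},\upmu},
\end{equation*}
on which $\mathcal{A}_b$ acts as $\mathcal{L}(\mathcal{H}_{b,\upmu}) \otimes \mathds{1}_{\overline{b},\upmu}$ in each $\upmu$ sector. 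The center must then consist of operators that are scalar within each block, i.e.\ linear combinations of the block projectors; these are precisely the $F_{\eth b}(\upmu)$ of \eqref{eq:central_ribbons}. Character orthogonality inverts the change of basis in \eqref{eq:central_ribbons} to give $\{F_{\eth b}([h])\}$ as an equivalent spanning set, which is the form in the statement.

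The main obstacle I expect is rigorously establishing the factorization $\mathcal{A}_b \cong \bigoplus_\upmu \mathcal{L}(\mathcal{H}_{b,\upmu})$ in the second step: one must show not only that elements of $\mathcal{A}_b$ preserve each block but that they \emph{generate} the full matrix algebra on each $\mathcal{H}_{b,\upmu}$, and symmetrically that the commutant inside $\mathcal{A}_b$ is just the scalars. I would prove this by exhibiting a sufficient family of gauge-invariant operators supported in $b$ -- transported shifts anchored at the internal endpoint of $\ell$, together with ribbon operators encircling the interior lollipops and plaquettes of $b$ -- whose linear span saturates $\bigoplus_\upmu \mathcal{L}(\mathcal{H}_{b,\upmu})$; the reverse inclusion then follows from Schur's lemma applied block by block.
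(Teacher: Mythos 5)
Your strategy is genuinely different from the paper's. The paper never deforms the lattice to prove centrality: it works directly on the original lattice, first checking by explicit computation (using the left-joint, right-joint, and same-ribbon product relations of \cite{Bombin:2007qv}) that the class-averaged, $g$-summed ribbons commute with every ribbon operator supported in $b$, and then showing via Proposition \ref{prop:ribb-flip} that the same operator can be rewritten as a ribbon supported entirely in $\overline{b}$, so it lies in $\mathcal{A}_b \cap \mathcal{A}_b'$ even when matter makes the set of ribbons in $b$ topologically rich. Your route — contract $\eth b$ to one link, Peter--Weyl the cut link, and read off the center from the block decomposition — is closer in spirit to how the paper \emph{interprets} these operators in Section \ref{sec:subalg_and_centers} and to the derivation in Section \ref{sec:TN}, and it has the advantage of making the answer structurally transparent. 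What it costs you is that the two hard points get concentrated into steps you only sketch.

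First, the lattice moves that contract $\eth b$ to a single link necessarily remove links \emph{crossing} the cut, and the associated controlled-multiplication unitaries of Appendix \ref{app:dg_model_properties} have controls and targets on both sides of $\eth b$. So it is not automatic that conjugation by the deformation isometry carries the subalgebra ``physical operators acting trivially outside $b$'' on the original lattice onto the corresponding subalgebra on the reduced lattice; you need the fact that $\mathcal{A}_b$ is generated by ribbons supported in $b$ and that ribbons map to topologically equivalent ribbons under the moves (equation \eqref{eq:ribbon_under_deform}), which should be stated as an explicit lemma rather than assumed. Second, and more seriously, the claim that $\mathcal{A}_b$ acts as the \emph{full} matrix algebra $\mathcal{L}(\mathcal{H}_{b,\upmu})$ on each block is exactly the ``center is no bigger than claimed'' direction, and it is where the presence of matter bites: a priori there could be extra central elements coming from class-averaged ribbons on paths $\gamma'$ that are trapped between matter insertions and cannot be deformed to $\eth b$. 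The paper devotes the last part of its proof to ruling these out, arguing that any such $\gamma'$ either fails to commute with a boundary-anchored holonomy in $\mathcal{A}_b$ or cannot be pushed into $\overline{b}$. Your proposed remedy — exhibiting a family of transported shifts and encircling ribbons whose span saturates each block — is the right idea, but verifying completeness of that family is essentially equivalent to the original problem, so as written the proposal is a sound plan with its central difficulty deferred rather than a complete proof.
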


Note that two ribbons $F_{\gamma_1}([h])$, $F_{\gamma_2}([h])$, where $\gamma_1$ and $\gamma_2$ are anchored to the same boundary endpoints, are the same ribbon provided that it is possible to deform $\gamma_1$ into $\gamma_2$ without passing through any charges.

First, we need some results and definitions that appear in \cite{Bombin:2007qv}.
We say that we have a left joint when two ribbons diverge outwards from a common point, as shown in Figure \ref{fig:left-jt}.
Similarly, we have a right joint when two ribbons converge towards a common point, as shown in Figure \ref{fig:right-jt}.
Then we can use the following result from \cite{Bombin:2007qv}:
\begin{prop}
	Let $\gamma_1$ and $\gamma_2$ be ribbons satisfying a left joint relation. Then they satisfy the following commutation relation:
	\begin{equation}\label{eq:leftjoint}
		F_{\gamma_1}(h,g) F_{\gamma_2}(k,\ell) = F_{\gamma_2}(hkh^{-1}, h\ell) F_{\gamma_1}(h,g).
	\end{equation}
	Let $\gamma_3$ and $\gamma_4$ be ribbons satisfying a right joint relation. Then they satisfy the following commutation relation:
	\begin{equation}
		F_{\gamma_3}(h,g)F_{\gamma_4}(k,\ell)=F_{\gamma_3}(k, \ell g^{-1}h^{-1}g)F_{\gamma_4}(h,g).
	\end{equation}
\end{prop}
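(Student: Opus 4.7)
The plan is to reduce both commutation relations to direct calculations on elementary ribbons (single triangles) at the shared endpoint, and then lift the result to general multi-triangle ribbons by induction using the standard glueing formula for $F_\gamma(h,g)$.

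First I would unpack $F_\gamma(h,g)$ in its elementary form. Each ribbon $\gamma$ decomposes into a sequence of dual and direct triangles; the dual (spoke) triangles act by left-multiplication keyed by $h$ and the accumulated holonomy, while the direct (spine) triangles impose an electric projector keyed by $g$. These elementary pieces assemble into the full ribbon operator via the associative glueing rule $F_{\rho_1 \cup \rho_2}(h,g) = \sum_{k \in G} F_{\rho_1}(h,k) F_{\rho_2}(k^{-1}h k, k^{-1}g)$. Because the joint relation is a local statement at the common endpoint, the triangles of $\gamma_1, \gamma_2$ (respectively $\gamma_3, \gamma_4$) far from the joint commute trivially with the other ribbon's operators, so the substance of the claim reduces to what happens on the triangles incident to the shared vertex.

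At a left joint, both ribbons emanate from a common starting vertex $v$. Applying $F_{\gamma_1}(h,g)$ first left-multiplies $h$ onto the leading link of $\gamma_1$, which effectively rotates the frame at $v$. When $F_{\gamma_2}(k,\ell)$ is applied afterwards, its leading triangle reads the configuration at $v$ in this rotated frame, so its magnetic label is conjugated to $hkh^{-1}$ and its holonomy condition is shifted to $h\ell$. This identifies the right-hand side of \eqref{eq:leftjoint}, and the glueing rule propagates the identity through the rest of the ribbons. For a right joint, the ribbons share a terminal vertex, and the analogous calculation uses right-multiplication on the trailing link; the electric label $g$ that $\gamma_3$ accumulates before reaching the joint then produces the conjugation-by-$g$ twist $\ell \mapsto \ell g^{-1} h^{-1} g$ appearing in the second identity.

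The main obstacle I anticipate is keeping track of orientation conventions: ``left/right joint'' constrains not only where the ribbons meet but also the relative orientations of their spines and spokes, and the formulas depend on which of the four configurations one is in. I would dispatch this by verifying the identity on one canonical orientation and then using the natural involutions on ribbons (spine reversal and spoke reversal) to cover the remaining cases. A secondary bookkeeping point is that the electric labels accumulate multiplicatively along the spine, so factors of $g^{\pm 1}$ showing up in the right-joint identity arise naturally when one commutes the magnetic element $h$ through the holonomy that has already been imposed by $F_{\gamma_3}$.
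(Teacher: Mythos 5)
The paper does not actually prove this proposition: it is imported verbatim from Bombin and Mart\'in-Delgado \cite{Bombin:2007qv}, where ribbon operators are built from elementary (direct and dual) triangle operators via the glueing rule $F_{\rho_1\rho_2}(h,g)=\sum_{k}F_{\rho_1}(h,k)F_{\rho_2}(k^{-1}hk,k^{-1}g)$ and the joint commutation rules are verified on the elementary configurations and then propagated by induction. Your plan is therefore exactly the strategy of the cited source rather than a different route, and it is sound in outline. The one structural point you get right that is easy to miss is that the triangles away from the joint do not commute ``trivially'' as individual operators --- the labels fed into later triangles depend on holonomies read off earlier --- and it is precisely the glueing rule that localizes the whole statement to the shared endpoint; you invoke this correctly.

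That said, as written the proposal is a plan, not a proof: the entire content of the proposition lives in the base-case computation (the explicit action of the dual-triangle shift and direct-triangle projector at the shared vertex, producing exactly $k\mapsto hkh^{-1}$, $\ell\mapsto h\ell$ for the left joint and the transported-flux conjugation $g^{-1}h^{-1}g$ for the right joint), and you assert the outcome rather than carry it out. Also be careful with the right-joint identity: in the form stated here the labels $(h,g)$ and $(k,\ell)$ \emph{trade places} between $\gamma_3$ and $\gamma_4$ on the right-hand side, which is not what your narrative (``its holonomy condition is shifted to $\ell g^{-1}h^{-1}g$'' on the same ribbon) describes; you should either verify that the swapped form is what the base case actually yields or recognize it as a relabeling of the left-joint-style form $F_{\gamma_4}(k,\ell g^{-1}h^{-1}g)F_{\gamma_3}(h,g)$. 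Finally, your appeal to spine/spoke reversal to dispatch the remaining orientation cases needs the explicit involution identities for $F_{\bar\gamma}(h,g)$, which you would have to state and check; without them the reduction to ``one canonical orientation'' is incomplete.
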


\begin{figure}[h!]
	\centering
	\begin{subfigure}{.4\textwidth}
		\centering
		\includegraphics[scale=0.4]{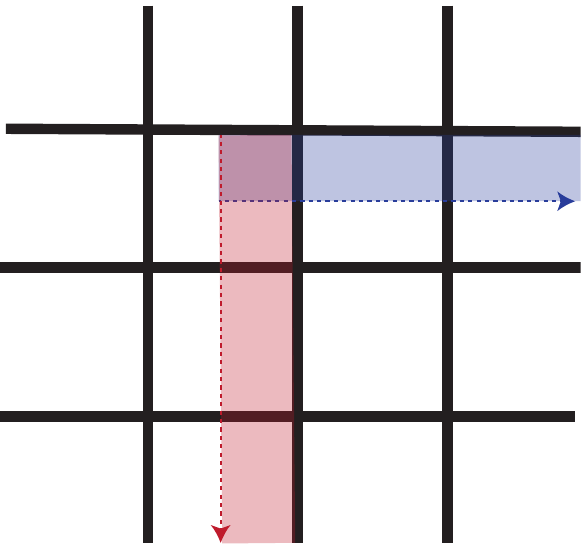}
		\caption{Example of a left joint.}
		\label{fig:left-jt}
	\end{subfigure}
	\begin{subfigure}{.4\textwidth}
		\centering
		\includegraphics[scale=0.4]{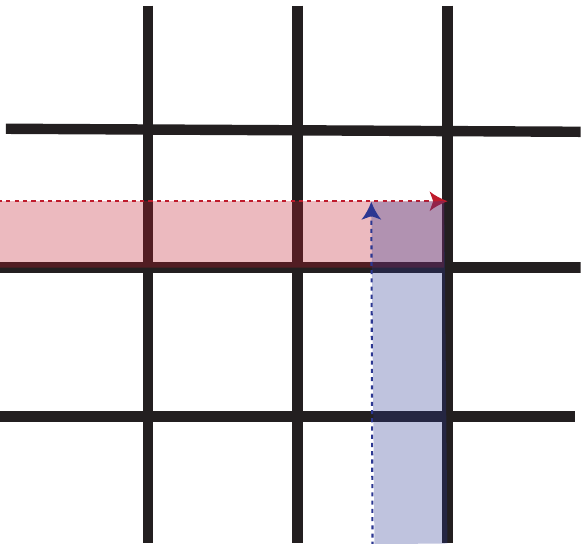}
		\caption{Example of a right joint.}
		\label{fig:right-jt}
	\end{subfigure}
	\caption{The two types of joints between two ribbons.}
\end{figure}

To prove our theorem, we need a number of intermediate results.
The first is that the ribbons of interest are the center of a particular set of ribbon operators,
\begin{prop}
	Consider two ribbons $\gamma,\gamma'$ that satisfy either a left-joint or a right-joint relation.
	Then, for $k\in G$, the ribbon operators
	\begin{equation}
		F_{\gamma} ([k])=\frac{1}{|[k]|}\sum_{\substack{k\in [k]\\ l\in G}} F_{\gamma} (k,l).
	\end{equation}
	commute with all possible ribbon operators on $\gamma'$ and $\gamma$.
	\label{prop:cent-direct-check}
\end{prop}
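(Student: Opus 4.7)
The plan is to check the commutativity directly, using the joint commutation relation \eqref{eq:leftjoint} (and its right-joint analogue) together with the algebra of ribbon operators supported on a single ribbon, which takes the form $F_\gamma(h_1, g_1)\, F_\gamma(h_2, g_2) = \delta_{g_1, g_2}\, F_\gamma(h_1 h_2, g_1)$ (this is a standard identity from \cite{Bombin:2007qv}). The key observation is that the averaging $\frac{1}{|[k]|}\sum_{w \in [k],\, l \in G}$ in the definition of $F_\gamma([k])$ is precisely tailored to wash out the conjugation and the left/right multiplication by $h$ that appear in those two identities.

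For commutation with a ribbon $F_{\gamma'}(h, g)$ on a distinct ribbon meeting $\gamma$ at a left joint, I would apply \eqref{eq:leftjoint} term by term to get
\[
F_{\gamma'}(h, g)\, F_\gamma([k]) \;=\; \frac{1}{|[k]|} \sum_{w \in [k],\, \ell \in G} F_\gamma(hwh^{-1},\, h\ell)\, F_{\gamma'}(h, g),
\]
and then reindex by $w \mapsto h^{-1} w h$ and $\ell \mapsto h^{-1}\ell$. Conjugation permutes $[k]$ and left multiplication permutes $G$, so the summand recombines into $F_\gamma([k])\, F_{\gamma'}(h,g)$. The right-joint case is the same argument with the conjugation replaced by the action $\ell \mapsto \ell g^{-1} h^{-1} g$ on the second label, which still leaves $G$ invariant as one sums over $\ell$.

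For commutation with another ribbon operator $F_\gamma(h, g)$ on $\gamma$ itself, I would use the single-ribbon algebra to reduce
\[
F_\gamma(h, g)\, F_\gamma([k]) - F_\gamma([k])\, F_\gamma(h, g) \;\propto\; \sum_{w \in [k]} \bigl( F_\gamma(hw, g) - F_\gamma(wh, g) \bigr),
\]
after which the substitution $w \mapsto h^{-1} w h$ in one of the two sums identifies them, since $[k]$ is closed under conjugation. Thus each of the three commutators (with left-joint $\gamma'$, right-joint $\gamma'$, or same-ribbon $\gamma$) vanishes.

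The entire argument is a bookkeeping exercise once the two joint relations and the single-ribbon multiplication rule are in hand; there is no genuine obstacle. The only point that takes real care is making sure the correct joint (left vs.\ right) corresponds to the correct conjugation pattern, and that the summation measures on $[k]$ and $G$ are indeed invariant under the substitutions used. Notably, this argument shows more than claimed: the operators $F_\gamma([k])$ commute with \emph{every} ribbon operator whose ribbon is connected to $\gamma$ through a sequence of left or right joints, which is what is needed in the main text to conclude that they lie in the center $\mathcal{A}_b \cap \mathcal{A}_b'$.
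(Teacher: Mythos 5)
Your proof is correct and uses the same ingredients as the paper's — the left/right joint relations, the same-ribbon product rule $F_\gamma(k,l)F_\gamma(h,g)=\delta_{l,g}F_\gamma(kh,l)$, and reindexing by conjugation and left multiplication — so it is essentially the same argument. The only difference is direction: the paper starts from a general ansatz $\sum_{k,l}f(k,l)F_\gamma(k,l)$ and \emph{derives} $f(k,l)=f([k])$ from the commutation requirements, which additionally shows these are the only such combinations (a fact invoked later when arguing that the central ribbons generate the full center), whereas you verify directly that the stated operators commute, which is all the proposition literally claims.
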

\begin{proof}
	We look for a ribbon on $\gamma$ that has this property.
	\begin{equation}
		F_{\gamma,\mathrm{c}}=\sum_{\substack{k,l\in G}}f(k,\ell) F_{\gamma} (k,l)
		\label{eqn:ansatz-ribb}
	\end{equation}

	Imposing commutation with an arbitrary ribbon on $\gamma'$ by plugging this into \eqref{eq:leftjoint} yields the condition
	\begin{equation}
		\forall h,\ f(k, l) = f(hkh^{-1}, hl).
	\end{equation}
	The two combinations of $k,l$ invariant under this set of transformations are the conjugacy class $[k]$ and $l^{-1} k l$, so we find $f (k,l) = f ([k], l^{-1} k l)$.
	
	The product of two operators on the same ribbon can be easily worked out to be
	\begin{equation}
	  F_{\gamma} (k,l) F_{\gamma} (h,g) = \delta_{l,g} F_{\gamma} (kh,l).
	  \label{eqn:same-ribb-prod}
	\end{equation}
	Imposing commutation of \eqref{eqn:ansatz-ribb} with $F_{\gamma} (h,g)$ gives
	\begin{align}
	  F_{\gamma,c} F_{\gamma} (h,g) &= \sum_{k \in G} f ([k],g^{-1} k g) F_{\gamma} (kh,g) \nonumber\\
	  &= \sum_{k' \in G} f([k'], g^{-1} h k' h^{-1} g) F_{\gamma} (h k', g) \nonumber\\
	  &= F_{\gamma} (h,g) \sum_{k' \in G} f ([k'], l^{-1}h k' h^{-1} l) F_{\gamma} (k',l).
	  \label{eqn:same-ribb-comm}
	\end{align}
	In the second line, we have defined $k' = h^{-1} k h$ and used the invariance of the sum under conjugation.
	The last line only equals $F_{\gamma} (h,g) F_{\gamma,c}$ if we take
	\begin{equation}
		f(k, l) = f([k]).
	\end{equation}
	The right joint condition doesn't yield any additional constraints.
\end{proof}

This proposition will be sufficient to prove our theorem in the case without matter.
To include matter, we need some more results.

\begin{prop}
  Central ribbon operators live on a special type of ribbon.
	The ribbon $\gamma$ can end on a spoke, as long as that spoke borders only one plaquette in $P_{\mathrm{bulk}}$, like
	\begin{equation}
	  \includegraphics[width=.3\textwidth,valign=c]{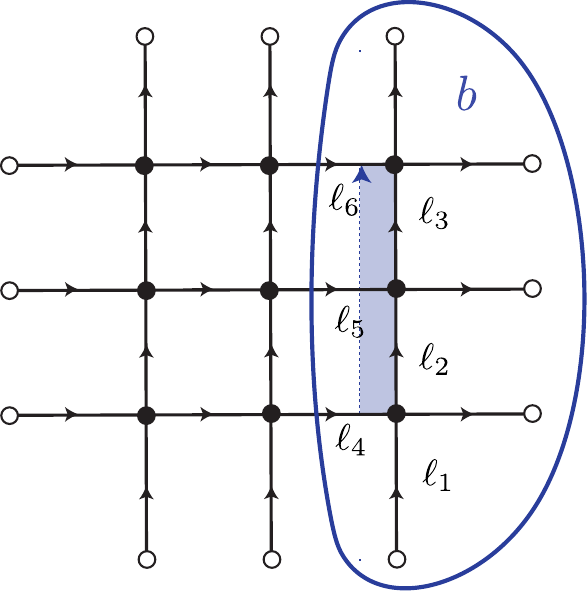}
	  \label{eqn:cent-ribb-short}
	\end{equation}

	Below, we state this concisely as ``central ribbons end on a spoke.''
	\label{prop:ribb-short}
\end{prop}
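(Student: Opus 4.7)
The plan is to prove Proposition \ref{prop:ribb-short} by showing that a central ribbon $F_\gamma([h])$ ending on a spoke $s$ that borders a single bulk plaquette $p$ commutes with all the constraints at its endpoint site, so it is a well-defined element of $\mathcal{L}(\mathcal{H}_{\mathrm{phys}})$. Equivalently, I will exhibit it as a truncation of an ordinary central ribbon whose endpoint lies outside $P_{\mathrm{bulk}}$, so that the earlier definition of $F_\gamma([h])$ applies without change.

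First I would extend $\gamma$ by additional spokes detouring around the bulk plaquette $p$ to form a longer ribbon $\gamma'$ whose endpoint site is outside $P_{\mathrm{bulk}}$ (this is always possible because the spoke $s$ borders exactly one bulk plaquette, and one can always route around $p$ without entering another bulk plaquette). For any fixed $(w,g) \in G \times G$, the operator $F_{\gamma'}(w,g)$ differs from $F_\gamma(w,g)$ by shift actions on the new spoke links whose parameters are determined by the group elements around a closed loop bounding $p$. On the subspace selected by the flatness projector $B_p$ these extra shifts collapse to the identity, so $F_\gamma(w,g)$ and $F_{\gamma'}(w,g)$ agree on $\mathcal{H}_{\mathrm{phys}}$. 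Summing over $w \in [h]$ and $g \in G$ preserves this equality, giving $F_\gamma([h]) = F_{\gamma'}([h])$ as operators on $\mathcal{H}_{\mathrm{phys}}$.

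Next I would verify gauge invariance at the short endpoint directly, without appealing to $\gamma'$. At the endpoint vertex the only potentially noncommuting piece of $A_v$ conjugates the flux label $w$ on $s$; averaging $w$ over the conjugacy class $[h]$ kills this action, just as in the proof of Proposition \ref{prop:cent-direct-check}. At the endpoint plaquette there is, by hypothesis, only \emph{one} bulk plaquette $p$ whose flatness must be preserved; the sum $\sum_g F_\gamma(w,g)$ averages the parallel-transport data attached to the endpoint in a way that can be reabsorbed into a relabelling of the holonomy around $p$, hence commutes with $B_p$. If $s$ instead bordered two bulk plaquettes, we would need two independent averages, and the single sum over $g$ would not be enough. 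This is precisely where the one-plaquette hypothesis enters.

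The main technical obstacle is making the detour $\gamma \to \gamma'$ explicit in a general lattice and ensuring it does not pick up holonomies of other bulk plaquettes or cross matter excitations. I would dispatch this by first applying the elementary moves of Section \ref{sec:lattice_in} to reduce $p$ to a minimal form (a single contracted link, or a lollipop if $p$ contains matter), after which the extension is a single extra spoke and the commutation identities of Proposition \ref{prop:cent-direct-check} apply verbatim; lattice independence then transports the conclusion back to the original lattice. A careful accounting of the one-plaquette condition under these moves is the only subtle point, and it follows because Move 2 preserves the set of bulk plaquettes adjacent to any given spoke.
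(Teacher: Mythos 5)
Your overall strategy---identify the truncated central ribbon with a boundary-anchored one and check gauge invariance at the new endpoint---is the right shape, and your observation that the conjugacy-class average is what rescues Gauss's law at the truncated endpoint is exactly the paper's qualitative explanation. But the technical core of your reduction fails. The claim that $F_{\gamma}(w,g) = F_{\gamma'}(w,g)$ on $\mathcal{H}_{\mathrm{phys}}$ \emph{for fixed} $(w,g)$ is false, and it is false for precisely the reason the proposition is nontrivial. Extending a ribbon is not a deformation: the topological property \eqref{eqn:ribb-def} applies only to ribbons with the \emph{same} endpoints, whereas your $\gamma'$ has a new endpoint. Concretely, the extended ribbon shifts the terminal spoke by $w$ parallel-transported along the extra spine link, i.e.\ by $g_1^{-1} w g_1$ on a group-basis state where that link carries $g_1$, while the truncated ribbon shifts it by $w$ itself; these differ on physical states (which superpose over $g_1$) unless $w$ is central. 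Likewise, extra \emph{shift} actions never ``collapse to the identity'' on the flat subspace---only projections trivialize there, and $L_\ell(w')$ with $w'\neq e$ is never the identity. The equality of the truncated and boundary-anchored operators holds only \emph{after} the sum over $w\in[h]$: the paper resolves the identity $\sum_{g_1} F_{\ell_1}(e,g_1)=\mathds{1}$ on the extra spine link, which turns the transported shifts into shifts by $g_1^{-1}wg_1$, and then relabels $w'=g_1^{-1}wg_1$ inside the conjugacy-class sum so that all $g_1$-dependence cancels and the operator manifestly has no support on $\ell_1$. That relabelling is the whole proof and cannot be performed term by term in $(w,g)$.

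Your treatment of flatness also misattributes the mechanism. The sum over $g$ is needed because the magnetic (projection) part of $F_\gamma(w,g)$ onto the truncated spine's holonomy fails to be gauge invariant at the new endpoint; it does nothing for commutation with $B_p$. Flatness of each bulk plaquette crossed by the ribbon is preserved term by term by the correlated transported shifts on consecutive spokes, and the hypothesis that the terminal spoke borders only one plaquette of $P_{\mathrm{bulk}}$ enters because a shift on that spoke would otherwise change the holonomy of a second, uncrossed bulk plaquette---something no averaging over $g$ can repair; one would have to lengthen the ribbon instead. Finally, the assertion that one can always ``route around $p$ without entering another bulk plaquette'' is not true on a general lattice and is in any case unnecessary once the argument is run through the class-summed identity above.
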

\begin{proof}
	We can see this directly from our definition.
	Focusing on $\ell_{2}$ in \eqref{eqn:cent-ribb-short}, the operator can be written as
	\begin{equation}
	  F_{\eth b} ([h]) = \frac{1}{\abs{[h]}} \sum_{w \in [h], g \in G} T_{\ell_{1},\bar{\ell}_{4}} (w) T_{\ell_{1} \ell_{2}, \bar{\ell}_{5}} (w) T_{\ell_{1} \ell_{2} \ell_{3}, \bar{\ell}_{6}} (w).
	  \label{eqn:ribb-short-1}
	\end{equation}
	Remembering that the operator $F_{\ell_{1}} (e,g_{1})$ projects the link $\ell_{1}$ to the value $g_{1}$, we can write the summand as
	\begin{equation}
		T_{\ell_{1},\bar{\ell}_{4}} (w) T_{\ell_{1} \ell_{2}, \bar{\ell}_{5}} (w) T_{\ell_{1} \ell_{2} \ell_{3}, \bar{\ell}_{6}} (w) = \sum_{g_{1} \in G} F_{\ell_{1}} (e,g_{1}) L_{\bar{\ell}_{1}} (g_{1}^{-1} w g_{1}) T_{\ell_{2},\bar{\ell_{5}}} (g_{1}^{-1} w g_{1}) T_{\ell_{2} \ell_{3}, \bar{\ell}_{6}} (g_{1}^{-1} w g_{1}).
	  \label{eqn:ribb-short-2}
	\end{equation}
	We exchange the $g_{1}$ and $w$ sums, then define $w' = g^{-1} w g$ and use $\sum_{w} = \sum_{w'}$ (because we are summing over a conjugacy class) to find that
	\begin{equation}
	  F_{\eth b} ([h]) = \left( \sum_{g_{1}} F_{\ell_{1}} (e,g) \right) \frac{1}{\abs{[h]}} \sum_{w \in [h], g \in G} L_{\bar{\ell}_{1}} (w) T_{\ell_{2},\bar{\ell_{5}}} (w) T_{\ell_{2} \ell_{3}, \bar{\ell}_{6}} (w).
	  \label{eqn:ribb-short-final}
	\end{equation}
	The operator in the brackets is the identity operator, since it is a sum over a complete set of projectors.
	Thus, we find that the central ribbon has no action on $\ell_{1}$, justifying our claim.

	The proof extends straightforwardly to arbitrary ribbons $\gamma$.
\end{proof}
The qualitative reason is this: we need to end a generic ribbon operator on $V_{\mathrm{bdry}}$ because it violates Gauss's law at the end-point.
But the sum over conjugacy classes already makes it commutes with Gauss's law.
We will call these special ribbons that don't end on $V_{\mathrm{bdry}}$ while still supporting physical operators \emph{central ribbons}.

The last intermediate result is the following:
\begin{prop}
	The central ribbon operator $F_{\eth b} ([k])$ on one side of a cut is the same as an equivalent ribbon operator $F_{\eth \overline{b}} ([k^{-1}])$ on the opposite side of the cut. Thus this operator lives in both algebras.

	For example, the two ribbons in
	\begin{equation}
	  \includegraphics[width=.3\textwidth,valign=c]{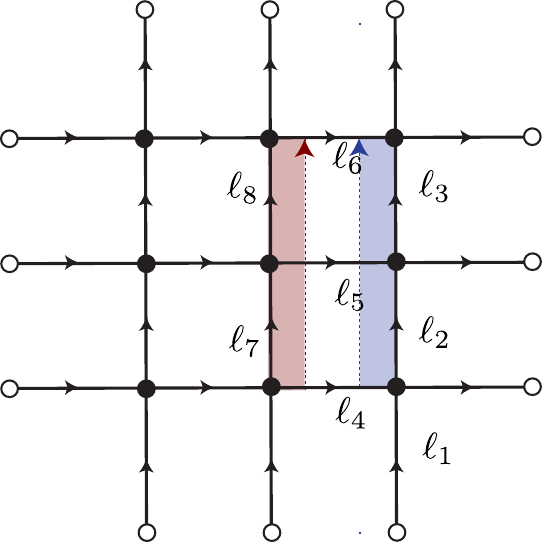}
	  \label{eqn:cent-flip-eg}
	\end{equation}
	are related in this way.
	\label{prop:ribb-flip}
\end{prop}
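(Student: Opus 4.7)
The plan is to show this in two complementary ways: first, by reducing to a single-link lattice via the elementary moves of Section \ref{sec:lattice_in}, and second, by passing to the representation basis and invoking Gauss's law globally. Either argument suffices; presenting both gives the cleanest picture.

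First, I would apply Proposition \ref{prop:ribb-short} to trim both $F_{\eth b}([k])$ and $F_{\eth \overline{b}}([k^{-1}])$ so that neither extends beyond the spokes crossing $\eth b = \eth \overline{b}$. Then I use Move 2 (add/remove a plaquette) repeatedly to collapse all plaquettes bordering $\eth b$ until the cut crosses a single link $\ell$. Under this deformation the two ribbon operators transform into central ribbons supported on $\ell$, one traversed from the $b$-side and the other from the $\overline{b}$-side. The orientation flip $\ell \leftrightarrow \bar{\ell}$ is implemented by the isomorphism $\ket{g}_{\ell} \cong \ket{g^{-1}}_{\bar{\ell}}$ recorded in the footnote after the definition of $\mathcal{H}_{\ell}$. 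Under this isomorphism a left shift $L_{\ell}(w)$ becomes $R_{\bar\ell}(w) = L_{\bar\ell}(w^{-1})$ (up to the identification); summing over conjugates then matches $[k]$ on one side with $[k^{-1}]$ on the other.

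Concretely, after the reduction both operators become a sum of ``electric'' projectors on the single surviving link. In the representation basis, $F_{\eth b}(\upmu)$ projects onto the subspace where $\ell$ carries irrep $\upmu$. The orientation flip exchanges $\upmu \leftrightarrow \upmu^{*}$, so $F_{\eth \overline{b}}(\upmu) = F_{\eth b}(\upmu^{*})$. Inverting the character relation \eqref{eq:central_ribbons} and using $\chi_{\upmu}(k^{-1}) = \chi_{\upmu^{*}}(k)$ together with $d_{\upmu^{*}} = d_{\upmu}$ and $|[k^{-1}]| = |[k]|$, one obtains
\[
F_{\eth \overline{b}}([k^{-1}]) = \sum_{\upmu} \chi_{\upmu}(k^{-1}) F_{\eth \overline{b}}(\upmu) \cdot (\text{const.}) = \sum_{\upmu} \chi_{\upmu^{*}}(k) F_{\eth b}(\upmu^{*}) \cdot (\text{const.}) = F_{\eth b}([k]),
\]
where the constant tracks the $|[h]|/|G|$ factors and the sum over $\upmu^{*}$ is relabeled to $\upmu$ since $\upmu \mapsto \upmu^{*}$ is a bijection on $\hat{G}$.

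The underlying physical reason, which I would state as a sanity check, is that $\Sigma = D^{2}$ is globally gauge-invariant: the total electric flux out of the whole disk vanishes, so the flux flowing out of $b$ and the flux flowing out of $\overline{b}$ are dual irreps of each other on every physical state. The main technical obstacle is the bookkeeping for orientations and the identification $\ket{g}_{\ell} \cong \ket{g^{-1}}_{\bar\ell}$ under the isomorphism of $\mathcal{H}_{\ell}$ with $\mathcal{H}_{\bar\ell}$, together with justifying that the lattice moves used to collapse $\eth b$ to a single link commute through to transform $F_{\eth b}([k])$ and $F_{\eth \overline{b}}([k^{-1}])$ into the advertised single-link operators; this latter step follows from the ribbon transformation rule described around equation \eqref{eq:ribbon_under_deform}.
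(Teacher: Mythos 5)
Your argument is correct, but it takes a genuinely different route from the paper's. The paper works directly on the original lattice: it uses the identity $T_{\rho,\ell}(h) = T_{\rho\ell,\bar{\ell}}(h^{-1})$ to rewrite each electric action on a spoke as a transported shift seen from the other side, then uses flatness of the plaquettes along $\eth b$ to re-route the parallel-transport paths into $\overline{b}$, and finally strips the dependence on the first link by the same class-sum argument as in Proposition \ref{prop:ribb-short}; the conjugacy class gets inverted in the process, giving $F_{\eth\overline{b}}([k^{-1}])$ directly. You instead collapse $\eth b$ to a single link via the elementary moves and do a representation-theoretic computation there, using $F_{\eth\overline{b}}(\upmu) = F_{\eth b}(\upmu^*)$ together with $\chi_{\upmu}(k^{-1}) = \chi_{\upmu^*}(k)$ and the bijection $\upmu\mapsto\upmu^*$. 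Both are valid: your route makes the physical content (the flux out of $b$ and out of $\overline{b}$ are conjugate irreps) completely transparent and reduces the orientation bookkeeping to a single link, at the cost of leaning on the lattice-move isomorphisms and the ribbon transformation rule of \eqref{eq:ribbon_under_deform}; the paper's route is more local, extends verbatim to arbitrary ribbons without re-deforming the lattice, and makes explicit exactly where the standing assumption that no plaquette in $P_{\eth b}$ contains matter is used (flatness is needed to re-route the transport). You should flag that your plaquette-collapsing step relies on that same assumption — if a plaquette along $\eth b$ contained matter you could not remove it without changing the topological class of the cut — and, in the single-link step, note that the equality of the left class-sum and the right class-sum of shifts (i.e.\ $\sum_{w\in[k]} L_\ell(w) = \sum_{u\in[k^{-1}]} R_\ell(u)$) holds only because you sum over a full conjugacy class, which is the precise point where centrality enters.
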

\begin{proof}
  The fundamental fact we need is that the electric operator on one side is the same as the transported shift acting on the other side, $R_{\ell} (h) = T_{\ell,\ell} (h^{-1})$.
	More generally, denoting by $\rho \ell$ the path obtained by adjoining $\ell$ to the end of $\rho$,
	\begin{equation}
	  T_{\rho,\ell} (h) = T_{\rho \ell, \bar{\ell}} (h^{-1}).
	  \label{eqn:L-flip}
	\end{equation}

	Then, specializing to the example,
	\begin{align}
	  F_{\eth b} (h) = \frac{1}{\abs{[h]}} \sum_{w \in [h], g \in G} T_{\bar{\ell}_{1}, \bar{\ell}_{1}} (w^{-1}) T_{\ell_{2} \bar{\ell}_{5},\ell_{5}} (w^{-1}) T_{\ell_{2} \ell_{3} \bar{\ell}_{6}, \ell_{6}} (w^{-1}).
	  \label{eqn:ribb-flip-1}
	\end{align}
	But the parallel-transport along $\ell_{2} \bar{\ell}_{5}$ ($\ell_{2} \ell_{3} \bar{\ell}_{6}$) is the same as the parallel-transport along $\bar{\ell}_{4} \ell_{7}$ ($\bar{\ell}_{4} \ell_{7} \ell_{8}$), and so we can shift the first argument in each $T$ operator accordingly.
	Then, by the same argument as in the proof of Proposition \ref{prop:ribb-short}, we can get rid of the $\bar{\ell}_{1}$.
	We then find the red ribbon in \eqref{eqn:cent-flip-eg}, where the shift parameter is summed over $[h^{-1}]$ instead of $[h]$, as claimed.

	Again, the proof extends to other ribbons in a straightforward way.

	Note here the importance of our requirement that the dual path $\eth b$ not intersect any plaquettes containing matter.
	If any of them did contain matter, then we couldn't use flatness and the argument would not go through.
\end{proof}

Now we are ready for the following proof.
\begin{proof}
  [Proof of Theorem \ref{thm:centre}]
	First, consider the case without matter, since the argument is more direct.
	We first remember that the center commutes with every operator in $b$, and that the algebra of $b$ is generated by the ribbon operators.
	We consider three types of ribbons: those that share no end-points with $\gamma$, those that share one, and those that share two.
	Operators on ribbons that share no end-points with $\gamma$ can always be deformed so that their support has no intersection with $\gamma$.
	Those that share one, which we will call $\gamma'$, can be deformed so that $\gamma,\gamma'$ satisfy either a left-joint or a right-joint relation.
	$\gamma$ and $\gamma'$ cannot cross per se, since such a $\gamma'$ would leave $b$.
	Finally, any ribbon that shares both end-points with $\gamma$ is topologically the same as $\gamma$ itself.
	So, the non-trivial commutation relations to check are exactly those checked in Proposition \ref{prop:cent-direct-check}.
	This proves it in this case.

	In the presence of matter, there are many more topologically inequivalent ribbons in $b$.
	So we opt for an indirect argument.
	Notice that any operator supported entirely in $\overline{b}$ is in the commutant $\mathcal{A}_{b}'$ of the algebra $\mathcal{A}_{b}$ of operators supported in $b$, for the simple reason that spacelike-separated operators commute.
	Furthermore, the center is defined exactly as $\mathcal{Z}_{b} = \mathcal{A}_{b} \cap \mathcal{A}_{b}'$.
	Proposition \ref{prop:ribb-flip} shows that our ribbons satisfy this property.

	Finally, we have to show in both cases that these operators generate the full center.
	Firstly, note that only ribbon operators containing (a) no projection along the magnetic part and (b) a sum over the electric parameter weighted by a class function can be deformed out of $b$.
	We need the first because if the ribbon projects the spine then it has a non-trivial action on a link in $L_{\mathrm{bdry}}$; and we need the second to run the argument that proved Proposition \ref{prop:ribb-short} to make the electric action not depend on a link in $L_{\mathrm{bdry}}$ (in our example, this link is $\ell_{1}$).
	And if an operator has a non-trivial action on a link in $L_{\mathrm{bdry}} \cup b$, then it cannot be supported entirely in $\overline{b}$.

	Thus, the only candidates are operators that have the same structure as our central ribbons, but supported on some ribbon $\gamma'$ that is topologically distinct from $\gamma$.
	It can be topologically distinct for two reasons.
	The first is that there is some matter excitation between the two.
	In this case, the operator on $\gamma'$ cannot be deformed to $\overline{b}$ because it cannot cross the matter excitation.
	The second reason $\gamma'$ might be topologically distinct is that it does not share both end-points with $\gamma$.
	In that case also, we cannot deform it to have support only in $\overline{b}$.
	As an example, take the same lattice as before but a different choice for $b$,
	\begin{equation}
		\includegraphics[width=.3\textwidth,valign=c]{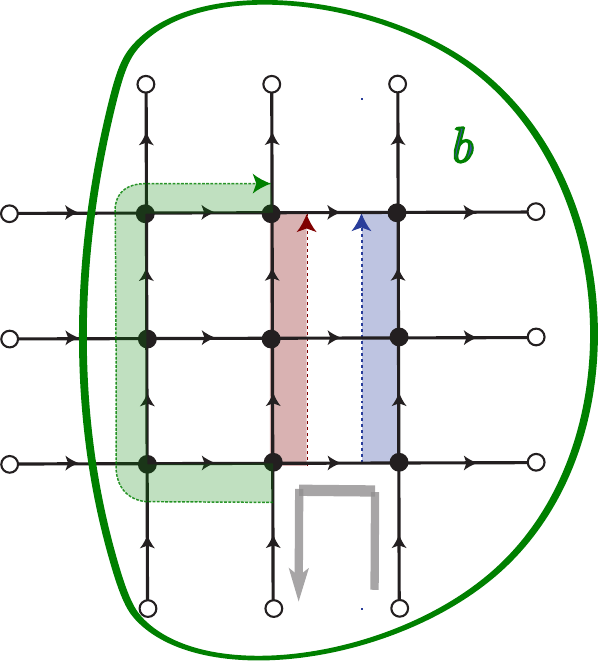}
	  \label{eqn:ribb-long-flip}
	\end{equation}
	Now, the blue ribbon is no longer supported on $\eth b$; trying to deform it towards the boundary of $b$, we get first the red and then the green ribbon.
	The reason is that the operator has non-zero commutation with the holonomy shown as a thick gray arrow in \eqref{eqn:ribb-long-flip}. (Note that this holonomy is a boundary-anchored Wilson line, and therefore it's gauge-invariant.)\footnote{This non-commutation is the true reason such a ribbon cannot be in the center. For a ribbon supported on $\eth b$, this holonomy is contained in neither $\mathcal{A}_b$ nor $\mathcal{A}_b'$, allowing our ribbons to be central.}
	But this requirement causes the green ribbon to have the horizontal parts above and below, so that it is always partly supported in $b$.
	So only a ribbon supported on $\eth b$ itself can be deformed to $\overline{b}$.
\end{proof}

\subsection*{Irrep Basis}
It turns out that a Fourier-transformed set that measures the total irrep flowing out of $b$ is more useful for us.
\begin{thm}
	The central ribbons
	\begin{equation}
		F_{\eth b} (\upmu) =\frac{d_\upmu}{|G|}\sum_{[k]\in G}\chi_\upmu(k)F_{\eth b} ([k])
	\end{equation}
    are a complete, orthonormal set of projectors,
    \begin{equation}
        F_{\eth b} (\upmu) F_{\eth b} (\upnu) = \delta_{\upmu\upnu} F_{\eth b} (\upmu), \quad \sum_\upmu F_{\eth b} (\upmu) = \mathds{1}.
    \end{equation}
\end{thm}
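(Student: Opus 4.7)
The plan is to exploit the character orthogonality relations once the ribbon operators are placed in a convenient form. First I would define the shorthand $P_w \coloneqq \sum_{g \in G} F_\gamma(w,g)$ for $w \in G$, and show that the proposed formula for $F_{\eth b}(\upmu)$ can be rewritten as $F_{\eth b}(\upmu) = \frac{d_\upmu}{|G|}\sum_{w \in G} \chi_\upmu(w) P_w$. The factor $1/|[h]|$ appearing in $F_{\eth b}([h])$ cancels against the multiplicity with which each conjugacy class appears in the sum over $h \in G$ in the statement of the theorem, and since $\chi_\upmu$ is a class function we can freely collapse the two sums into a single sum over group elements.

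The key computational lemma is the multiplicative structure $P_w P_{w'} = P_{w w'}$. This follows directly from the same-ribbon composition rule $F_\gamma(k,l) F_\gamma(h,g) = \delta_{l,g} F_\gamma(kh,l)$ established in \eqref{eqn:same-ribb-prod}: the Kronecker delta collapses the double sum over $g,g'$ to a single sum, giving $P_{ww'}$. Thus the $P_w$ furnish a (generally non-faithful) representation of $G$ on $\mathcal{H}_{\mathrm{pre}}$, and $F_{\eth b}(\upmu)$ is precisely the standard character-theoretic projector onto the $\upmu$ isotypic component of this representation.

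Orthogonality then reduces to a textbook identity. Expanding
\begin{equation}
F_{\eth b}(\upmu) F_{\eth b}(\upnu) = \frac{d_\upmu d_\upnu}{|G|^2} \sum_{w,w'} \chi_\upmu(w) \chi_\upnu(w') P_{ww'}
\end{equation}
and substituting $u = w w'$, the inner sum becomes $\sum_w \chi_\upmu(w) \chi_\upnu(w^{-1}u)$, which by Schur orthogonality of matrix elements equals $\frac{|G|}{d_\upmu}\delta_{\upmu\upnu}\chi_\upmu(u)$. Reassembling yields $\delta_{\upmu\upnu}F_{\eth b}(\upmu)$. For completeness, swapping the $\upmu$ and $w$ sums in $\sum_\upmu F_{\eth b}(\upmu)$ produces a factor $\sum_\upmu d_\upmu \chi_\upmu(w) = |G|\delta_{w,e}$ (the column orthogonality of the character table), leaving only $P_e = \sum_g F_\gamma(e,g)$. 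This last operator is the identity on $\mathcal{H}_{\mathrm{pre}}$, since $F_\gamma(e,g)$ acts as the projector onto the configuration in which the spine holonomy equals $g$, and summing these projectors over $g \in G$ is a resolution of the identity.

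The argument is essentially character theory applied to the multiplicative family $\{P_w\}$, so the main obstacle is purely bookkeeping: verifying that the conversion between sums over conjugacy classes and sums over group elements in the definition of $F_{\eth b}(\upmu)$ does not introduce spurious factors of $|[h]|$, and confirming that $P_e$ really is the identity (as opposed to a projector onto some subspace) on the pre-gauged Hilbert space where these operators initially act.
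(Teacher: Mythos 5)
Your proof is correct and follows essentially the same route as the paper's: converting the conjugacy-class sum to a sum over group elements, using the same-ribbon composition rule to reduce everything to character orthogonality, and invoking $\sum_\upmu d_\upmu\chi_\upmu(w)=|G|\delta_{w,e}$ for completeness. Your packaging of the computation via the multiplicative family $P_w P_{w'}=P_{ww'}$ (so that $F_{\eth b}(\upmu)$ is manifestly the standard isotypic projector) is a slightly cleaner way to organize the same calculation, and your check that $P_e$ resolves the identity is the same observation the paper makes about $F_{\eth b}([e])$.
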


\begin{proof}
  We first prove that they are projectors.
	Unpacking the definition, $F_{\eth b} (\upmu)$ can be written as
	\begin{align}
	  F_{\eth b} (\upmu) &= \frac{d_{\upmu}}{|G|} \sum_{h,g \in G} \chi_{\upmu} (g) \frac{1}{|[h]} \sum_{w \in [h]} F_{\gamma}(w,g) \nonumber\\
	  &= \frac{d_{\upmu}}{|G|} \sum_{h,g \in G} \chi_{\upmu} (g) F_{\gamma} (h,g).
	  \label{eqn:cent-ribb-unpack}
	\end{align}
	Here, $\gamma$ is the ribbon associated to $\eth b$, as defined in the main text.
	To go from the first line to the second, we use $\sum_{h\in G} |[h]|^{-1} f([h]) = \sum_{h \in C_{G}} f([h])$ and $\sum_{h \in C_{G}} \sum_{w \in [h]} f(w) = \sum_{h \in G} f(h)$.

	Taking the product of two irrep ribbons,
	\begin{align}
	  F_{\eth b}(\upmu) F_{\eth b}(\upmu') &= \frac{d_{\upmu} d_{\upmu'}}{|G|^{2}} \sum_{g,h,g',h' \in G} \chi_{\upmu} (h) \chi_{\upmu'} (h') F_{\gamma} (h,g) F_{\gamma} (h',g') \nonumber\\
	  &= \frac{d_{\upmu} d_{\upmu'}}{|G|^{2}} \sum_{g,h,h' \in G} \chi_{\upmu} (h) \chi_{\upmu'} (h') F_{\gamma} (hh',g) \nonumber\\
	  &= \frac{d_{\upmu} d_{\upmu'}}{|G|^{2}} \sum_{g,\tilde{h}} \left[ \sum_{h'} \chi_{\upmu} (\tilde{h} h'^{-1}) \chi_{\upmu'} (h') \right] F_{\gamma} (\tilde{h},g).
	  \label{eqn:cent-ribb-proj-1}
	\end{align}
	In the last line, we have defined $ \tilde{h} = hh'$ and used $\sum_{h,h'} = \sum_{\tilde{h},h'}$.

	To evaluate the quantity in the square brackets, we need to expand the characters in terms of representation matrices $\chi_{\upmu} (h) = D^{\upmu}_{\mathsf{i} \mathsf{i}} (h)$, summation assumed.
	These representation matrices have the following orthogonality property,
	\begin{equation}
	  \sum_{h} D^{\upmu}_{\mathsf{i} \mathsf{j}} (h) D^{\upmu'}_{\mathsf{i}' \mathsf{j}'} (h^{-1}) = \frac{|G|}{d_{\upmu}} \delta^{\upmu \upmu'} \delta_{\mathsf{i} \mathsf{j}'} \delta_{\mathsf{j} \mathsf{i}'}.
	  \label{eqn:D-ortho}
	\end{equation}
	We apply it as follows,
	\begin{align}
	  \sum_{h' \in G} \chi_{\upmu} (\tilde{h} h'^{-1}) \chi_{\upmu'} (h') &=  D^{\upmu}_{\mathsf{j}\mathsf{i}} (\tilde{h}) \sum_{h' \in G} D^{\upmu}_{\mathsf{i}\mathsf{j}} (h'^{-1}) D^{\upmu'}_{\mathsf{i}'\mathsf{i}'} (h' ) \nonumber\\
	  &= D^{\upmu}_{\mathsf{j}\mathsf{i}} (\tilde{h}) \frac{|G|}{d_{\upmu}} \delta_{\upmu\upmu'} \delta_{\mathsf{i}\mathsf{i}'} \delta_{\mathsf{j}\mathsf{i}'} \nonumber\\
	  &= \frac{|G|}{d_{\upmu}} \chi_{\upmu} (\tilde{h}) \delta_{\upmu\upmu'}.
	  \label{eqn:character-prod}
	\end{align}
	Plugging this back in to \eqref{eqn:cent-ribb-proj-1}, we find the orthogonality of projectors, as advertised.

	To prove completeness of the set of projectors we use character orthogonality in the following form
	\begin{align}
	  \sum_{\upmu} d_{\upmu} \chi_{\upmu} (h) &= \sum_{\upmu} \chi_{\upmu^{*}} (e) \chi_{\upmu} (h) = |G| \delta_{h,e}.
	  \label{eqn:char-orth-compl}
	\end{align}
	We use this to evaluate
	\begin{align}
	  \sum_{\upmu} F_{\eth b} (\upmu) &= \frac{1}{|G|} \sum_{h \in G} \sum_{\upmu} d_{\upmu} \chi_{\upmu} (h) F_{\eth b} ([h]) \nonumber\\
	  &= F_{\eth b} ([e]) = \mathds{1}.
	  \label{eqn:cent-ribb-comp-pf}
	\end{align}
	$F_{\eth b} ([e])$ is the identity operator because it neither shifts nor projects any links.
\end{proof}

\subsection*{Non-commutation relation} \label{sapp:non-comm-pf}
We claim that two intersecting central ribbons, corresponding to $\gamma$ and $\gamma'$, fail to commute. Again, by intersecting we mean that $\gamma$ and $\gamma'$ intersect at one point, as all other cases can be deformed to this case assuming that the path of deformation does not pass through any charges.

\begin{thm}
	The central ribbons $F_{\gamma} ([h])$ and $F_{\gamma'} ([k])$ fail to commute. Here $\gamma$ and $\gamma'$ denote paths on the lattice that intersect exactly once. 
\end{thm}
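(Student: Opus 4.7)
The plan is to reduce the problem to a canonical local configuration at the intersection of $\gamma$ and $\gamma'$, and then exploit the irrep basis \eqref{eq:central_ribbons} together with the $F$-move \eqref{eqn:F-move} to show non-commutation.

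First, I would use the lattice-independence machinery of Section \ref{sec:lattice_in} to deform the lattice so that the neighbourhood of the intersection point looks as simple as possible. Concretely, I deform so that in a disk-shaped neighbourhood of the crossing there is a single central vertex with four incident links $\ell_{1},\ell_{2},\ell_{3},\ell_{4}$, arranged so that $\gamma$ separates $\{\ell_{1},\ell_{2}\}$ from $\{\ell_{3},\ell_{4}\}$ and $\gamma'$ separates $\{\ell_{2},\ell_{3}\}$ from $\{\ell_{1},\ell_{4}\}$. The rest of the lattice plays no role because the central ribbons are by Proposition \ref{prop:ribb-short} supported only on the four spokes at the crossing. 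Since the elementary moves are unitary on $\mathcal{H}_{\mathrm{phys}}$ and map ribbons to topologically equivalent ribbons, showing non-commutation in this canonical setting implies non-commutation in the original one.

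Next I switch to the irrep basis. By \eqref{eq:central_ribbons}, it suffices to prove that the orthogonal projectors $F_{\gamma}(\upmu)$ and $F_{\gamma'}(\upnu)$ fail to commute for some $\upmu,\upnu$, because the conjugacy-class operators are related to them by an invertible Fourier transform over the characters of $G$, so $[F_\gamma([h]),F_{\gamma'}([k])]=0$ for all $[h],[k]$ would force $[F_\gamma(\upmu),F_{\gamma'}(\upnu)]=0$ for all $\upmu,\upnu$. In the canonical configuration above, $F_\gamma(\upmu)$ is the projector onto the subspace in which the pair $(\ell_1\ell_2)$ fuses to the irrep $\upmu$, while $F_{\gamma'}(\upnu)$ projects onto the pair $(\ell_2\ell_3)$ fusing to $\upnu$. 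These are precisely the two fusion channels appearing on the two sides of the $F$-move \eqref{eqn:F-move}, restricted to the Gauss-law-invariant sector (the four links fusing to the identity).

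Finally I invoke the $F$-move \eqref{eqn:F-move}: it is a unitary change of basis between simultaneous eigenstates of $\{F_\gamma(\upmu),\text{external irreps}\}$ and those of $\{F_{\gamma'}(\upnu),\text{external irreps}\}$, with coefficients given by the $6j$-symbols of $G$. Whenever the $F$-matrix has an off-diagonal entry, the two sets of projectors cannot be simultaneously diagonalised, so $[F_\gamma(\upmu),F_{\gamma'}(\upnu)]\neq 0$. To make this concrete and verify that the hypothesis is non-vacuous, I would work out the example $G=SU(2)$ with all four external legs in the spin-$1/2$ irrep: the singlet subspace is two-dimensional, $F_\gamma(0)$ and $F_{\gamma'}(0)$ are rank-one projectors onto the $(12)(34)$ and $(23)(14)$ singlet pairings respectively, and the overlap of these two vectors is $\pm 1/2$ (the relevant $6j$-symbol), so their commutator is manifestly nonzero. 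The main obstacle is bookkeeping rather than conceptual: one must verify carefully that the lattice deformation genuinely produces the four-leg configuration (which requires the dual paths $\gamma,\gamma'$ to cross transversally, a property preserved by the moves), and that in the presence of matter the contractions used to simplify the plaquettes on either side of the crossing remain allowed, i.e.\ do not try to absorb matter legs into the boundary of a plaquette adjacent to $\gamma\cup\gamma'$.
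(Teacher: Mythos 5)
Your proof is correct in outline, but it takes a genuinely different route from the paper's. The paper's Appendix proof is a direct group-basis computation: it labels the two links $a,b$ at the intersection, writes out the action of $F_{\gamma}([h])$ and $F_{\gamma'}([k])$ (with the shift parameters parallel-transported through $g_{\gamma_1},g_{\gamma_2}$) in both orders, and observes that the resulting kets $\ket{b^{-1}hba}\ket{\cdots}$ and $\ket{\cdots}\ket{a^{-1}kab}$ differ, checking this explicitly in a Lie-group expansion. You instead reduce the lattice to a canonical four-leg star, pass to the irrep basis via the invertible character transform (a valid reduction: if all $F_\gamma([h])$ commuted with all $F_{\gamma'}([k])$, so would every linear combination, in particular the projectors $F_\gamma(\upmu)$), and identify the two families of projectors with the two fusion channels related by the $F$-move \eqref{eqn:F-move}. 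This is essentially the argument the paper only sketches informally in Sections \ref{sssec:non-comm} and \ref{ssec:non-comm-area}, promoted to a proof. What your route buys is a sharper characterization of \emph{when} the commutator is non-zero — precisely when the $F$-matrix has an off-diagonal entry, i.e.\ for non-abelian $G$ — which the paper's computation obscures (indeed, for abelian $G$ both expressions in the paper's calculation coincide and the theorem as stated is vacuously false, a restriction neither you nor the paper makes explicit). What the paper's route buys is that it works directly on an arbitrary lattice with arbitrary intersecting ribbons and needs no appeal to lattice reduction or representation theory.

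Two small points to tighten. First, your appeal to Proposition \ref{prop:ribb-short} to argue ``the rest of the lattice plays no role'' is a slight misattribution — that proposition only says central ribbons may terminate on a spoke; the real justification is that the elementary moves collapse each of the four boundary arcs into a single effective leg carrying the total flux of that arc, after which the ribbons genuinely act only on the four legs. Second, since the model is defined for finite $G$, your concrete witness should not be $SU(2)$: for a general non-abelian finite $G$ take all four legs in an irrep $\upmu$ with $d_\upmu>1$ (with appropriate conjugates); the two singlet pairings in $\mathrm{Inv}(\upmu\otimes\upmu^*\otimes\upmu\otimes\upmu^*)$ have overlap $1/d_\upmu\notin\{0,1\}$, so the corresponding rank-one projectors fail to commute. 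With those repairs the argument is complete.
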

\begin{proof}
	One way to show this is to just directly commute the two ribbons past each other using the left and right joint relations. We will instead take the following route because it has a nicer interpretation. Letting $\ket{\psi}$ be any state on the lattice, we will show this by directly demonstrating that $F_{\gamma} ([h])F_{\gamma'} ([k])\ket{\psi}$ is not equal to $F_{\gamma'} ([k])F_{\gamma} ([h])\ket{\psi}$.

	Note that the only places where the operators can possibly not commute are on the two links, $a$ and $b$, located at the intersection of the two ribbons, as in the following diagram. Here both $a$ and $b$ are located on the incoming spines of their respective ribbons.

	\begin{figure}[h!]
		\centering
		\includegraphics[scale=0.4]{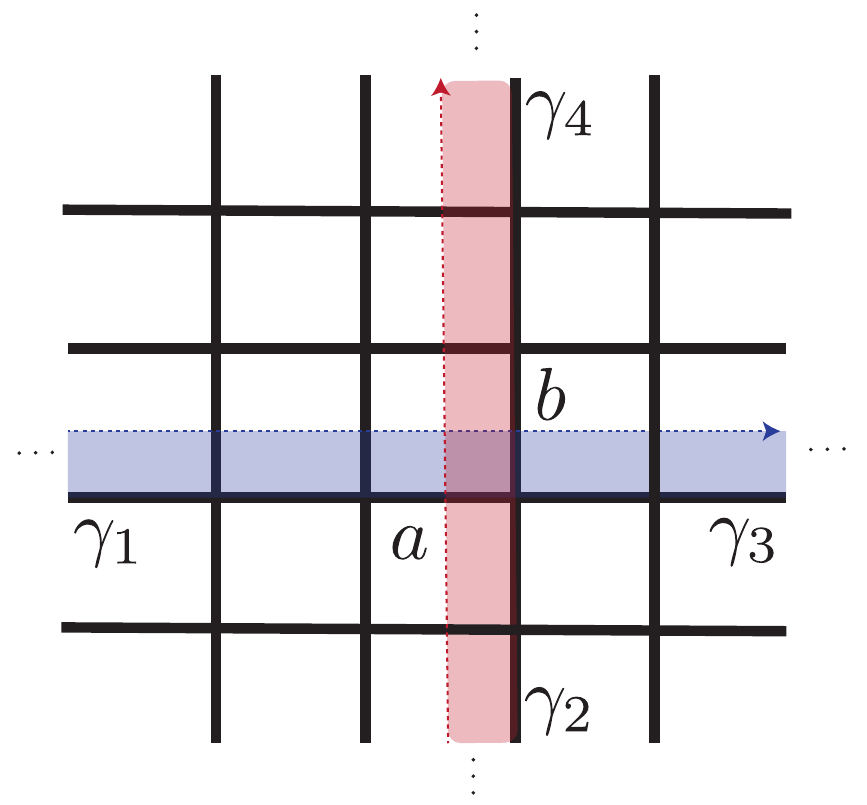}
		\caption{Two ribbons that intersect exactly once. We will call the links bordering the intersection point $a$ and $b$, so that the path of the blue ribbon can be written in the form $\gamma=\gamma_1 a\gamma_3$, and the path of the red ribbon can be written in the form $\gamma'=\gamma_2 b\gamma_4$.}
	\end{figure}

	We will also label the elements on these two links as $a$ and $b$, so that $\ket{a}\ket{b}$ is the part of $\ket{\psi}$ that we are concerned with. Without loss of generality we take $a$ to be located on the spine of $\gamma'$, and $b$ to be located on the spine of $\gamma$. We can also break up the paths as $\gamma=\gamma_1 a \gamma_3$ and $\gamma'=\gamma_2 b \gamma_4$, i.e. into a piece that comes before the link and piece that comes after the link. We let $g_{\gamma_1}$ denote the product of all the elements on the links in $\gamma_1$ in the order traversed by the ribbon, and similarly we let $g_{\gamma_2}$ denote the same quantity for $\gamma_2$. Then
	\begin{equation}
		\begin{split}
			F_{\gamma'} ([k])F_{\gamma} ([h])\ket{a}\ket{b}&=F_{\gamma'} ([k])\ket{b^{-1}g^{-1}_{\gamma_1}hg_{\gamma_1}ba}\ket{b}\\
																											 &=\ket{b^{-1}g^{-1}_{\gamma_1}hg_{\gamma_1}ba}\ket{a^{-1}b^{-1}g^{-1}_{\gamma_1}h^{-1}g_{\gamma_1}bg^{-1}_{\gamma_2}kg_{\gamma_2}b^{-1}g^{-1}_{\gamma_1}hg_{\gamma_1}bab}\\
																											 &=\ket{b^{-1}hba}\ket{a^{-1}b^{-1}h^{-1}bkb^{-1}hbab}
		\end{split}
	\end{equation}
	where to get to the last line we took $h\rightarrow g^{-1}_{\gamma_1}hg_{\gamma_1}$ and $k\rightarrow g^{-1}_{\gamma_2}kg_{\gamma_2}$. Similarly,
	\begin{equation}
		\begin{split}
			F_{\gamma} ([h])F_{\gamma'} ([k])\ket{a}\ket{b}&=F_{\gamma} ([h])\ket{a}\ket{a^{-1}g^{-1}_{\gamma_2}kg_{\gamma_2}ab}\\
																											 &=\ket{b^{-1}a^{-1}g^{-1}_{\gamma_2}kg_{\gamma_2}ag^{-1}_{\gamma_1}hg_{\gamma_1}a^{-1}g^{-1}_{\gamma_2}kg_{\gamma_2}aba}\ket{a^{-1}g^{-1}_{\gamma_2}kg_{\gamma_2}ab}\\
																											 &=\ket{b^{-1}a^{-1}kaha^{-1}kaba}\ket{a^{-1}kab}
		\end{split}
	\end{equation}
	where we again took $h\rightarrow g^{-1}_{\gamma_1}hg_{\gamma_1}$ and $k\rightarrow g^{-1}_{\gamma_2}kg_{\gamma_2}$.
\end{proof}

These two expressions are not equal.
We can see this clearly by taking $G$ to be some continuous Lie group.
In that case, we can write $h=e^{i\epsilon H}\approx 1+i\epsilon H$ and $k=e^{i\epsilon K}\approx 1+i\epsilon K$. Then
\begin{equation}
	F_c^{[h]}(\gamma)F_c^{[k]}(\gamma')\ket{a}\ket{b}=\ket{b^{-1}hba}\ket{a^{-1}kab+\epsilon^2 g_H}
\end{equation}
for some $g_H$, and similarly
\begin{equation}
	F_c^{[k]}(\gamma')F_c^{[h]}(\gamma)\ket{a}\ket{b}=\ket{b^{-1}hba+\epsilon^2 g_K}\ket{a^{-1}kab}
\end{equation}
for some $g_K$.

\section{General subalgebras} \label{app:gen-regs}

As stated in Section \ref{sec:subalg_and_centers}, in general when we consider a subregion $b$ of the lattice, the subalgebra we associate to that $b$ is \emph{not} just the set of physical operators that act trivally on its complement $\overline{b}$.
	It is instead a larger subalgebra which includes some operators that act non-trivially on $\overline{b}$.
In this appendix, we describe this subalgebra and its center in detail.

The fundamental principle is that these subalgebras do correspond to subregions of the reduced lattice. Once we convert the reduced lattice back to the full lattice, that subalgebra turns out to be this novel type, not the one we would most naturally ascribe to a subregion.
	That said, this is indeed a natural subalgebra to assign to (disconnected) $b$ if you were interested in having the center of that subalgebra include the operator that measures the net electric flux out of $b$.

The summary description of these subalgebras is as follows. Say $b$ has $n$ connected regions. These subalgebras are the algebraic union of the natural subalgebra associated to each connected region, along with a network of Wilson lines connecting all of these subregions in pairs. These Wilson lines allow us to compare the electric flux leaving each subregion. The center is generated by ``fused ribbon'' operators that measure the total electric flux leaving all the subregions.

\subsection*{Subregion subalgebras on the reduced lattice} \label{ssec:red-latt-subs}
We use the notation of Section \ref{sec:TN} for the parts of the reduced lattice.
It has $n$ vertices $v_{1}\dots v_{n}$, with corresponding links $\ell_{1}\dots \ell_{n}$ oriented out of the central vertex; together these links form the set $\mathfrak{f}_{\partial}$.
There are also $m$ lollipops $l_{1} \dots l_{m}$, which form the set $\mathfrak{f}_{\mathrm{lol}}$.
In our figures, we will label $\ell_{r}$ with $r$ and $l_{r}$ with $r'$.
Together, $\mathfrak{f} = \mathfrak{f}_{\partial} \cup \mathfrak{f}_{\mathrm{lol}}$ consists of $\mathfrak{f}_{1}\dots \mathfrak{f}_{n+m}$, numbered clockwise around the central vertex.

\begin{figure}[h!]
	\centering
	\includegraphics[width=.5\textwidth]{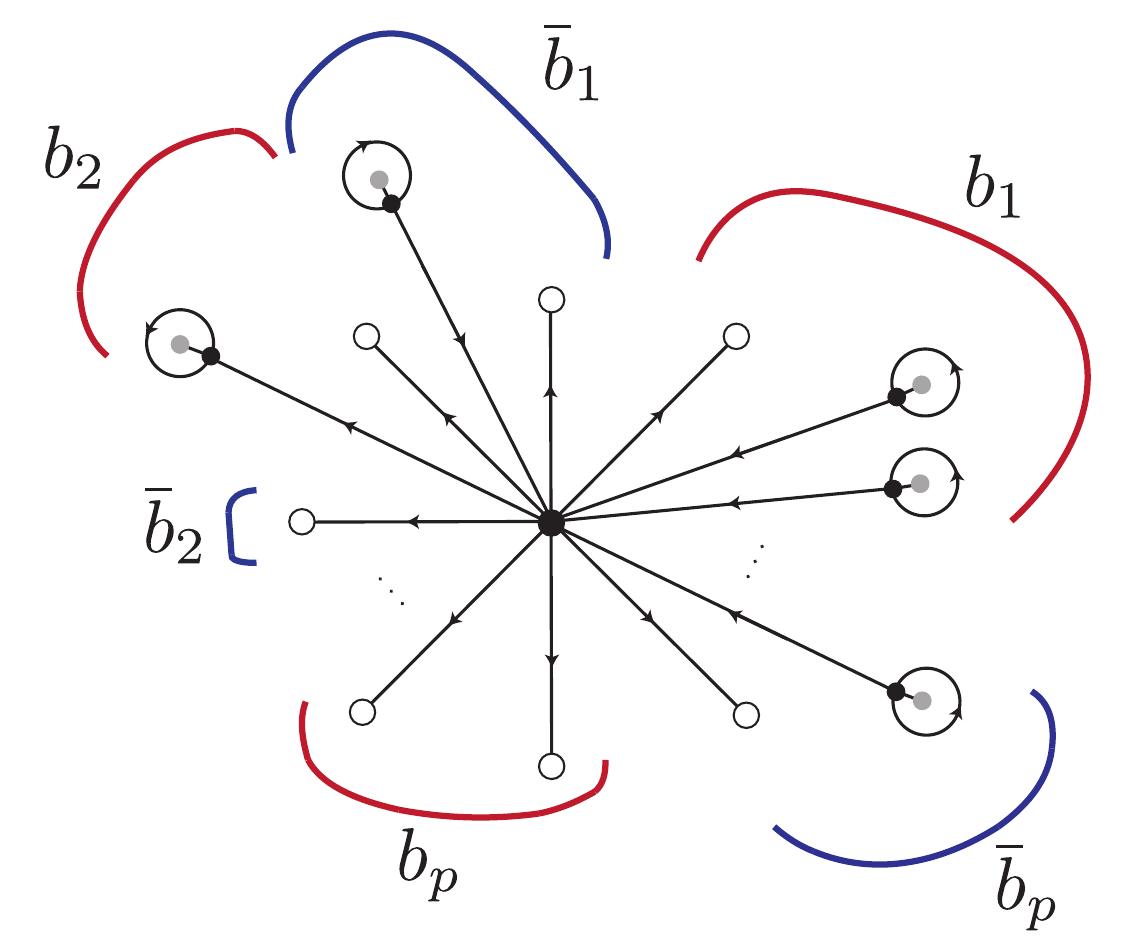}
	\caption{An example of the setup for this section.}
	\label{fig:red-latt-setup}
\end{figure}

The set of subregions we minimize over in the TN are subsets $b \subseteq \mathfrak{f}$.
We split $b$ into contiguous sets $b_{1} \dots b_{\mathsf{p}}$.
By convention, $b_{1} = \mathfrak{f}_{1} \cup \dots \mathfrak{f}_{\abs{b_{1}}}$.
Similarly, we split the complement $\overline{b}$ into $\overline{b}_{1}\dots \overline{b}_{\mathsf{p}}$, also arranged clockwise.

\subsubsection*{Subalgebras for each component}
For a factor $\mathfrak{f}_{i} \in \mathfrak{f}_{\partial}$, the algebra $\mathcal{A}_{\mathfrak{f}_{i}}$ (projected to the gauge-invariant subspace) is the set of operators that act on the boundary vertex,
\begin{equation}
	\mathcal{A}_{\mathfrak{f}_{i}} = \left\{ R_{\mathfrak{f}_{i}} (h) | h \in G \right\}'' = \bigoplus_{\upmu} \mathcal{L} (\mathcal{H}_{\upmu}).
	\label{eqn:bd-link-alg}
\end{equation}
Note that, when we factorize, $\mathfrak{f}_{i}$ will actually be the Hilbert space of the entire link, but the gauge-invariant algebra on a single link is that on just one end of the link.

For a factor $\mathfrak{f}_{i} \in \mathfrak{f}_{\mathrm{lol}}$, the algebra consists of the closed ribbon operator that measures the quantum double charge of the matter, and the operator that measures the irrep of the stem.
Call the algebra of closed ribbons $\mathcal{A}_{i,\mathrm{cl}}$; the details will not be important to us.
To describe the rest, name the three links on the lollipop $\mathscr{s}_{i}, \mathscr{b}_{i}, \mathscr{h}_{i}$ (the stem, boundary and heart),
\begin{equation}
  \includegraphics[width=.2\textwidth,valign=c]{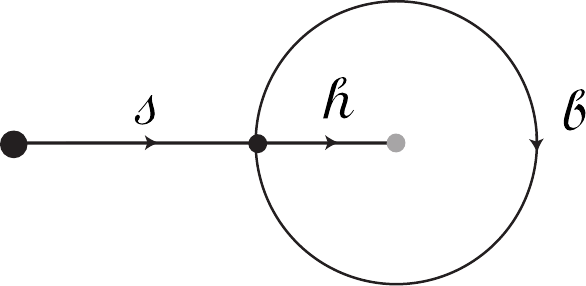}
  \label{eqn:lol-lab}
\end{equation}
The other set of operators measures the total electric flux leaving the lollipop,
\begin{equation}
	L_{\mathscr{s}_{i}} (\upmu) \coloneqq \frac{1}{\abs{G}} \sum_{h} \chi_{\upmu} (h) L_{\mathscr{s}_{i}} (h).
	\label{eqn:lol-el}
\end{equation}
We can use Gauss's law to rewrite the electric operator as a ribbon
\begin{align}
	L_{\mathscr{s}_{i}} (\upmu) = R_{\mathscr{s}_{i}} (\upmu^{*}) &= \frac{1}{\abs{G}} \sum_{h \in G} \chi_{\upmu} (h) T_{\mathscr{b}_{i},\mathscr{s}_{i}} (h) T_{\mathscr{h}_{i},\mathscr{s}_{i}} (h) T_{\bar{\mathscr{b}}_{i},\mathscr{s}_{i}} (h) \nonumber\\
																																&= \includegraphics[width=.3\textwidth,valign=c]{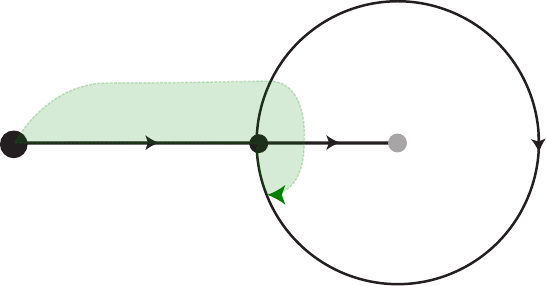}
																																\label{eqn:stem-ribb}
\end{align}
Along with $\mathcal{A}_{i,\mathrm{cl}}$, these operators generate $\mathcal{A}_{\mathfrak{f}_{i}}$.

Now consider a contiguous region $b_{i}$.
Apart from the operators on each factor, there are now also ribbon operators $F_{\gamma} (h,g)$, $\gamma \subseteq b_{i}$,
\begin{equation}
	\mathcal{A}_{b_{i}} = \left\{ F_{\gamma} (h,g) \middle| \gamma \subseteq b_{i} \right\}'' \bigvee_{\mathfrak{f} \in b_{i}} \mathcal{A}_{\mathfrak{f}}.
	\label{eqn:comp-alg}
\end{equation}
Under lattice deformations, these contiguous regions map to those considered in Section \ref{sec:subalg_and_centers}, and the center $\mathcal{Z}_{b_{i}}$ is generated by irrep ribbons $F_{\eth b_{i}} (\upmu)$ of the sort defined there.

\subsubsection*{The full subalgebra} %\label{sssec:full-subalg-red}
With multiple components, the algebra $\mathcal{A}_{b}$ is made out of three types of operators.
\begin{enumerate}
	\item The additive algebra $\mathcal{A}_{b_{1}} \vee \dots \vee \mathcal{A}_{b_{\mathsf{p}}}$.
	\item The Wilson lines $F_{\bar{\ell}_{i} \ell_{j}}(e,g), \forall \left\{ \mathfrak{f}_{i}, \mathfrak{f}_{j} \right\} \subseteq \mathfrak{f}_{\partial} \cap b$.
		An example is shown in blue in Figure \ref{fig:pll-trans-ops}.
	\item The transported shifts $T_{\mathscr{s}_{i},\bar{\ell}_{1}} (h)$ acting on the stem of lollipop $\mathfrak{f}_{i}$, with group element transported from $v_{1}$.
		An example is shown in green in Figure \ref{fig:pll-trans-ops}.
		As in \eqref{eqn:stem-ribb}, we can transport it to shift $\mathscr{h}_{i}$ and $\mathscr{b}_{i}$ (on both ends) similarly to \eqref{eqn:stem-ribb}.
\end{enumerate}

\begin{figure}[h!]
	\centering
	\includegraphics[width=.3\textwidth]{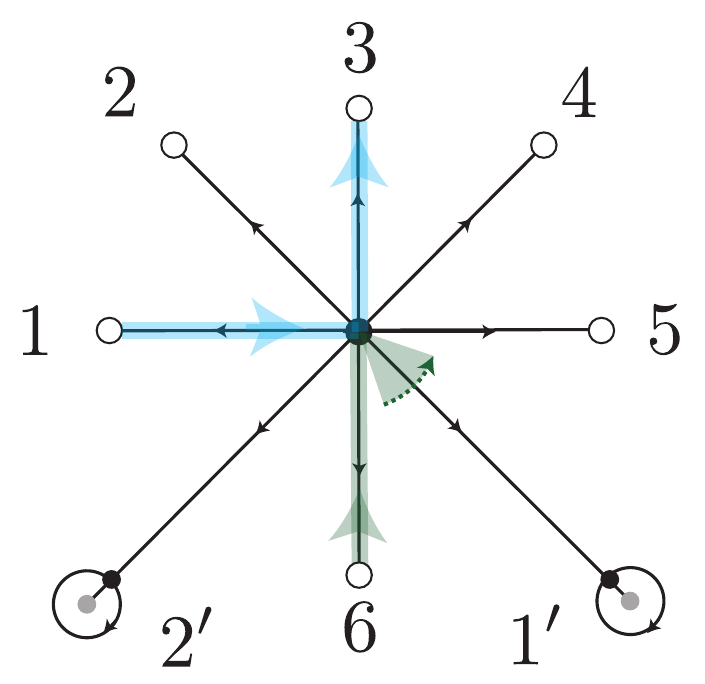}
	\caption{Examples of non-local operators in a region consisting of the links numbered $1,3,6$ and the lollipop labelled $1'$. In blue, a Wilson line from $1$ to $3$. In green, a shift acting on the stem of $1'$, transported to the boundary end of link $6$.}
	\label{fig:pll-trans-ops}
\end{figure}

The crucial fact is that we have arbitrary ribbons in each component, but only a subclass of those that cross between components.
This subclass is the one that doesn't contain any electric action; we call these the magnetic ribbon operators and the others electric ribbon operators.\footnote{
	A more accurate name would be `non--purely magnetic,' but we opt for the shorter name.
}
These can be used to parallel-transport all electric actions to any boundary vertex in $b$.
We denote a group element parallel transported along $\gamma$ as $h_{\gamma}$, so
\begin{equation}
	L_{\ell} (h_{\gamma}) = T_{\gamma,\ell} (h).
	\label{eqn:h-p-tr}
\end{equation}

The central operator is the fused ribbon operator
\begin{equation}
	F_{\eth b} (\upmu) = \frac{1}{\abs{G}} \sum_{h \in G} \chi_{\upmu} (h) \prod_{\mathfrak{f}_{i} \in b} L_{\mathfrak{f}_{i}} (h).
	\label{eqn:b-cent}
\end{equation}
If $\mathfrak{f}_{i} \in \mathfrak{f}_{\mathrm{lol}}$, then $L_{\mathfrak{f}_{i}} = L_{\mathscr{s}_{i}}$.
The reason we call it a fused ribbon operator is as follows.
We can see from the definition that it measures the total electric flux leaving $b$.
This irrep arises in the fusion of the fluxes leaving individual components, $\upmu_{1} \otimes \dots \upmu_{\mathsf{p}} \to \upmu$.
This can be seen at the level of the operator by defining the ribbons in each component
\begin{equation}
	F_{\eth b_{r};v_{1}} (h, \mathsf{1}) \coloneqq \prod_{\mathfrak{f}_{j} \in b_{r}} L_{\mathfrak{f}_{j}} (h_{\bar{\ell}_{1}}).
	\label{eqn:comp-ribbs}
\end{equation}
Here, $\mathsf{1}$ denotes the function $\mathsf{1}(h) = 1$.
We can use the fact that the character is a class function to parallel transport the group element to $v_{1}$ in \eqref{eqn:b-cent}; as a result, we can write
\begin{equation}
	F_{\eth b} (\upmu) = \frac{1}{\abs{G}} \sum_{h \in G} \chi_{\upmu} (h) \prod_{r=1}^{\mathsf{p}} F_{\eth b_{r};v_{1}} (h,\mathsf{1}), \qquad F_{\eth b_{r}} (\upmu) = \frac{1}{\abs{G}} \sum_{h \in G} \chi_{\upmu} (h) F_{\eth b_{r};v_{1}} (h).
	\label{eqn:b-cent-fused}
\end{equation}
This is the sense in which the central operator for $b$ is a fusion of the central ribbons in the separate components.

\subsection*{Mapping to the Original Lattice} \label{ssec:egs}
We are also interested in the algebra and center on the original lattice.
It will be instructive to begin with some examples.

\subsubsection*{Basic Examples}
First, consider $n = 4, m = 0$, and take $b = \ell_{1} \cup \ell_{3}$, as shown in Figure \ref{fig:four-link-gen}.
Do the ``add a vertex'' move, such that $\ell_{1}$ and $\ell_{3}$ are separated by the new link, which we call $\ell_{0}$.
The constraints on the new lattice are $L_{\ell_{1}} (h) L_{\ell_{4}} (h) L_{\ell_{0}} (h) = L_{\ell_{2}} (h) L_{\ell_{3}} (h) R_{\ell_{0}} (h) = \mathds{1}$, and so the electric operator on this new link is neither in $\mathcal{A}_{b}$ nor in $\mathcal{A}_{\overline{b}}$.
As a result, the magnetic operator is in both algebras (or rather, since it is not gauge-invariant on its own, it appears as a component of some operator in both algebras).
This is important, since this magnetic operator is required to parallel transport a shift on $\ell_{3}$ to $v_{1}$ or parallel transport a shift on $\ell_{4}$ to $v_{2}$.
Mathematically,
\begin{equation}
	\mathcal{A}_{b} = \mathcal{A}_{\ell_{1}} \vee \mathcal{A}_{\ell_{3}} \vee \left\{ F_{\bar{\ell}_{1} \ell_{0} \ell_{3}} (e,g) \right\} \qquad \mathcal{A}_{\overline{b}} = \mathcal{A}_{b}' = \mathcal{A}_{\ell_{2}} \vee \mathcal{A}_{\ell_{4}} \vee \left\{ F_{\bar{\ell}_{2} \ell_{0} \ell_{4}} (e,g) \right\}.
	\label{eqn:4-link-gen-algs}
\end{equation}
We can also derive this explicitly using the unitaries in appendix \ref{app:dg_model_properties}; the Wilson line in the reduced lattice maps to that in the bigger lattice.
To find the central operator in the new lattice, we write the one on the reduced lattice as in \eqref{eqn:b-cent-fused}, with $F_{\eth \ell_{1};v_{1}} = R_{\ell_{1}} (h)$ and $F_{\eth \ell_{3};v_{1}} = T_{\ell_{3},\bar{\ell}_{1}} (h)$.

\begin{figure}[h!]
	\centering
	\includegraphics[width=.9\textwidth]{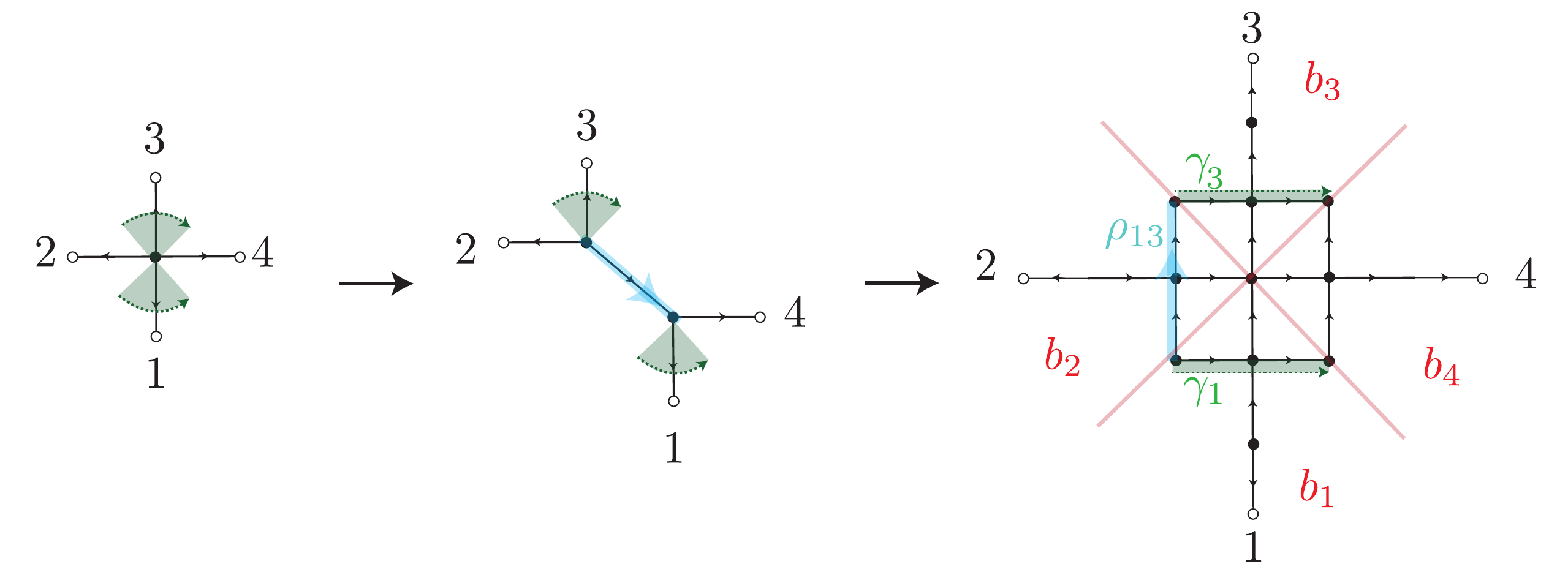}
	\caption{An example where the reduced lattice consists of four links.}
	\label{fig:four-link-gen}
\end{figure}

This generalises to larger lattices; we define $b_{1\dots 4}$ to be regions on the larger lattice as in Figure \ref{fig:four-link-gen}.
We include a Wilson line on the path $\rho_{13}$ that connects $b_{1}$ to $b_{3}$.
The central operator straightforwardly generalises \eqref{eqn:b-cent-fused}
\begin{equation}
	F_{12}^{\upmu} = \frac{1}{\abs{G}} \sum_{h \in G} \chi_{\upmu} (h) F_{\eth b_{1}} (h,\mathsf{1}) F_{\eth b_{2}} \left( T_{\rho_{13}} (h), \mathsf{1} \right).
	\label{eqn:4-link-gen-cent}
\end{equation}

As a second example, consider two matter degrees of freedom, as shown in Figure \ref{fig:gen-alg-eg-2}.
$b$ consists of $b_{1} = \ell_{1} \cup \ell_{2}$ and $b_{2} = l_{1}$, labelled $1'$ in the figure.
For $F_{\eth b_{2};v_{1}}$ in the fused ribbon operator, we use \eqref{eqn:stem-ribb} to deform it as in the rightmost arrow of the figure.
This is the ribbon deformation shown in the rightmost arrow of Figure \ref{fig:gen-alg-eg-2}.
Following this through the lattice deformations, we find
\begin{equation}
	F_{\eth b}^{\upmu} = \frac{1}{\abs{G}} \sum_{h \in G} \chi_{\upmu} (h) F_{\eth b_{1}} (h,\mathsf{1}) F_{\eth b_{2}} \left( h_{\gamma}, \mathsf{1} \right).
	\label{eqn:gen-eg-2-2}
\end{equation}
First we remove a vertex (second arrow) and add five vertices (all but one of the ones adjacent to the red links).
The add/remove a vertex move acts on a transported shift by transporting a shift to the same origin vertex ($v_{1}$ in this case); the path of the parallel transport includes the original path along with a subset of the new links.

\begin{figure}[h!]
	\centering
	\includegraphics[width=.9\textwidth]{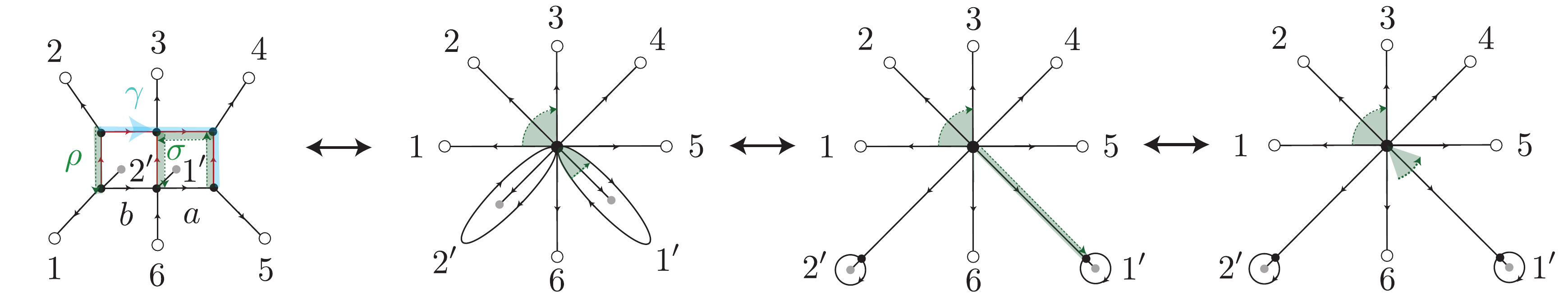}
	\caption{Second example.}
	\label{fig:gen-alg-eg-2}
\end{figure}

This can be found without explicit computation.
Remember that the central operator is a fusion of central ribbon operators, all parallel-transported to the same vertex.
Notice that $b$ on the reduced lattice separates $\overline{b}_{1} = \ell_{3} \cup \ell_{4} \cup \ell_{5}$ from $\overline{b}_{2} = \ell_{6} \cup l_{2}$ (labelled $2'$ in the figure).
Similar to how electric ribbons can't cross from $b_{1}$ to $b_{2}$, they can't cross from $\overline{b}_{1}$ to $\overline{b}_{2}$.
Thus, (a) the central ribbons that make up the fused ribbon must surround $b_{1},b_{2}$ respectively without surrounding anything else, and (b) the parallel-transport path should separate $\overline{b}_{1}$ from $\overline{b}_{2}$.
The central ribbon surrounding $b_{1}$ is no different from that considered in Section \ref{sec:subalg_and_centers}.
To find the one surrounding $b_{2}$, we first note that a central ribbon must end on links which border only one plaquette in $P_{\mathrm{bulk}}$.
There are three such ribbons surrounding $l_{1}$ on the larger lattice, $\sigma$ in Figure \ref{fig:gen-alg-eg-2} along with the following ribbons,
\begin{equation}
	\includegraphics[width=.5\textwidth,valign=c]{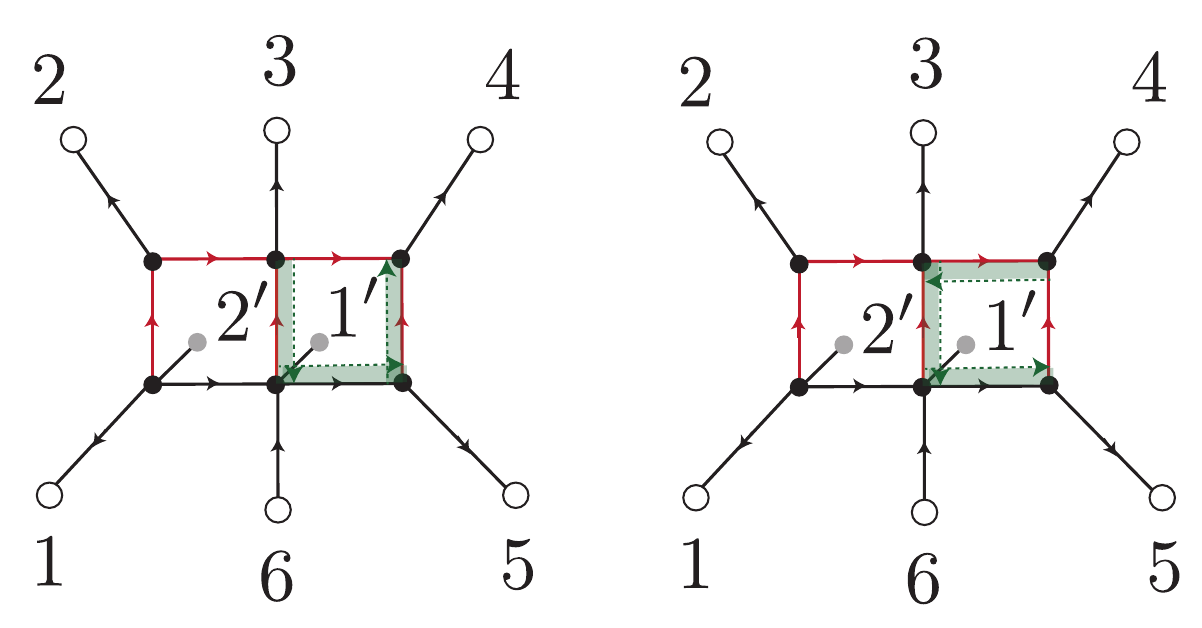}
	\label{eqn:three-ribbs-eg-2}
\end{equation}
but only one of them does not separate $\ell_{5}$ from $\ell_{1\dots 3}$.
Here, it is important to remember that the green ribbon shifts the group element on the link it ends on, and that link cannot be used for parallel-transporting in the complement region.
Similarly, there is only one path from the end of this ribbon to $v_{1}$ that separates $\overline{b}_{1}$ from $\overline{b}_{2}$.

\subsubsection*{A Complicated Example}
Finally, we should consider the possibility that the matter degree of freedom is deep in the original lattice.
Unlike the above case, the central ribbon passing through the matter link cannot live on just one plaquette, since commutation with the plaquette constraints requires that central ribbons end on links bordering only one element of $P_{\mathrm{bulk}}$.
On the reduced lattice, it does live on just the one plaquette in the lollipop.
When we go back to the original lattice, we have to use \eqref{eq:ribbon_under_deform} to extend the ribbon while modifying the lattice.

Let us see an example, shown in Figure \ref{fig:big-latt}.
On the reduced lattice, $b$ has three components, $b_{1} = \ell_{2}, b_{2} = \ell_{3}$ and $b_{3} = l_{6}$.\footnote{
	The numbering is off from previous conventions by $1$, due to a clerical error that propagated till it was too late to change it. Please bear with us.
}
Similarly, $\overline{b}$ has three components $\overline{b}_{1} = l_{2}$, $\overline{b}_{2} = {l_{3}\dots \ell_{10}}$ and $\overline{b}_{3} = \left\{ l_{7}\dots \ell_{1} \right\}$.
The relations between the numberings was found by explicit lattice transformations (not shown here, in order to preserve the reader's sanity).
Note that the correspondence is not unique.

\begin{figure}[h!]
	\centering
	\includegraphics[width=.9\textwidth,valign=c]{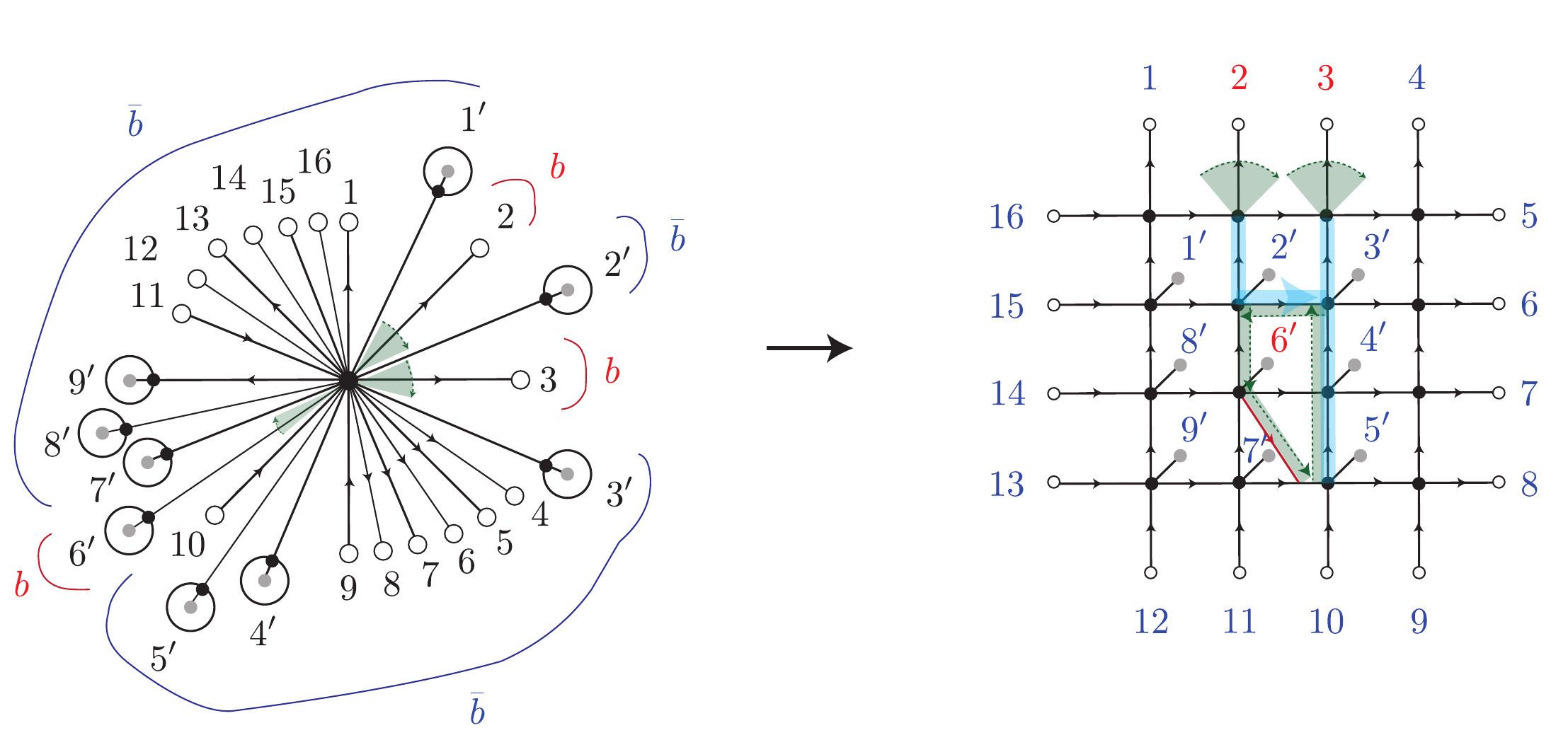}
	\caption{A big lattice where $b$ contains a matter degree of freedom deep in the lattice, along with its corresponding reduced lattice.}
	\label{fig:big-latt}
\end{figure}

Because $\ell_{2}$ and $\ell_{3}$ separate $l_{2}$ from the rest of $\overline{b}$, the Wilson line connecting them must go \emph{around} the corresponding matter vertex.
This tells us how to parallel transport $F_{\eth b_{2}} (h,\mathsf{1}) = L_{3} (h)$ to $v_{2}$ to fuse the ribbons.

The situation with $b_{3} = l_{6}$ is more interesting.
We have to be very careful that the central ribbon avoids all matter links, while ending at the right place on the boundary.
To find the right end-point, note first that $l_{6}$ lies between $\ell_{10}$ and $l_{7}$.
So the ribbon should separate these two.
However, it must not separate $\ell_{10}$ and $l_{5}$.
Visually, it is clear that the only ribbon which satisfies this property must change orientation!
The simplest way to deal with this to add new links like the red on in Figure \ref{fig:big-latt}.
Then the central ribbon as drawn has the right properties.\footnote{
	Alternatively, one could mathematically define a twisted ribbon on the original lattice.
	Just use the relation between left actions and right actions on the links being shifted twice, along with flatness of the new plaquettes bounded by red links.
}
The central operator is the fusion of the three ribbons, parallel-transported along the purple paths.

Finally, there is the issue of the physical interpretation of the central ribbon operator for $b_{3}$, $F_{\eth b_{3}} (\upmu)$.
In the reduced lattice, it clearly measures the electric flux leaving $b_{3}$; but this is not so clear in the original lattice.

\subsubsection*{The Lessons}
The algorithm to map the central operator is as follows.
First, remember that there is a unique bijective correspondence between matter degrees of freedom and boundary vertices on the two lattices.
Labelling these physical degrees of freedom in the same way on both lattices, $\mathcal{A}_{b}$ on the original lattice is not the algebra of all operators in some subregion.
Each $\mathcal{A}_{b_{i}}$ is, but there are also some magnetic operators connecting the different components.
The set of extra magnetic operators are the ones that don't commute with electric ribbons that cross from $\overline{b}_{i}$ to $\overline{b}_{j}$.

The central ribbon for each component can be fixed using the topological properties as above.
We might need to add a small number of new links to easily describe the corresponding operator.
The area operator is the fusion, like in \eqref{eqn:b-cent-fused}, of ribbon operators surrounding each component $b_{i}$.

\subsection*{A subtlety with our tensor network}
Let us look again at Figure \ref{fig:big-latt}, keeping in mind that we are building a tensor network for the boundary links.
Suppose we define the tensor network using the lattice deformations shown there and find upon minimization that the entanglement wedge of the boundary region $B = \ell_{2} \cup \ell_{3}$ is the region $b$.
Notice the following oddity: the region $\overline{B}$ is a contiguous set of boundary links in the original lattice (and it is also contiguous in the reduced lattice once we project out the lollipops).
However, two subregions of $\overline{B}$ are topologically separated by the dressing of the matter $l_{6}$.

This never happens in AdS/CFT.
If $\overline{B}$ is an interval, then any two points in $\overline{B}$ can be connected through the bulk by a path that doesn't leave the entanglement wedge.
In particular, in the continuum TQFT description, all Wilson lines stretching between any pair of points in $\overline{B}$ which are homotopic to a sub-interval of $\overline{B}$ are included in the entanglement wedge subalgebra.
(They measure things like two-point functions and entanglement of subintervals \cite{Ammon:2013hba,Castro:2018srf}.)
So, our tensor network is a bad toy model for gravity if subalgebras like this form the entanglement wedge.
It will be important to address this in future work.

There is a second oddity.
Suppose, in the example of Figure \ref{fig:big-latt}, that the lollipop $l_2$ was also included in $b$.
In that case, even though the plaquettes containing $l_2$ and $l_6$ are adjacent to each other, the algebra does not contain ribbons stretching from $l_2$ to $l_6$.
The central ribbon operator is then
\begin{equation}
  \includegraphics[width=.7\textwidth,valign=c]{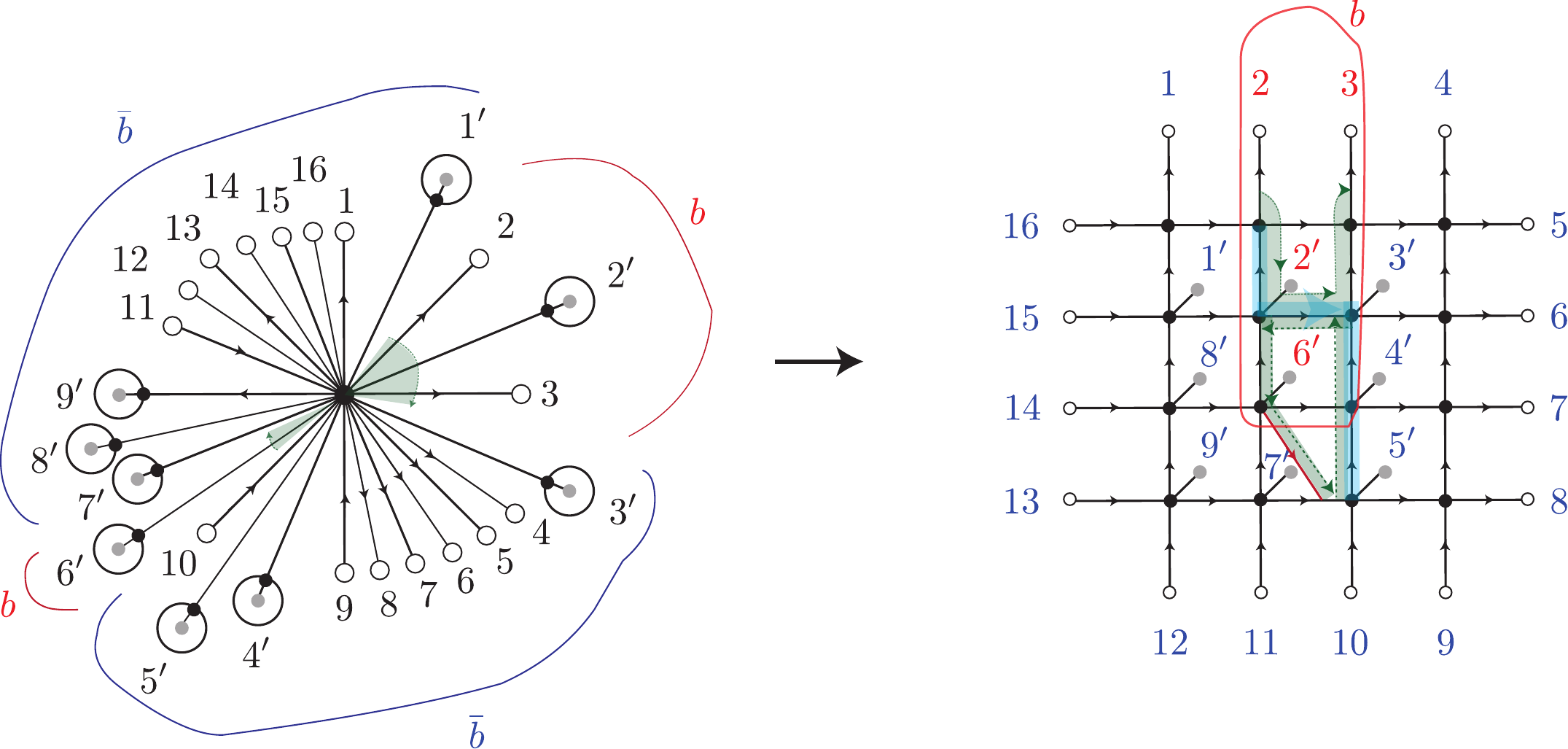}
  \label{eqn:big-latt-2}
\end{equation}
Despite the bulk region being contiguous, the algebra is still a fused subalgebra of two subalgebras dressed to different parts of the boundary.

We look forward to dealing with these subtleties in future work.
For now, let us note that we can get around this by restricting our matter to be electric.
That means that the flatness constraints are not modified.
In the example of Figure \ref{fig:big-latt} (where we return to the case where $l_2$ is excluded from $b$), note that the green ribbon around $l_{6}$ shifts the links below it twice, once with a left-multiplication and once with a right-multiplication.
If the matter doesn't have magnetic charge, the parallel transport around the plaquette containing $l_{6}$ is equivalent to a parallel transport along the link itself (by the flatness constraint).
Now, note that for any link $\ell$
\begin{equation}
  L_{\ell} (h) T_{\ell, \bar{\ell}} (h) \ket{g}_{\ell} = \ket{h g (g^{-1} h^{-1} g)} = \ket{g}.
  \label{eqn:LT-triv}
\end{equation}
Thus, the green ribbon only acts as a shift on the matter,
\begin{equation}
  \includegraphics[width=.3\textwidth,valign=c]{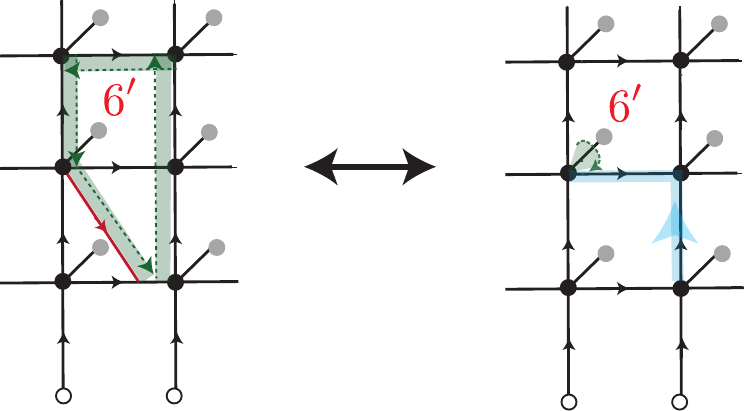}
  \label{eqn:ribb-def-el}
\end{equation}
The only role of its spine is to parallel transport the group element.
But, because of flatness, the exact path of parallel transport is immaterial; only the origin (in this case, $\ell_{2}$) matters.
Thus, the fused ribbon can be deformed to completely avoid the plaquette containing $l_{7}$, opening up a path for a ribbon to cross between the two components of $\overline{b}$,
\begin{equation}
  \includegraphics[width=.3\textwidth,valign=c]{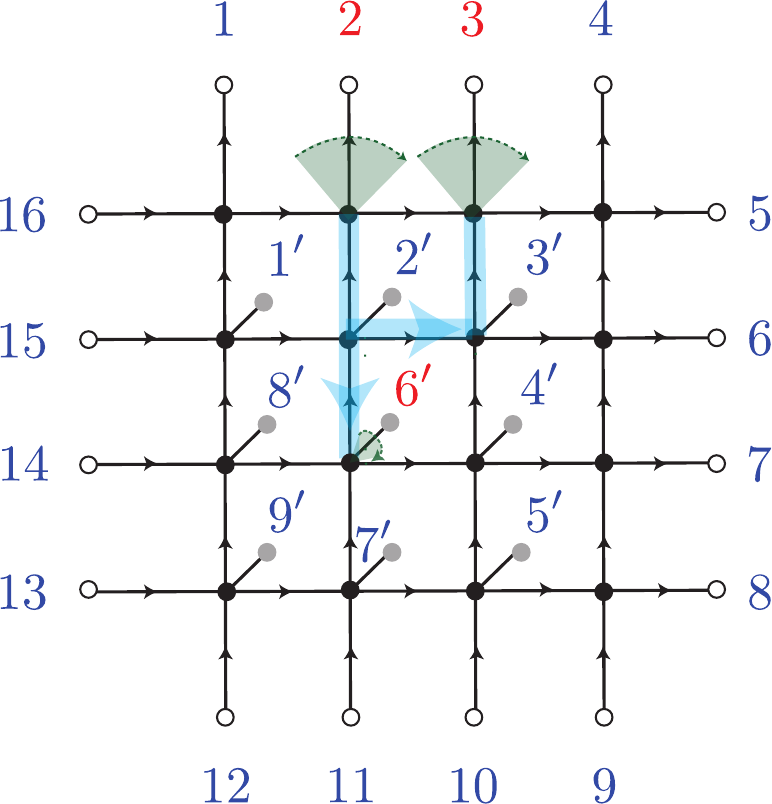}
  \label{eqn:ribb-def-el-fin}
\end{equation}

Let us sketch the general argument.
For general matter, the reason that $\mathcal{A}_{b}$ does not include general ribbons from $\ell_{2}$ to $l_{6}$ is that they are not adjacent on the reduced lattice.
But this is somewhat arbitrary, since different lattice deformations can result in different reduced lattices.
The ordering of the boundary vertices is of course fixed, but the ordering of the lollipops is not so.
We state without proof that we can move the lollipop around on the reduced lattice with a braiding unitary.\footnote{
	An example of this can be found in the appendix of \cite{Delcamp:2016eya} for an example.
}
A braiding of a quantum double charge (which general matter that modifies both types of constraints carries) and electric charge is not trivial, but the braiding of electric charges with each other is \cite{Kitaev:1997wr}.
Denoting quantum double charge with $\mathsf{R}$, $\mathsf{R} \otimes \upmu \neq \upmu \otimes \mathsf{R}$.\footnote{
	This is not the tensor product physicists are used to.
	The notation $\boxtimes$ is common in the Hopf algebra literature for this abstract tensor product.
}
Thus, if the matter is purely electric, then we can braid all the matter lollipops to be adjacent to one of the boundary links we began with, and the unitary is trivial.

\section{Tripartite information in the DG model} \label{app:ent-calcs}
We show that there are four-party states such that the tripartite information \eqref{eqn:tpi} satisfies \eqref{eqn:tpi-lim}.
Consider a reduced lattice with four links $b_{1\dots 4}$, in a state $\ket{\psi}$ such that the states $F_{\eth b_{i}} (\upmu) \ket{\psi}$ are factorised between $b_{i}$ and the complement, so that all the entanglement entropy is edge mode entropy.

Assume that the four links have fixed irreps $\upmu_{1\dots 4}$.
This is a reasonable approximation for holography, where we can make the fluctuations of four non-overlapping extremal surfaces small to leading order in $G_{N}$.

Let us then calculate a bound on the tripartite information.
The positive contributions are
\begin{equation}
  S_{r} = \log d_{\upmu_{r}},\ r =1,2,3, \qquad S_{123} = S_{4} = \log d_{\upmu_{4}}.
  \label{eqn:tpi-pos}
\end{equation}
The negative contributions can be bounded above as
\begin{equation}
  S_{r2} \le \log d_{\upmu_{r}} d_{\upmu_{2}}, \quad r = 1,3,4,
  \label{eqn:tpi-two-link-bd}
\end{equation}
since the fusion of the two irreps $\upmu_{2}, \upmu_{r}$ necessarily gives a subspace of $\upmu_{r} \otimes \upmu_{2}$.
Using $S_{13} = S_{24}$, the tripartite information can be bounded below
\begin{equation}
  I_{3} (b_{1} : b_{2} : b_{3}) = \sum_{r = 1}^{4} S_{r} - \sum_{r=1,3,4} S_{2r} \ge - 2 \log d_{\upmu_{2}}.
  \label{eqn:tpi-bd}
\end{equation}
Thus, \eqref{eqn:tpi-lim} must be satisfied, as long as $I_{3}$ is negative.

We have to fine-tune the group and class of states to make $I_{3} \le 0$ and match holography.
We have not done so in this work, and hope to do so in later work.
However, this argument shows that once we make the relevant choices to make $I_{3} \le 0$, the tripartite information automatically has the right limit.

\section{Uniqueness of the factorization map} \label{app:unique}
Our factorization map might seem like an obvious consequence of the gauge-theoretic description, but we should be more careful.
This obviousness comes from the fact that we defined the physical theory using unphysical gauge degrees of freedom, and the factorization map consists of re-introducing some of these.
However, this is not a unique affair; for example, there are dualities in which the same physical system can arise from gauging different groups on the two sides of the duality.

Furthermore, in the bipartite case, there is in fact an irreducible ambiguity.
The center is a commutative algebra, and so there is, a priori, no constraint on the edge modes \cite{Penington:2023dql}.
\cite{Penington:2023dql} also found that, in JT gravity, the inclusion of matter resolves this ambiguity.
\cite{Akers:2018fow} showed that, once we fix the edge mode von Neumann entropy, the entanglement spectrum of each $\ket{\chi_{\upalpha}}$ in \eqref{eqn:fact-map-form} is completely fixed by the form of holographic R\'enyi entropies.
In this section, we show a similar uniqueness theorem for the edge modes we have introduced in this work.
Our basic tool is the consistency of the multi-party factorization.

The fusion multiplicities of the irreps are defined by $\upmu \otimes \upnu = \oplus_{\uprho} \rho^{\oplus N_{\upmu\upnu}^{\uprho}}$.
The identity irrep $\mathbf{1}$ satisfies $N_{\upmu \mathbf{1}}^{\upnu} = \delta_{\upmu}^{\upnu}$.
There is an important relation between the fusion multiplicities $N_{\upmu\upnu}^{\uprho}$ and the quantum dimensions, see \cite{preskill1998lecture} for a physicists' explanation,
\begin{prop}
	\cite{etingof2005fusion,nlab:frobenius-perron_dimension}
	Regard the fusion multiplicities with one index $\upmu$ fixed as a matrix,
	\begin{equation}
		(\mathds{N}_{\upmu})_{\upnu}^{\uprho} \coloneqq N_{\upmu\upnu}^{\uprho}.
		\label{eqn:N-mat}
	\end{equation}
	Then, the quantum dimension is the largest eigenvalue of this matrix.
	In other words, there is a set $(\mathsf{a}_{\uprho})$, such that
	\begin{equation}
		(\mathds{N}_{\upmu} \mathsf{a})_{\upnu} = d_{\upmu} \mathsf{a}_{\upnu}, \qq{and} \lim_{n \to \infty} (\mathds{N}_{\upmu}^{n})_{\upnu}^{\uprho} = d_{\upmu}^{n} \mathsf{a}_{\upnu} \mathsf{a}^{\uprho}.
		\label{eqn:large-eval}
	\end{equation}
	\label{prop:fp}
\end{prop}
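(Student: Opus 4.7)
The plan is to apply the Perron-Frobenius theorem to the non-negative integer matrix $\mathds{N}_\upmu$, using multiplicativity of the quantum dimension to identify the Perron eigenvector explicitly.

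The first step is to show that the vector of quantum dimensions $(d_\uprho)_\uprho$ is an eigenvector of $\mathds{N}_\upmu$ with eigenvalue $d_\upmu$. This follows from the identity
\begin{equation}
d_\upmu d_\upnu = \sum_\uprho N_{\upmu\upnu}^\uprho \, d_\uprho,
\end{equation}
which holds because the quantum dimension is a character of the fusion ring (for group representations this is just $\dim(\upmu\otimes\upnu) = \dim\upmu \cdot \dim\upnu$, decomposed by the fusion rule $\upmu\otimes\upnu = \oplus_\uprho \uprho^{\oplus N_{\upmu\upnu}^\uprho}$). Read as a matrix equation, this is precisely $(\mathds{N}_\upmu d)_\upnu = d_\upmu \, d_\upnu$, so up to normalization $\mathsf{a}_\uprho \propto d_\uprho$ is a candidate Perron eigenvector.

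The second step is to invoke the Perron-Frobenius theorem: for a non-negative square matrix, the spectral radius is an eigenvalue admitting a non-negative eigenvector, and conversely any strictly positive eigenvector must correspond to the spectral radius. Since the $d_\uprho$ are all strictly positive, $d_\upmu$ is forced to equal the spectral radius of $\mathds{N}_\upmu$, hence to be the largest eigenvalue in absolute value. Normalizing $\sum_\uprho \mathsf{a}_\uprho^{2}=1$ fixes $\mathsf{a}$ uniquely up to a sign, which is then pinned down by positivity.

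For the asymptotic formula, the further input needed is primitivity of $\mathds{N}_\upmu$, i.e.\ that some power has all-positive entries. This holds provided $\upmu$ generates the fusion ring, using also $\mathds{N}_\upmu^{T}=\mathds{N}_{\upmu^{*}}$, which implies $\mathds{N}_\upmu$ is normal and therefore orthogonally diagonalizable. Under primitivity, the Perron eigenvalue is simple and strictly dominates all other eigenvalues in absolute value, so $d_\upmu^{-n}\mathds{N}_\upmu^{n}$ converges to the orthogonal projector onto the Perron eigenline, which in components is exactly $\mathsf{a}_\upnu \mathsf{a}^\uprho$. The main obstacle is handling the degenerate cases: for $\upmu=\mathbf{1}$ the matrix is the identity and the limit is trivial, while for non-generating $\upmu$ the Perron eigenspace can be higher-dimensional and the limit then fails to be rank one. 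A clean proof therefore requires stating an implicit generation/nondegeneracy hypothesis on $\upmu$; once that is in place, the argument is standard Perron-Frobenius combined with multiplicativity of quantum dimensions.
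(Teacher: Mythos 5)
The paper does not actually prove this proposition---it is cited from the fusion-category literature---and the standard proof in those references is precisely the Perron--Frobenius argument you give, so your proof is correct and matches the intended one. Your two key inputs are right: the identity $d_\upmu d_\upnu = \sum_\uprho N_{\upmu\upnu}^{\uprho}\, d_\uprho$ exhibits the strictly positive dimension vector as an eigenvector, and the Collatz--Wielandt characterization of the spectral radius of a non-negative matrix then forces $d_\upmu$ to be the largest eigenvalue; commutativity of the fusion ring together with $\mathds{N}_\upmu^{T}=\mathds{N}_{\upmu^{*}}$ gives normality, so left and right Perron eigenvectors coincide and the rank-one limit has the symmetric form $\mathsf{a}_\upnu\mathsf{a}^{\uprho}$. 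Your caveat about primitivity is well taken, and it is really an imprecision in the proposition as stated rather than a gap in your argument: for $\upmu=\mathbf{1}$, or more generally for a non-tensor-generating $\upmu$ (e.g.\ a nontrivial one-dimensional irrep, for which $\mathds{N}_\upmu$ is a permutation matrix), $d_\upmu^{-n}\mathds{N}_\upmu^{n}$ fails to converge to a rank-one projector. Since the proposition is invoked in the proof of Theorem \ref{thm:uniqueness} for an arbitrary irrep $\upmu$, that nondegeneracy hypothesis ought to be made explicit there as well.
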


Our uniqueness theorem is the following:
\begin{thm}
	\label{thm:uniqueness}
  Consider the DG model (without matter) on a reduced lattice for $D^{2}$ with $m$ links, and call its Hilbert space $\mathcal{H}_{m}$.
	Assume the existence of a factorization map $J: \mathcal{H}_{m} \hookrightarrow \mathcal{H}_{\ell}^{\otimes m}$, for some $\mathcal{H}_{\ell}$, satisfying the following two properties:
	\begin{enumerate}
	  \item For any subset $b$, define $\mathcal{H}_{b}$ as the tensor factor of $\mathcal{H}_{\ell}^{\otimes m}$ on which $\mathcal{A}_{b}$ lives.
			The first condition is that for any state $\ket{\psi} \in \mathcal{H}_{m}$,
			\begin{equation}
			  J \ket{\psi} = \sum_{\upmu} F_{\eth b} (\upmu) \ket{\psi} \otimes U U' \ket{\chi_{\upmu}}, \qquad U \in \mathcal{L} (\mathcal{H}_{b}),\ U' \in \mathcal{L}(\mathcal{H}_{\overline{b}}).
			  \label{eqn:J-reqt-1}
			\end{equation}
			where $\upmu$ is valued in the irreps of $G$ and $\ket{\chi_{\upmu}}$ is a state in an auxiliary bipartite Hilbert space $ \mathcal{H}_{\upmu,l} \otimes \mathcal{H}_{\upmu,r} $, such that both the state and auxiliary Hilbert space are completely fixed by $\upmu$.
			$U,U'$ are isometries that embed this abstract bipartite state into the $\mathcal{H}_{\ell}^{\otimes m}$.
	  \item $\ket{\chi_{\mathbf{1}}}$ is a factorized state.
	\end{enumerate}

	Then,
	\begin{equation}
	  \ket{\chi_{\upmu}} = \frac{1}{\sqrt{d_{\upmu}}} \sum_{\mathsf{i} = 1}^{d_{\upmu}} \ket{\mathsf{i}} \ket{\mathsf{i}}.
	  \label{eqn:uniq-res}
	\end{equation}
\end{thm}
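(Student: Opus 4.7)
The plan is to test the factorization ansatz \eqref{eqn:J-reqt-1} against the family of pair-fusion states $\ket{\Psi_n}$ defined by the LHS of \eqref{eqn:n-trick-main-text}: a reduced lattice with $2n$ boundary links all in irrep $\upmu$, arranged so that adjacent pairs fuse to the identity. Let $b$ consist of one link from each pair. I will compute $\tr\rho_b^\alpha$ with $\rho_b = \tr_{\bar b} J\ket{\Psi_n}\bra{\Psi_n}J^\dagger$ in two different ways, one using assumption 1 and one using assumption 2, and impose consistency.

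First, I would apply assumption 1 directly to the cut $b$ and use the fusion-basis rewriting on the RHS of \eqref{eqn:n-trick-main-text} to identify the central-irrep probabilities $p_\upnu = d_\upnu[\mathds{N}_\upmu^n]_\upmu^\upnu/d_\upmu^n$ and to observe that in each $\upnu$-sector the bulk state is maximally entangled across $b|\bar b$ with Schmidt rank $d_\upnu[\mathds{N}_\upmu^n]_\upmu^\upnu$ (from the intermediate rep and multiplicity indices jointly). Tracing out $\bar b$ and the right factor of each $\ket{\chi_\upnu}$ then gives
\[
  \tr\rho_b^\alpha \;=\; d_\upmu^{-n\alpha}\sum_\upnu d_\upnu\,[\mathds{N}_\upmu^n]_\upmu^\upnu\,\tr\sigma_\upnu^\alpha, \qquad \sigma_\upnu \coloneqq \tr_R\ket{\chi_\upnu}\bra{\chi_\upnu}.
\]

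Second, I would recompute $\tr\rho_b^\alpha$ using assumption 2. Every cut of the form $\{1,\dots,2k\}|\{2k+1,\dots,2n\}$ has central irrep $\mathbf{1}$, so assumption 1 collapses to $J\ket{\Psi_n} = \ket{\Psi_n}_{\mathrm{bulk}}\otimes UU'\ket{\chi_\mathbf{1}}$; assumption 2 then factorizes $\ket{\chi_\mathbf{1}}$ across this cut, and combined with the pair-singlet factorization of the bulk state this cleaves $J\ket{\Psi_n}$ across the pair boundary. Iterating for $k = 1,\dots,n-1$ should produce an $n$-fold product decomposition of $J\ket{\Psi_n}$ over pairs, and applying assumption 1 within each pair for the single-link cut should express $\tr\rho_b^\alpha$ entirely in terms of $\tr\sigma_\upmu^\alpha$. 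Equating this with the first expression and taking $n\to\infty$ using Proposition~\ref{prop:fp} (which gives $[\mathds{N}_\upmu^n]_\upmu^\upnu \sim d_\upmu^n\,\mathsf{a}_\upmu\mathsf{a}^\upnu$ with $\mathsf{a}_\upnu \propto d_\upnu$) yields an identity that must hold for all $n$ and $\alpha$; matching the $n$- and $\alpha$-dependence should force $\tr\sigma_\upmu^\alpha = d_\upmu^{\,1-\alpha}$, equivalently $\sigma_\upmu = \mathds{1}_{d_\upmu}/d_\upmu$, which together with the local unitaries permitted by the ansatz gives \eqref{eqn:uniq-res}.

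The main obstacle is making the second step rigorous. Assumption 2 guarantees only that $\ket{\chi_\mathbf{1}}$ factorizes for a single cleaving cut; iterating requires that the edge-mode Hilbert spaces from cleavings at different $k$ embed compatibly into $\mathcal{H}_\ell^{\otimes 2n}$, and the within-pair factorization produced by cleaving must match the one prescribed by assumption 1 for a two-link cut. A naive product-over-pairs expression for $\tr\rho_b^\alpha$ is too strong — one can check in small examples (e.g.\ $G = SU(2)$, $\upmu = 1/2$, $n = 2$) that the Schmidt rank it predicts exceeds that of the first formula — so the crux is correctly accounting for the constraints the global gauge structure places on the edge modes when combining cleavings.
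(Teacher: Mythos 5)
Your strategy is the paper's own: the same pair-fused states \eqref{eqn:n-trick-main-text}, the same region $b$, the same two evaluations of $\tr\rho_b^\alpha$, and the same large-$n$ comparison via Proposition \ref{prop:fp}. However, your first formula contains an error, and that error --- not the cleaving step --- is the source of the ``obstacle'' you report. You assign the bulk state in the total-irrep-$\upnu$ sector a Schmidt rank $d_\upnu[\mathds{N}_\upmu^n]_\upmu^\upnu$ across $b|\overline{b}$, counting the intermediate irrep index as bulk entanglement. It is not: in $\mathcal{H}_m$ the $d_\upnu$-dimensional irrep index is contracted into the unique invariant at the central vertex by Gauss's law, so it is not a degree of freedom of the physical Hilbert space; the algebraic decomposition $\mathcal{H}_m = \oplus_\upnu \mathcal{H}_{b,\upnu}\otimes\mathcal{H}_{\overline{b},\upnu}$ carries only the multiplicity label (the outer indices being pure in this state). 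The factor $d_\upnu$ re-enters only through the unknown edge-mode state $\ket{\chi_\upnu}$. The correct sector formula is $\tr\rho_b^\alpha = d_\upmu^{-n\alpha}\sum_\upnu d_\upnu^{\alpha}\,[\mathds{N}_\upmu^n]_\upmu^\upnu\,\tr\sigma_\upnu^\alpha$ (note $d_\upnu^\alpha$, not $d_\upnu$). In your own $SU(2)$, $\upmu=1/2$, $n=2$ check, plugging $\sigma_\upnu = \mathds{1}_{d_\upnu}/d_\upnu$ into the corrected formula gives $4^{1-\alpha}$, in exact agreement with the product-over-pairs answer $[\tr\sigma_{1/2}^\alpha]^2$, whereas your version gives $4^{-\alpha}(1+3^{2-\alpha})$. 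So the mismatch you detected indicts the first calculation, not the second, and ``accounting for global constraints when combining cleavings'' is the wrong place to look. (As it happens, the $d_\upnu$ versus $d_\upnu^\alpha$ discrepancy sits inside the finite, $n$-independent sum over $\upnu$ and washes out as $n\to\infty$, so your final condition $\tr\sigma_\upmu^\alpha = d_\upmu^{1-\alpha}$ would still emerge --- but the intermediate step should be fixed.)

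Your second calculation, which you flag as the hard part, is actually the easy part and requires no iteration or compatibility check. For each pair $b_r = \ell_r\cup\ell_{\tilde r}$ one has $F_{\eth b_r}(\upmu')\ket{\Psi_n} = \delta_{\upmu',\mathbf{1}}\ket{\Psi_n}$, so assumption 1 puts only the $\mathbf{1}$ edge mode across $\eth b_r$, assumption 2 makes that edge mode unentangled, and the bulk state carries no algebraic entanglement between $b_r$ and its complement; hence $\rho_{b_r}$ is pure. A global state all of whose marginals on the blocks of a partition are pure is a product over that partition, so $J\ket{\Psi_n} = \bigotimes_r \ket{\phi_r}_{b_r}$ automatically, and $\tr\rho_b^\alpha = \prod_r \tr\bigl(\tr_{\ell_{\tilde r}}\ket{\phi_r}\bra{\phi_r}\bigr)^\alpha = [\tr\sigma_\upmu^\alpha]^n$ by assumption 1 applied to the single-link cut inside each pair. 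This is precisely how the paper runs that half of the argument; with it and the corrected sector formula, your plan closes.
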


\begin{proof}
	Since our final aim is to prove a statement about the edge mode state $\ket{\chi_{\upmu}}$, we can choose a specific lattice and state $\ket{\psi}$ that are most convenient for us.

  Take a reduced lattice with $2n$ links.
	We will work in the limit $n \to \infty$.
	Call the links $\ell_{1\dots n}, \ell_{\tilde{1}\dots \tilde{n}}$, and let them all be oriented out of the central vertex.
	Denote by $\gamma_{r}$ the path $\bar{\ell}_{r} \ell_{\tilde{r}}$.
	Define the subregions $b \coloneqq \ell_{1} \cup \dots \ell_{n}$ and $b_{r} \coloneqq \ell_{r} \cup \ell_{\tilde{r}}$.

	The state we work with is the following:
	\begin{equation}
		\ket{\psi_{0}} \coloneqq \prod_{r = 1}^{n} \left[ \sum_{g \in G} D^{\upmu}_{\mathsf{i} \mathsf{j}} (g) F_{\gamma_{r}} (e,g) \right] \ket{\mathbf{1}}^{\otimes 2n}.
	  \label{eqn:uniq-state}
	\end{equation}
	Here $\mathsf{i},\mathsf{j}$ are some fixed indices in $\mathcal{H}_{\upmu}$ that won't play a role below.
    It is useful to abstract away the lattice and keep track of only the fusion structure,
    \begin{equation}
        \ket{\psi_{0}} =
		\sbox0{\ \includegraphics[height=5em,valign=c]{figs/n_fus_lhs.pdf} } \mathopen{\resizebox{1.2\width}{\ht0}{$\Bigg|$ }} \usebox{0} \mathclose{\resizebox{1.2\width}{\ht0}{$\Bigg\rangle$ }}
    \end{equation}
	The state can also be written as
	\begin{equation}
		\sbox0{\ \includegraphics[height=5em,valign=c]{figs/n_fus_lhs.pdf} } \mathopen{\resizebox{1.2\width}{\ht0}{$\Bigg|$ }} \usebox{0} \mathclose{\resizebox{1.2\width}{\ht0}{$\Bigg\rangle$ }}
		=\sum_{\upnu} \sqrt{\frac{d_{\upnu}}{d_{\upmu}^{n}}} \sum_{a=1}^{\left[ \mathds{N}_{\upmu}^{n} \right]_{\upmu}^{\upnu}} 
		\sbox1{\ \includegraphics[height=5em,valign=c]{figs/n_fus_rhs.pdf} } \mathopen{\resizebox{1.2\width}{\ht0}{$\Bigg|$ }} \usebox{1} \mathclose{\resizebox{1.2\width}{\ht0}{$\Bigg\rangle$ }}.
		\label{eqn:4-ch-state-rewrite}
	\end{equation}
	$a$ denotes the various copies of $\upnu$ that appear in the fusion, and each copy is orthogonal.

	We calculate the reduced density matrix $\rho$, and more specifically $\tr \rho^{q}$, of the region $b$ in two ways, using the two representations of the state.
	We have implicitly used $J$ (acting on the original $2n$ links) to define $\rho$.
	Similarly, we define $\rho_{r}$ as the density matrix for $b_{r}$.

	The first calculation follows simply from the fact that the irrep flowing out of the region $b_{r}$ is the identity irrep.
	This is because $F_{\gamma_{r}} \in \mathcal{A}_{b_{r}}$, and the operator $F_{\eth b_{r}} (\upmu')$ that measures the irrep is in the center of $\mathcal{A}_{b_{r}}$, so
	\begin{equation}
	  F_{\eth b_{r}} (\upmu') \ket{\psi_{0}} = \delta_{\upmu',\mathbf{1}} \ket{\psi_{0}}.
	  \label{eqn:br-irrep}
	\end{equation}
	As a result, $\rho_{r}$ is a pure state, meaning that
	\begin{equation}
	  \rho = \bigotimes_{r=1}^{n} \rho_{\ell_{r}} = \bigotimes_{r=1}^{n} U \chi_{\upmu} U^{\dagger}.
	  \label{eqn:rho-calc-1}
	\end{equation}
	Finally, this means that
	\begin{equation}
		\tr \rho^{q} = \left[ \tr \chi_{\upmu}^{q} \right]^{n} \quad \implies \quad \left[ \tr \rho^{q} \right]^{1/n} = \tr \chi_{\upmu}^{q}.
	  \label{eqn:renyi-calc-1}
	\end{equation}
	where we have defined $\chi_{\upmu}$ as the reduced density matrix of $\ket{\chi_{\upmu}}$ on one of the factors.

	The second calculation uses the second representation of the state in \eqref{eqn:4-ch-state-rewrite}.
	Using the fact that each total irrep $\upnu$ and each copy of $\upnu$ is orthogonal, the reduced density matrix is
	\begin{align}
		\rho &= \sum_{\upnu} \frac{d_{\upnu} \left[ \mathds{N}_{\upmu}^{n} \right]_{\upmu}^{\upnu}}{d_{\upmu}^{n}} \left\{ \sum_{a = 1}^{\left[ \mathds{N}_{\upmu}^{n} \right]_{\upmu}^{\upnu}} \frac{1}{\left[ \mathds{N}_{\upmu}^{n} \right]_{\upmu}^{\upnu}}
			\sbox0{\ \includegraphics[height=5em,valign=c]{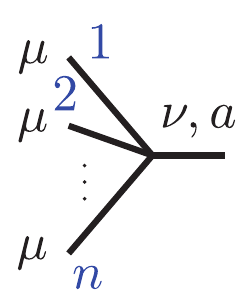} } \mathopen{\resizebox{1.2\width}{\ht0}{$\Bigg|$ }} \usebox{0} \mathclose{\resizebox{1.2\width}{\ht0}{$\Bigg\rangle$ }}
			\sbox1{\ \includegraphics[height=5em,valign=c]{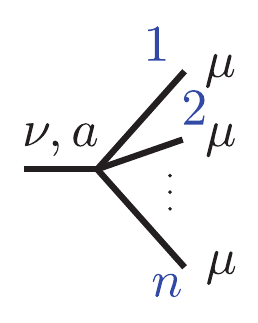} } \mathopen{\resizebox{1.2\width}{\ht1}{$\Bigg\langle$ }} \usebox{1} \mathclose{\resizebox{1.2\width}{\ht0}{$\Bigg|$ }}
		\right\}
		\bigotimes \chi_{\upnu}
		\label{eqn:rho-12-2} \\
				 &\coloneqq \bigoplus_{\upnu} p_{\upnu} \rho_{\upnu} \otimes U \chi_{\upnu} U^{\dagger}, \nonumber
		\end{align}
		where $\rho_{\upnu}$ is the object in square brackets and $p_{\upnu}$ is the scalar prefactor.
		In the limit $n \to \infty$, we can use Proposition \ref{prop:fp} to simplify
		\begin{align}
			p_{\upnu} = \frac{d_{\upnu} \left[ \mathds{N}_{\upmu}^{n} \right]_{\upmu}^{\upnu}}{d_{\upmu}^{n}} &\xrightarrow{n \to \infty}  d_{\upnu} \mathsf{a}^{\upnu} \mathsf{a}_{\upmu}, \nonumber\\
			\rho_{\upnu} = \frac{\mathds{1}}{\left[ \mathds{N}_{\upmu}^{n} \right]_{\upmu}^{\upnu}} &\xrightarrow{n \to \infty}  \frac{\mathds{1}}{d_{\upmu}^{n} \mathsf{a}^{\upnu} \mathsf{a}_{\upmu}}.
			\label{eqn:large-n-simps}
		\end{align}
		Then,
		\begin{align}
			\tr \rho^{q} &= \sum_{\upnu} p_{\upnu}^{q} \tr \rho_{\upnu}^{q} \tr \chi_{\upnu}^{q} = \sum_{\upnu} \frac{d_{\upnu}^{q} (\mathsf{a}^{\upnu} \mathsf{a}_{\upmu}) \tr \chi_{\upnu}^{q}}{d_{\upmu}^{n(q-1)}} \nonumber\\
			\implies \quad \left[ \tr \rho^{q} \right]^{1/n} &= \frac{1}{d_{\upmu}^{q-1}}.
		  \label{eqn:renyi-calc-2}
		\end{align}
		In the first line, we have used $\tr \left( \mathds{1}_{d}/d \right)^{q} = d^{1-q}$, and in the second line we have used the fact that the numerator in the rightmost expression on the first line does not scale with $n$ in any way.

		Comparing \eqref{eqn:renyi-calc-1} and \eqref{eqn:renyi-calc-2}, we find
		\begin{equation}
		  \tr \chi_{\upmu}^{q} = \frac{1}{d_{\upmu}^{q-1}} \qquad \implies \qquad \chi_{\upmu} = \frac{\mathds{1}_{d_{\upmu}}}{d_{\upmu}}.
		  \label{eqn:uniq-final}
		\end{equation}
		This means that $\ket{\chi_{\upmu}}$ can be written as \eqref{eqn:uniq-res} in some basis, proving our claim.
\end{proof}

We expect that a version of this theorem holds in a much more general class of topological field theories than doubly gauged models.
Topological field theories (including the DG model) are believed to be specified by unitary fusion categories, see \cite{panangaden2011categorical,Hahn:2020cgf} for an introduction.
We have not assumed any particular braiding relations between the excitations, and so we can apply it to any unitary fusion category.
In general, quantum dimensions need not be integers, and the trace on the edge mode Hilbert space might be a quantum trace.
Our theorem allows this possibility (since we never used cyclicity of the trace).
The result that generalizes is the fact that $\tr \chi_{\upmu}^{q} = d_{\upmu}^{1-q}$.
There is some evidence that quantum traces are relevant in gravity, with edge modes satisfying this statement \cite{Donnelly:2020teo,Mertens:2022ujr,Wong:2022eiu}.

Finally, this result should be compared to the result of \cite{Penington:2023dql}, which states that adding matter fixes the entropy of the edge modes.
In our model, our result states that the entropy of the edge modes is fixed by the structure of the multipartite algebra.
We can also consider our result to be a theorem fixing the edge mode entropy by incorporating matter effects, if we imagine that all the excitations live on bulk matter degrees of freedom.

\bibliographystyle{jhep}
\bibliography{mybib2021}

\end{document}